\newtheorem{proposition}{Proposition}
\begin{document}

\title{Optimal prediction of positive-valued spatial processes: asymmetric power-divergence loss}
\date{}
\author[1]{Alan R. Pearse\thanks{Corresponding author: arp320@uowmail.edu.au}}
\author[1]{Noel Cressie}
\author[1]{David Gunawan}
\affil[1]{NIASRA, School of Mathematics and Applied Statistics, University of Wollongong, Northfields Avenue, Wollongong, New South Wales, 2522, Australia}

\maketitle
          
\begin{abstract}

This article studies the use of asymmetric loss functions for the optimal prediction of positive-valued spatial processes. We focus on the family of power-divergence loss functions due to its many convenient properties, such as its continuity, convexity, relationship to well known divergence measures, and the ability to control the asymmetry and behaviour of the loss function via a power parameter. The properties of power-divergence loss functions, optimal power-divergence (OPD) spatial predictors, and related measures of uncertainty quantification are examined. In addition, we examine the notion of asymmetry in loss functions defined for positive-valued spatial processes and define an asymmetry measure that is applied to the power-divergence loss function and other common loss functions. The paper concludes with a spatial statistical analysis of zinc measurements in the soil of a floodplain of the Meuse River, Netherlands, using OPD spatial prediction.

\end{abstract} 
\section{Introduction}\label{sec:introduction}

This article is based on the first author's PhD research under the supervision of the second and third authors. It follows on from a talk given by the second author at Spatial Statistics 2023: Climate and the Environment, in July 2023. Here, we concentrate on optimal prediction of positive-valued spatial processes using decision theory and a family of loss functions called the \textit{power-divergence loss functions} \cite[]{Cressie2022}. 

Let $Y(\cdot) \equiv \{Y(\mathbf{s}): \mathbf{s} \in D\}$ be a positive-valued spatial process defined at $d$-dimensional locations $\mathbf{s}$ in spatial domain $D$ (e.g., $D \subset \mathbb{R}^d$, the $d$-dimensional Euclidean space). Positive-valued spatial processes are examined here for two reasons. Firstly, many of the phenomena we study in nature or in society relate to positive (or at least non-negative)-valued processes. There are some exceptions, such as temperature and atmospheric trace-gas fluxes, but many of the things that scientists are interested in measuring are magnitudes, counts, or proportions (e.g., tree diameter at breast height, counts of a taxonomic group, or percent contamination), which are by definition positive or non-negative. Secondly, we seek a coherent non-approximative approach to statistical modelling of positive-valued spatial data and processes. Transformation-based approaches, such as trans-Gaussian kriging \cite[e.g.,][p. 127]{Cressie1993} and Gaussian-process modelling following a Box-Cox transformation \cite[]{Box1964}, result in spatial predictors whose statistical properties rely on likelihood and delta-method approximations. For example, the Box-Cox transformed process $X(\cdot) \equiv (Y(\cdot)^\nu-1)/\nu$, for $\nu \neq 0$, can only ever be a left-truncated Gaussian process, where the truncation occurs at $-\nu^{-1}$ \cite[]{FreemanModarres2006}. However, with some exceptions \cite[e.g.,][]{DeOliveira1997, ZM2016}, the transformed process is typically modelled as being Gaussian, whose support ranges from $-\infty$ to $\infty$. 

This article attempts to develop a coherent framework for prediction of positive-valued spatial processes within the framework of a spatial hierarchical statistical model. It takes a decision-theoretic approach, and it presents a loss function defined for positive-valued spatial processes that allows inference to be performed on the natural scale on which the processes are observed.   

A key goal of spatial statistics is to make inference on $Y(\cdot)$ based on a set of noisy, incomplete spatial data, $\mathbf{Z} \equiv (Z(\mathbf{s}_1), ..., Z(\mathbf{s}_n))'$, where $\mathbf{s}_1, ..., \mathbf{s}_n$ are a set of known observation locations. The term `inference' often refers to the estimation of parameters (including uncertainty quantification) on which the process may depend. Importantly in spatial statistics, it also refers to predicting a functional of the process $g(Y(\cdot))$ (including uncertainty quantification) based on data $\mathbf{Z}$. The functional $g(Y(\cdot))$ can refer to the value of $Y(\cdot)$ at a single prediction location, where $g(Y(\cdot)) = Y(\mathbf{s}_0)$ for any $\mathbf{s}_0 \in D$; or values at $M$ prediction locations, where $g(Y(\cdot)) = (Y(\mathbf{s}_{01}), ..., Y(\mathbf{s}_{0M}))'$ for $\mathbf{s}_{01}, ..., \mathbf{s}_{0M} \in D$; or average values over areas (or volumes) $B$ within $D$, where $g(Y(\cdot)) = Y(B) = |B|^{-1}\int_{B} Y(\mathbf{u}) \mathrm{d}\mathbf{u}$ for $B \subset D$ and $|B|\equiv \int_B \mathrm{d}\mathbf{u} >0$. We refer to $B$ as a `block.' 

In contemporary spatial statistics, it is recognised that the observations are an incomplete, noisy manifestation of the underlying scientific process. Then the prediction problem is solved using a spatial hierarchical statistical model \cite[e.g.,][]{CressieWikle2011}, which distinguishes between the observed spatial data $\mathbf{Z}$ and the underlying process $Y(\cdot)$. First, the data model $[\mathbf{Z} \mid Y(\cdot)]$ describes the distribution of the spatial data given the values of the underlying process, which characterises the distribution of the measurement error in the data. It is common to assume conditional independence of the data, $[\mathbf{Z}\mid Y(\cdot)] = [\mathbf{Z} \mid \mathbf{Y}] = \prod_{i=1}^n [Z(\mathbf{s}_i) \mid Y(\mathbf{s}_i)]$, where $\mathbf{Y} \equiv (Y(\mathbf{s}_1), ..., Y(\mathbf{s}_n))'$. Second, the distribution of the spatial process is described by the process model, $[Y(\cdot)]$. Finally, any parameters of the data model and process model are collected into the vector $\bm{\theta}$. The parameters may be estimated and plugged into the model or, if fully Bayesian inference is the goal, then a prior $[\bm{\theta}]$ is defined for the parameters (sometimes called the parameter model). The former approach has been called Empirical Hierarchical Modelling (EHM) and the latter approach Bayesian Hierarchical Modelling (BHM). In this article, our attention is focussed on inference on $Y(\cdot)$, conditional on $\mathbf{Z}$ and $\bm{\theta}$ (known or plugged in), and hence our hierarchical modelling is of the EHM type.

Consider spatial prediction of $Y(\mathbf{s}_0)$ and then allow $\mathbf{s}_0$ to vary over $D$, resulting in spatial prediction of the whole process $Y(\cdot)$. Prediction of $Y(\mathbf{s}_0)$ from spatial data $\mathbf{Z}$ follows from Bayes' rule: First, $[Y(\mathbf{s}_0), \mathbf{Y}]$ is obtained by marginalising the process model $[Y(\cdot)]$. Then recall the data model, $[\mathbf{Z}\mid Y(\cdot)] = \prod_{i=1}^n[Z(\mathbf{s}_i)\mid Y(\mathbf{s}_i)]$, and for simplicity assume $\mathbf{s}_0 \not\in \{\mathbf{s}_1, ..., \mathbf{s}_n\}$. Bayes' rule gives the posterior distribution,
$${[Y(\mathbf{s}_0), \mathbf{Y} \mid \mathbf{Z}] \propto \prod_{i=1}^n[Z(\mathbf{s}_i) \mid Y(\mathbf{s}_i)][Y(\mathbf{s}_0), \mathbf{Y}]}.$$ Integrating out $\mathbf{Y}$ gives the predictive distribution:
\begin{equation}
[Y(\mathbf{s}_0) \mid \mathbf{Z}] = \int [Y(\mathbf{s}_0), \mathbf{Y} \mid \mathbf{Z}]~\mathrm{d}\mathbf{Y}.\label{eqn:predictive_distribution}
\end{equation}
The predictive distribution \eqref{eqn:predictive_distribution} has all the information necessary to do spatial prediction of $Y(\mathbf{s}_0)$ (including uncertainty quantification), using the spatial data $\mathbf{Z}$.

What is the ``important'' information in the predictive distribution \eqref{eqn:predictive_distribution} for decision-makers? How will they distil this information into useful management outcomes? We seek summaries of the predictive distribution, one being the \textit{optimal predictor}, which we derive here from decision-theoretic considerations. 

Let $\delta(\mathbf{Z}; \mathbf{s}_0)$ be a predictor of $Y(\mathbf{s}_0)$. Then, define the \textit{loss function} $L(\delta(\mathbf{Z}; \mathbf{s}_0), Y(\mathbf{s}_0))$ to be a function that gives the penalty or loss incurred when $Y(\mathbf{s}_0)$ is predicted by $\delta(\mathbf{Z}; \mathbf{s}_0)$. The loss functions used for spatial prediction are examples of general decision-theoretic loss functions, $L: \mathcal{A} \times \mathcal{V} \mapsto [0, \infty)$, where $\mathcal{A}$ is the `action space' (space of possible actions) and $\mathcal{V}$ is the `value space' (space of possible values of the quantity of interest) \cite[]{Berger1985}. Here, the `action' is the predictor $\delta(\mathbf{Z}; \mathbf{s}_0)$, so the action space $\mathcal{A}$ is contained in the value space $\mathcal{V} = \{Y(\mathbf{s}_0): [Y(\mathbf{s}_0)] > 0\}$. These define the `prediction-observation domain' of \cite{Gneiting2011}. A loss function, $L(\delta(\mathbf{Z}; \mathbf{s}_0), Y(\mathbf{s}_0))$, must satisfy $L(\cdot, \cdot) \geq 0$, $L(Y(\mathbf{s}_0), Y(\mathbf{s}_0)) = 0$, and $E(L(\delta(\mathbf{Z}; \mathbf{s}_0), Y(\mathbf{s}_0)))$, taken with respect to the joint distribution $[Y(\mathbf{s}_0), \mathbf{Z}]$, must exist \cite[p. 3]{Berger1985}. 

Some widely known loss functions are given in Table \ref{table:common_loss_functions}. The optimal predictor of $Y(\mathbf{s}_0)$ under a particular loss function is derived by minimising with respect to $\delta$ the \textit{expected loss under the joint distribution} (ELJ), $[Y(\mathbf{s}_0), \mathbf{Z}]$, which is
\begin{equation}
    ELJ(\delta; \mathbf{s}_0) \equiv E(L(\delta(\mathbf{Z}; \mathbf{s}_0), Y(\mathbf{s}_0))).\label{eqn:defn_ELJ}
\end{equation} 
Hence, the optimal predictor, $\delta^*(\mathbf{Z}; \mathbf{s}_0)$, is
\begin{equation}
\delta^*(\mathbf{Z}; \mathbf{s}_0) \equiv \mathop{\text{arg~inf}}_{\delta} ELJ(\delta; \mathbf{s}_0) = \mathop{\text{arg~inf}}_{\delta} E\!\left\{L(\delta(\mathbf{Z}; \mathbf{s}_0), Y(\mathbf{s}_0))\right\};\label{eqn:ELJ}
\end{equation}
this is sometimes called the `Bayes risk.' By the law of iterated expectations, $\delta^*$ can also be obtained by minimising the \textit{expected loss under the predictive distribution} (ELP), $[Y(\mathbf{s}_0)\mid\mathbf{Z}]$, which is 
\begin{equation}
    ELP(\delta; \mathbf{Z}; \mathbf{s}_0) \equiv E(L(\delta(\mathbf{Z}; \mathbf{s}_0), Y(\mathbf{s}_0))\mid \mathbf{Z});\label{eqn:defn_ELP}
\end{equation}
this is sometimes called the `risk.' Hence, the left-hand side of \eqref{eqn:ELJ} can also be derived as,
\begin{equation}
\delta^*(\mathbf{Z}; \mathbf{s}_0) \equiv \mathop{\text{arg~inf}}_{\delta}ELP(\delta; \mathbf{Z};\mathbf{s}_0) = \mathop{\text{arg~inf}}_{\delta} E\!\left\{L(\delta(\mathbf{Z}; \mathbf{s}_0), Y(\mathbf{s}_0))\mid \mathbf{Z}\right\}.\label{eqn:ELP}
\end{equation}
Using the loss functions in Table \ref{table:common_loss_functions}, we obtain: for squared-error loss (SEL), $\delta_{SEL}^*(\mathbf{Z};\mathbf{s}_0) = E(Y(\mathbf{s}_0)\mid\mathbf{Z})$; for absolute-error loss (AEL), $\delta_{AEL}^*(\mathbf{Z}; \mathbf{s}_0)$ is the median of $[Y(\mathbf{s}_0)\mid \mathbf{Z}]$; for absolute relative loss (ARL), $\delta_{ARL}^*(\mathbf{Z};\mathbf{s}_0)$ is the median of the weighted predictive distribution, $[Y(\mathbf{s}_0)\mid\mathbf{Z}]/Y(\mathbf{s}_0)$ \cite[see also the generalisation of this in][]{Gneiting2011}; and for quantile loss (QTL), $\delta_{QTL, q}^*(\mathbf{Z}; \mathbf{s}_0)$ is the $q$-th quantile of $[Y(\mathbf{s}_0)\mid\mathbf{Z}]$. 

\begin{table}[h!]
\centering
\begin{tabular}{|p{0.35\textwidth}p{0.6\textwidth}|} 
 \hline
 Name & Definition\\ 
 \hline
 Squared-error loss (SEL) & $L_{SEL}(\delta, Y) = (\delta - Y)^2$\\ 
 Absolute-error loss (AEL) & $L_{AEL}(\delta, Y) = |\delta - Y|$ \\
 Absolute relative loss (ARL) & $L_{ARL}(\delta, Y) = |(\delta - Y)/Y|$\\
 Quantile loss (QTL) & $L_{QTL, q}(\delta, Y) = (\delta - Y)\{I(\delta - Y > 0) - q\};~q \in (0, 1)$ \\  
 \hline
\end{tabular}
\caption{Common loss functions. The spatial notation and dependence of the predictor on the data $\mathbf{Z}$ have been dropped for convenience.}\label{table:common_loss_functions}
\end{table}

The connection between spatial prediction and decision theory has been known for some time \cite[pp. 107-109]{Cressie1993}. The possibility of generalising Bayesian inference using different loss functions has been raised before, specifically with reference to information entropy and the asymmetric linear-exponential (LINEX) loss function \cite[]{Varian1975, Zellner1986}. In the context of spatial prediction, the decision-theoretic approach was further developed by \cite{Cressie2022} and \cite{Cressie2023}. However, despite the availability of other loss functions, the special case of squared-error loss (SEL) remains by far the most common choice of loss function for spatial prediction, regardless of whether there are distributional asymmetries in $[Y(\cdot)]$. This may be because of its many attractive statistical properties: If we choose SEL, the ELJ in \eqref{eqn:defn_ELJ} can be interpreted as the mean-squared prediction error (MSPE) of the predictor, and the optimal predictor is the mean of the predictive distribution, $\delta_{SEL}^*(\mathbf{Z}; \mathbf{s}_0) = E(Y(\mathbf{s}_0) \mid \mathbf{Z})$, which is unbiased in the sense that $E(\delta_{SEL}^*(\mathbf{Z}; \mathbf{s}_0)) = E(Y(\mathbf{s}_0))$. 

Alternatively, when it is difficult to obtain the predictive distribution for a spatial model (e.g., for non-Gaussian spatial processes), practitioners of spatial statistics often use kriging systems \cite[]{Matheron1962} for spatial prediction, which are based on linear combinations of the data that minimise the MSPE. An ability to sample from (rather than derive) the predictive distribution changes the emphasis in spatial statistics away from analytical formulae for prediction, and it questions the automatic use of SEL to carry out spatial prediction.

Loss functions are meant to codify the consequences of over-predicting and under-predicting $Y(\mathbf{s}_0)$. Choosing SEL may impart some attractive statistical properties on the optimal predictor but, from a decision-theoretic viewpoint, SEL should only be considered when the losses incurred from over-prediction and from under-prediction are the same. When predicting positive natural processes, we argue that the choice of SEL for spatial prediction should not be the default. 

Consider a scenario where $Y(\mathbf{s}_0)$ represents the peak height of floodwater in a town next to a river during a period of heavy rain. A natural disaster coordinator may be waiting on predictions of peak height to make decisions regarding whom and when to evacuate, as well as the number of sandbags and other preparatory measures that are needed to minimise water damage to key buildings. In this situation, under-prediction of the true peak height is far more costly than over-prediction. The former may lead to loss of life and significant damage to property. By comparison, the latter may lead to unnecessary inconvenience to residents of evacuated areas and excess expenditure on preparatory measures. Providing the disaster coordinator with a peak-height prediction based on SEL may therefore lead to worse outcomes than if one provided a positively-biased optimal predictor (i.e., over-prediction) based on an asymmetric loss function. When using asymmetric loss functions, the interpretation and properties of the resulting optimal spatial predictors may be unfamiliar or \textit{prima facie} undesirable (e.g., biasedness), but there are clear-use cases for them. There is a rich body of statistical theory to be uncovered by studying asymmetric loss functions. 

The aim of this article is to explore a class of asymmetric loss functions, the power-divergence loss (PDL) functions $\{L_{PDL,\lambda}: -\infty < \lambda < \infty\}$ defined in the next section, as a potentially useful family of loss functions for predicting positive spatial processes. Introduced in \cite{Cressie2022}, PDLs, $L_{PDL,\lambda}: \mathcal{A} \times \mathcal{V} \mapsto [0,\infty)$ for power parameter $\lambda \in (-\infty,\infty)$, have several useful properties: they are continuous over $\mathcal{A}\times \mathcal{V}$; they are connected to many well-known divergence measures; and the degree of asymmetry in the loss function and bias in the optimal predictor can be tuned via the power parameter $\lambda$. In what follows, we generalise many aspects of spatial prediction found when using SEL. 

The rest of this article is organised as follows. Section \ref{sec:asymmetry} presents the power-divergence loss function and a novel method of characterising asymmetry in loss functions. Section \ref{sec:prediction} discusses optimal prediction and uncertainty quantification under PDL, and Section \ref{sec:log_normal} applies this to the example of a log-Gaussian spatial process. Section \ref{sec:prediction_intervals} discusses conditional and unconditional prediction intervals based on PDL as a special form of uncertainty quantification. Section \ref{sec:Meuse} gives a practical demonstration of these results by applying optimal spatial prediction with PDL (including uncertainty quantification), to a dataset of zinc measurements in the soil of a floodplain of the Meuse River in the Netherlands \cite[]{Rikken1993}. Finally, Section \ref{sec:discussion} presents a discussion of our  results and future directions for research on spatial prediction based on power-divergence and other asymmetric loss functions.  

\section{Asymmetric loss functions for positive-valued spatial processes}\label{sec:asymmetry}

Recall that $\{Y(\mathbf{s}): \mathbf{s}\in D\}$ is a process in the class of all positive-valued spatial processes. The class is closed under the operation of re-scaling; that is, if $\{\nu(\mathbf{s}): \mathbf{s} \in D\}$ is any spatially-varying positive-valued function on $D$, then $\{\nu(\mathbf{s})\cdot Y(\mathbf{s}): \mathbf{s}\in D\}$ is also a positive-valued spatial process. (This re-scaling of the original process corresponds to a shifting of the logarithm of the process.) In what follows, we want to predict the original positive process directly from observations whose measurement errors are not additive. 

\subsection{Power-divergence loss functions}

The focus of this article is the class of power-divergence loss (PDL) functions \cite[]{Cressie2022}, which depends on the power parameter $\lambda \in (-\infty,\infty)$ as follows: For $-\infty < \lambda < \infty$,
% \begin{small}
\begin{align}
&L_{PDL,\lambda}(\delta(\mathbf{Z};\mathbf{s}_0), Y(\mathbf{s}_0)) = \nonumber\\ 
&\begin{cases}
\frac{1}{\lambda(\lambda + 1)}\left\{Y(\mathbf{s}_0)\left(\left(\frac{Y(\mathbf{s}_0)}{\delta(\mathbf{Z}; \mathbf{s}_0)}\right)^{\lambda} - 1\right) + \lambda(\delta(\mathbf{Z}; \mathbf{s}_0) - Y(\mathbf{s}_0))\right\},& \lambda \neq 0,-1\\
Y(\mathbf{s}_0) \log\left(\frac{Y(\mathbf{s}_0)}{\delta(\mathbf{Z}; \mathbf{s}_0)}\right) - (Y(\mathbf{s}_0) - \delta(\mathbf{Z}; \mathbf{s}_0)),& \lambda = 0\\
Y(\mathbf{s}_0) - \delta(\mathbf{Z}; \mathbf{s}_0) - \delta(\mathbf{Z}; \mathbf{s}_0) \log\left(\frac{Y(\mathbf{s}_0)}{\delta(\mathbf{Z}; \mathbf{s_0})}\right),& \lambda = -1.
\end{cases} \label{eqn:power_divergence_loss}
\end{align}
% \end{small}
When $\lambda \neq 0, -1$, an equivalent expression is,
$$
L_{PDL,\lambda}(\delta(\mathbf{Z}; \mathbf{s}_0), Y(\mathbf{s}_0)) = \left(\frac{Y(\mathbf{s}_0)^{\lambda + 1} - (\lambda + 1)Y(\mathbf{s}_0)\delta(\mathbf{Z}; \mathbf{s}_0)^\lambda + \lambda\delta(\mathbf{Z}; \mathbf{s}_0)^{\lambda + 1}}{\lambda(\lambda + 1)\delta(\mathbf{Z};\mathbf{s}_0)^\lambda}\right).
$$
The PDL in \eqref{eqn:power_divergence_loss} is a continuous and convex function of $(\delta(\mathbf{Z}; \mathbf{s}_0), Y(\mathbf{s}_0))$ for $\delta(\mathbf{Z}; \mathbf{s}_0) >0$ and $Y(\mathbf{s}_0) > 0$. The key feature of power-divergence loss is that it is generally asymmetric, with the degree and direction of the asymmetry being controlled by $\lambda$ (Fig. \ref{fig:PDL}). Situations where over-prediction (under-prediction) is more serious than under-prediction (over-prediction) can be captured with PDL, as shown in the examples of Fig. \ref{fig:PDL}. 

\begin{figure}[!ht]
\centering
\includegraphics[width=\textwidth]{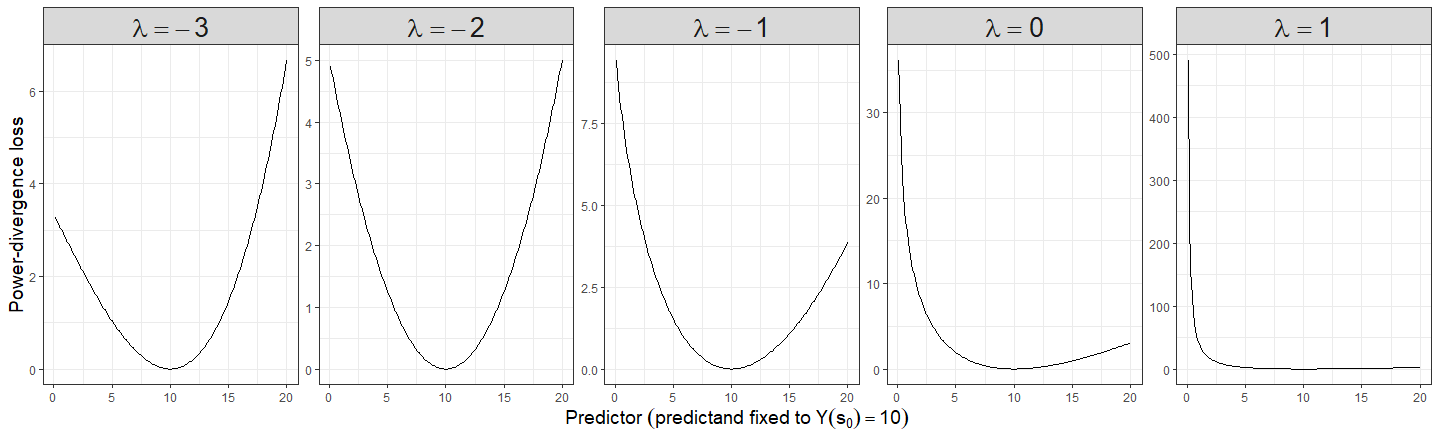}
\caption{Power-divergence loss (PDL) for $Y(\mathbf{s}_0) = 10$ and $\delta(\mathbf{Z}; \mathbf{s}_0) \in (0, 20]$, plotted for $\lambda \in \{-3, -2, -1, 0, 1\}$.}
\label{fig:PDL}
\end{figure}

PDL was motivated by \cite{Cressie2022} through goodness-of-fit statistics for discrete probability distributions, including the log-likelihood ratio and the Pearson chi-squared statistics as special cases \cite[]{Read1988}. Subsequently, power-divergence goodness-of-fit statistics were shown to be special cases in the class of phi-divergence statistics, and their properties were investigated by \cite{Cressie2002}. These statistics were based on divergences between two probability distributions \cite[including the Kullback-Leibler divergence as a special case;][]{Kullback1951}, but they can be generalised to divergences between two measures. For $x \in [0,1]$, define
$$
\phi_\lambda(x) \equiv \begin{cases}
\frac{1}{\lambda(\lambda + 1)}\left\{x^{\lambda + 1} - x\right\}, & \lambda \neq 0, -1\\
x\log(x), & \lambda = 0\\
-\log(x), & \lambda = -1.
\end{cases}
$$
Now extend the domain from $[0, 1]$ to $[0, \infty)$ as follows: For $x \in [0,\infty)$, define
$$
\phi_\lambda^+(x) \equiv \phi_\lambda(x) - \phi'_\lambda(1)(x - 1).
$$
Now, $\phi^+_\lambda(x)$ is continuous, convex on $[0, \infty)$, and satisfies $\phi_\lambda^+(1) = (\phi_\lambda^{+})'(1) = 0$, $(\phi_\lambda^{+})''(1) > 0$, $\phi_\lambda^+(0/0) = 0$, and $0\cdot \phi_\lambda^+(p/0) = p \cdot \lim_{u\rightarrow \infty}\phi_\lambda^+(u)/u$. Importantly, the loss function given by \eqref{eqn:power_divergence_loss} can be written as,
\begin{equation}
L_{PDL,\lambda}(\delta(\mathbf{Z}; \mathbf{s}_0), Y(\mathbf{s}_0)) = \delta(\mathbf{Z}; \mathbf{s}_0)\phi^+_\lambda\!\left(\frac{Y(\mathbf{s}_0)}{\delta(\mathbf{Z}; \mathbf{s}_0)}\right),   
\end{equation}
which (up to a scaling factor) emphasises its dependence on the ratio of predictand to predictor.

\subsection{Quantifying the asymmetry in loss functions}\label{sec:asymmetry_measure}

Recall that the process to be predicted is positive and, hence, it could be thought of as $Y(\cdot) = \exp\{X(\cdot)\}$, where the random process $X(\mathbf{s}) \in (-\infty, \infty)$ for any $\mathbf{s}\in D$. An obvious example is the log-Gaussian process $Y(\cdot)$, where $X(\cdot)$ is a Gaussian process, but our setting is more general; for example, $X(\cdot)$ could be an infinitely-divisible process. Misspecification could be represented by $X(\cdot)\pm c$, for $c > 0$, which corresponds to $\nu Y(\cdot)$ and $(1/\nu) Y(\cdot)$, where $\nu = \exp\{- c\} < 1$. Now write $\nu = 1 - f$, for $0 \leq f \leq 1$, and $f \times 100\%$ corresponds to the percentage decrease of $Y(\cdot)$ that results in the under-predicted value, $(1-f)Y(\cdot)$. Hence, $1/\nu = (1- f)^{-1} > 1$, and $(1-f)^{-1}Y(\cdot)$ represents the over-predicted value. In what follows we quantify the degree of asymmetry of loss functions by comparing losses due to multiplicative perturbations that define under-prediction and over-prediction, namely $Y(\mathbf{s}_0)$ perturbed to $(1-f)Y(\mathbf{s}_0)$ and $(1-f)^{-1}Y(\mathbf{s}_0)$ for $f \in (0, 1)$. Then, define the following measure of asymmetry,
\begin{equation}
A(f) \equiv \frac{L((1-f)Y(\mathbf{s}_0), Y(\mathbf{s}_0))}{L((1-f)^{-1}Y(\mathbf{s}_0), Y(\mathbf{s}_0))},~\text{for}~f\in(0, 1).\label{eqn:asymmetry_measure}
\end{equation}
The notation used in the left-hand side of \eqref{eqn:asymmetry_measure} might include dependence on $Y(\mathbf{s}_0)$ but in fact, for PDL and the loss functions in Table \ref{table:common_loss_functions}, it does not. A value of $1$ for $A(f)$ indicates that the loss function is `symmetric', and values of $A(f) > 1$ indicate that under-prediction of $Y(\mathbf{s}_0)$ incurs a higher loss than over-prediction, and conversely if $A(f) < 1$. The asymmetry measure \eqref{eqn:asymmetry_measure} for common loss functions, including SEL, AEL, ARL, and QTL (definitions given in Table \ref{table:common_loss_functions}) are presented in Table \ref{table:asymmetry_measures}. Note here that SEL and AEL, both classically considered to be `symmetric' loss functions for additive perturbations, are asymmetric for multiplicative perturbations, which is the natural type of perturbation for positive processes.

\begin{table}[h!]
\centering
\begin{tabular}{|c c c |} 
 \hline
 Loss function & Asymmetry & Value\\ 
 \hline
 SEL & $A_{SEL}(f)$ & $(1-f)^2$\\ 
 AEL & $A_{AEL}(f)$ & $1-f$\\
 ARL & $A_{ARL}(f)$ & $1-f$\\
 QTL & $A_{QTL, q}(f)$ & $(1-f)q/(1-q)$ \\  
 \hline
\end{tabular}
\caption{Asymmetry measures of common loss functions (see \eqref{eqn:asymmetry_measure}), where $f\in(0, 1)$. SEL stands for squared-error loss, AEL for absolute-error loss, ARL for absolute relative-error loss, and QTL for quantile loss.}
\label{table:asymmetry_measures}
\end{table}

The asymmetry measure for PDL simplifies as follows,
\begin{align}
A_{PDL,\lambda}(f) &= \frac{(1-f)^2\phi^+_{\lambda}\left((1-f)^{-1}\right)}{\phi^+_{\lambda}\left(1-f\right)} \nonumber\\
&= 
\begin{cases} 
\frac{(1-f)^{1-\lambda} - (1 - f) (1 + \lambda f) }{(1-f)^{\lambda + 1} + (\lambda + 1)f - 1}, &  \lambda \neq -1, 0\\
& \\
-\frac{(1-f)\log(1 - f) + f(1-f)}{(1-f)\log(1 - f)+f}, & \lambda = 0\\
&\\
-\frac{(1-f)^2\log(1-f) + f(1-f)}{\log(1-f)+f}, & \lambda = -1.
\end{cases}\label{eqn:A_PDL}
\end{align}
From Fig. \ref{fig:PDL}, $L_{PDL,\lambda}$'s asymmetry visually differs as $\lambda$ varies, but notice that $Y(\mathbf{s}_0)$ is fixed in those plots. The asymmetry measure \eqref{eqn:asymmetry_measure} is based on scaling perturbations of $Y(\mathbf{s}_0)$, where $Y(\mathbf{s}_0)$ is allowed to vary. From \eqref{eqn:A_PDL}, the asymmetry measure $A_{PDL, \lambda}(f)$ has the following properties: for $\lambda = 1$, the loss function is symmetric for all $f \in (0, 1)$; for all $\lambda < 1$, over-prediction is more serious than $f\times100\%$ under-prediction; and conversely for all $\lambda > 1$. The limit of $A_{PDL,\lambda}(f)$ as $f \rightarrow 0$ is always $1$; and as $f \rightarrow 1$, $A_{PDL,\lambda}(f)$ tends to $\infty$ if $\lambda > 1$, is equal to $0$ if $\lambda < 1$, and is equal to $1$ if $\lambda = 1$. See Section \ref{sec:asymmetry_measure_result} of the Online Supplement for details of the analysis.

\begin{figure}[!ht]
\centering
\includegraphics[width=\textwidth]{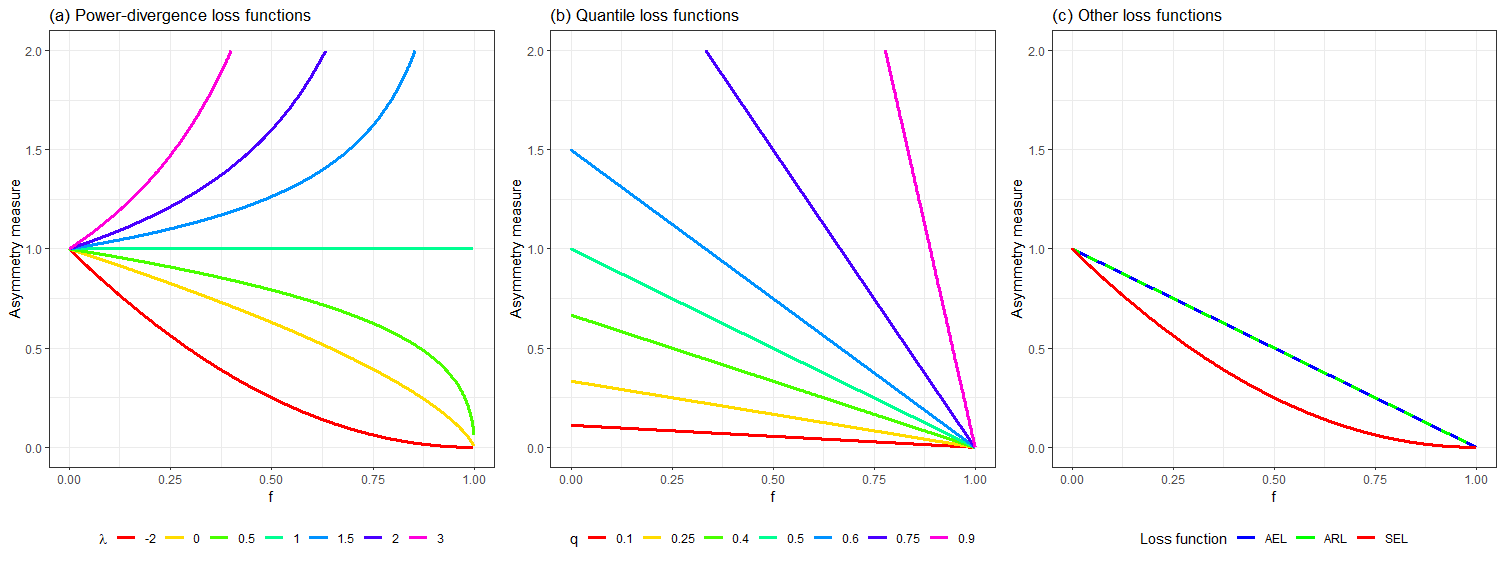}
\caption{Asymmetry measures plotted against $f \in (0, 1)$ for (a) power-divergence loss (PDL), (b) quantile loss (QTL), and (c) squared-error loss (SEL), absolute relative-error loss (ARL), and absolute error loss (AEL). For PDL, $\lambda \in \{-2, 0, 0.5, 1, 1.5, 2, 3\}$. For QTL, $q \in \{0.1, 0.25, 0.4, 0.5, 0.6, 0.75, 0.9\}$. The vertical axis is restricted to the range $[0, 2]$ to facilitate plotting and visual comparisons between loss functions. Any lines that touch the top edge of the plot continue past it.}
\label{fig:asymmetry}
\end{figure}

The asymmetry measure \eqref{eqn:asymmetry_measure} of different loss functions (i.e., PDL, QTL, SEL, AEL, and ARL) are compared in Fig. \ref{fig:asymmetry}. From \eqref{eqn:A_PDL} and Table \ref{table:asymmetry_measures}, $A_{PDL, -2}(f) = A_{SEL}(f) = (1-f)^2$ and $A_{PDL,-1/2}(f) = A_{QTL, 0.5}(f) = A_{AEL}(f) = A_{ARL}(f) = 1-f$. Fig. \ref{fig:asymmetry} shows that PDL can exhibit a variety of different behaviors; however, the common loss functions in Table \ref{table:asymmetry_measures} do not exhibit $A(f) > 1$ for any $f \in (0, 1)$. Further, Fig. \ref{fig:asymmetry} shows that the behaviour of QTL is fundamentally different from that of PDL except when $q = 0.5$. When the asymmetry measure is plotted for $L_{QTL, q}$, for various $q \in (0, 1)$, it shows a series of rays meeting at $(f, A_{QTL,q}(f)) = (1, 0)$, and $\lim_{f\rightarrow 0} A_{QTL, q}(f) = q/(1-q)$. Conversely, when PDL is plotted for several values of $\lambda \in (-\infty, \infty)$, the plot shows a series of rays radiating out from $(f, A_{PDL,\lambda}(f)) = (0, 1)$ and terminating at either $(1, 0)$, $(1, 1)$, or $(1, \infty)$. This illustrates that the family of quantile losses is fundamentally different from the family of power-divergence losses. Further, unlike PDL, which is a smooth, convex function that is continuous at $\delta(\mathbf{Z}; \mathbf{s}_0) = Y(\mathbf{s}_0)$, the piecewise linear nature of QTL means $A_{QTL, q}(f)$ has a discontinuity at the origin except when $q = 0.5$.

\section{The optimal power-divergence (OPD) spatial predictor}\label{sec:prediction}

Recall that an optimal predictor under a given loss function is derived by minimising either the ELP given by \eqref{eqn:defn_ELP} or equivalently the ELJ given by \eqref{eqn:defn_ELJ} over a family of predictors. Here, it is assumed that the family of predictors is the set of all measurable functions of the data, but it is not uncommon for restrictions to be imposed (e.g., the kriging predictor is obtained by minimising ELJ based on the SEL function (i.e., by minimising the MSPE) over the set of all linear functions of the data, subject to unbiasedness).

For PDL, the ELP is a function of predictor $\delta$ that we notate as,
\begin{equation}
ELP_\lambda(\delta; \mathbf{Z}; \mathbf{s}_0) \equiv E\{L_{PDL,\lambda}(\delta(\mathbf{Z}; \mathbf{s}_0), Y(\mathbf{s}_0)) \mid \mathbf{Z}\},\label{eqn:ELP_PDL}
\end{equation}
where the power parameter $\lambda$ appears as a subscript. Likewise the ELJ is notated as,
\begin{equation}
ELJ_\lambda(\delta; \mathbf{s}_0) \equiv E\{L_{PDL,\lambda}(\delta(\mathbf{Z};\mathbf{s}_0), Y(\mathbf{s}_0))\} \equiv E\{ELP_\lambda(\delta;\mathbf{Z}; \mathbf{s}_0)\}.\label{eqn:ELJ_PDL}
\end{equation}
The optimal power-divergence (OPD) spatial predictor of $Y(\mathbf{s}_0)$ is the predictor $\delta_\lambda^*(\mathbf{Z}; \mathbf{s}_0)$ that satisfies,
\begin{equation}
\delta_\lambda^*(\mathbf{Z}; \mathbf{s}_0) \equiv \mathop{\mathrm{arg~inf}}_{\delta} ELJ_\lambda(\delta; \mathbf{s}_0),\label{eqn:OPD_definition}
\end{equation}
where $L_{PDL,\lambda}(\delta(\mathbf{Z};\mathbf{s}_0), Y(\mathbf{s}_0))$ in \eqref{eqn:ELJ_PDL} is defined in \eqref{eqn:power_divergence_loss}, and the predictor $\delta$ belongs to the set of all measurable functions of the data. Solving \eqref{eqn:OPD_definition} yields \cite[]{Cressie2022}:
\begin{equation}
\delta_\lambda^*(\mathbf{Z}; \mathbf{s}_0) = \begin{cases} 
E\!\left(Y(\mathbf{s}_0)^{\lambda + 1} \mid \mathbf{Z}\right)^{1/(\lambda + 1)}& \lambda \neq 0, -1, \\
E(Y(\mathbf{s}_0)\mid \mathbf{Z})& \lambda = 0, \\
\exp\{E(\log(Y(\mathbf{s}_0)) \mid \mathbf{Z})\}& \lambda = -1.
\end{cases}\label{eqn:OPD_predictor}
\end{equation}
Interestingly, when $\lambda = 0$, the OPD spatial predictor is the mean of  the predictive distribution $[Y(\mathbf{s}_0) \mid \mathbf{Z}]$. However, the loss function $L_{PDL,0}(\delta, Y) = Y\log(Y/\delta) - (Y - \delta)$ is very different from the SEL function $(\delta - Y)^2$. In general, the result given by \eqref{eqn:OPD_predictor} shows that the OPD spatial predictor is a function of the $(\lambda+1)$-th fractional moment of $[Y(\mathbf{s}_0)\mid\mathbf{Z}]$. 

The existence of the $(\lambda + 1)$-th fractional moment of $[Y(\mathbf{s}_0)\mid\mathbf{Z}]$ is implied by the existence of the expectation of the PDL function. Recall that a property of any loss function used to make decisions is that the ELJ, $E(L_{PDL,\lambda}(\delta(\mathbf{Z}; \mathbf{s}_0), Y(\mathbf{s}_0)))$, must exist and be finite \cite[p. 3]{Berger1985}. Consequently, the ELP,  $E(L_{PDL,\lambda}(\delta;\mathbf{Z}; \mathbf{s}_0, Y(\mathbf{s}_0))\mid\mathbf{Z})$, must also exist and be finite. Hence, from \eqref{eqn:ELP_PDL}, it is easy to see that $E(Y(\mathbf{s}_0)^{\lambda + 1}\mid \mathbf{Z}) < \infty$.

In what follows, the properties of the OPD spatial predictor are explored. In this section, we only consider point-prediction. Section \ref{sec:prediction_intervals} deals with interval-prediction. 

\subsection{Computing the OPD spatial predictor}\label{sec:calculation}

Although \eqref{eqn:OPD_predictor} appears to be mathematically simple, computing it is not necessarily straightforward. The $(\lambda + 1)$-th moment involved will not be available in closed form for many predictive distributions (but see Section \ref{sec:log_normal}, where the log-Gaussian process is considered).  Some strategies for calculating the OPD spatial predictor in such cases are discussed here. 

\subsubsection*{Exact calculation and the moment generating function (MGF)}

For some predictive distributions, the OPD spatial predictor can be calculated exactly with very little effort (e.g., when $[Y(\mathbf{s}_0)\mid\mathbf{Z}]$ is a log-Gaussian distribution, a gamma distribution, or an inverse gamma distribution). We develop this further in Section \ref{sec:log_normal}. 

Even if there are no easy or obvious ways to get the OPD spatial predictor in closed form, it may still be possible to obtain if the moment generating function (MGF) of the predictive distribution exists and is known \cite[]{Cressie1986}. In well known classical theory, the MGF of a random variable $X$ exists if the expectation $M_X(t) = E\{\exp(tX)\}$ is finite for $t$ in some interval about the origin. If the MGF exists, then for positive random variables, \cite{Cressie1981} developed a method for obtaining \textit{all integer moments}, positive and negative. In that case, the well known result that differentiation of the MGF yields positive integer moments was generalised to obtain negative integer moments through integration. \cite{Cressie1986} gave the general result that all real moments can be obtained by fractional differentiation of the MGF.
In contemporary spatial statistics, it is more likely that a Monte Carlo method such as Markov Chain Monte Carlo (MCMC) will be available to sample from the predictive distribution. We now discuss such computational methods that can yield the OPD spatial predictor.

\subsubsection*{Computational methods}

In general, predictive distributions from EHMs and BHMs are not known in closed form. They are often obtained via Monte Carlo sampling such as MCMC. Let $\mathbf{y} \equiv (y_{(1)}, ..., y_{(M)})^\top$ be an $M$-dimensional vector of thinned, approximately independent MCMC samples from $[Y(\mathbf{s}_0)\mid \mathbf{Z}]$, the predictive distribution. The direct Monte Carlo estimator of \eqref{eqn:OPD_predictor} is,
\begin{equation}
\hat{\delta}_\lambda^*(\mathbf{Z};\mathbf{s}_0) = \begin{cases}
\left(M^{-1} \sum_{i=1}^M y_{(i)}^{\lambda + 1}\right)^{1/(\lambda + 1)},& \lambda \neq -1\\
\exp\!\left\{M^{-1} \sum_{i=1}^M \log(y_{(i)})\right\},& \lambda = -1.
\end{cases}\label{eqn:naive_estimator}
\end{equation}
The variance of \eqref{eqn:naive_estimator} can be large; the choice of $M$ to obtain an acceptably small variance depends on $\lambda$, as the following delta-method calculation shows:
\begin{align*}
&\mathrm{var}\!\left(\hat{\delta}_\lambda^*(\mathbf{Z};\mathbf{s}_0)\right)\\
&~~~~~~~\simeq \begin{cases}
M^{-1}(\lambda + 1)^{-2}\delta_\lambda^*(\mathbf{Z}; \mathbf{s}_0)^{-2\lambda}\mathrm{var}(Y(\mathbf{s}_0)^{\lambda + 1}\mid \mathbf{Z}), & \lambda \neq -1\\
M^{-1}\exp(2E(\log(Y(\mathbf{s}_0))\mid\mathbf{Z}))\mathrm{var}(\log(Y(\mathbf{s}_0))\mid \mathbf{Z}), & \lambda = -1.
\end{cases}
\end{align*}
The number of thinned MCMC samples $M$ and the magnitude of the predictor are factors that contribute to the precision of the estimated predictor. At this juncture, we point out there is the potential to use \textit{control variates} to obtain more precise Monte Carlo estimates. See \cite{Glasserman2003} for a general overview of control variates and \cite{Mira2013} for a promising way to select optimal control variates. We leave this topic for future research.

\subsubsection*{Delta-method approximation}

The OPD spatial predictor can be approximated via the delta method (see Section \ref{sec:delta_method_appendix} of the Online Supplement). Let $\mathrm{CV}(Y(\mathbf{s}_0)\mid\mathbf{Z}) \equiv \sqrt{\mathrm{var}(Y(\mathbf{s}_0)\mid \mathbf{Z})}/E(Y(\mathbf{s}_0)\mid \mathbf{Z})$ be the (conditional) coefficient of variation. The second-order delta-method approximation to the $(\lambda+1)$-th fractional moment of $[Y(\mathbf{s}_0) \mid \mathbf{Z}]$ is easily seen to be,
\begin{equation}
E(Y(\mathbf{s}_0)^{\lambda + 1} \mid \mathbf{Z}) \simeq E(Y(\mathbf{s}_0)\mid\mathbf{Z})^{\lambda + 1}\left(1 + \frac{\lambda(\lambda + 1)\mathrm{CV}(Y(\mathbf{s}_0)\mid \mathbf{Z})^2}{2}\right),\label{eqn:delta_method_fractional_moment}
\end{equation}
for $\lambda \neq -1$. This is simply the expectation with respect to $[Y(\mathbf{s}_0)\mid \mathbf{Z}]$ of the second-order Taylor series of the function $g(x) = x^{\lambda + 1},~x>0$, about $x = E(Y(\mathbf{s}_0)\mid\mathbf{Z})$. The approximation to \eqref{eqn:OPD_predictor} for $\lambda \neq - 1$ is completed by taking the second-order Taylor series of the function $h(u) = u^{1/(\lambda + 1)},~u>0,$ about $u = 1$. Hence, for $\lambda \neq -1$, further algebra yields
\begin{align}
&E(Y(\mathbf{s}_0)^{\lambda + 1} \mid \mathbf{Z})^{1/(\lambda + 1)}\nonumber\\
&~~~~~~\simeq E(Y(\mathbf{s}_0)\mid\mathbf{Z}) \left(1 + \frac{\lambda \mathrm{CV}(Y(\mathbf{s}_0)\mid \mathbf{Z})^2}{2} - \frac{\lambda^3 \mathrm{CV}(Y(\mathbf{s}_0)\mid \mathbf{Z})^4}{8}\right).\label{eqn:delta_method_opd_predictor}
\end{align}
That is, the OPD spatial predictor is approximately equal to the ubiquitous posterior mean multiplied by a factor that depends on $\lambda$ and the coefficient of variation of the predictive distribution $[Y(\mathbf{s}_0)\mid\mathbf{Z}]$. The quality of the approximation relies on the multiplicative factor being close to $1$, so it is expected to deteriorate when $|\lambda|$ is large. 

For $\lambda = -1$, a separate second-order delta-method approximation yields
\begin{align}
    &\exp\{E(\log(Y(\mathbf{s}_0))\mid\mathbf{Z})\} \nonumber\\ 
    &~~~~~~~~~~\simeq E(Y(\mathbf{s}_0)\mid\mathbf{Z}) \cdot \left(1 -\frac{\mathrm{CV}(Y(\mathbf{s}_0) \mid \mathbf{Z})^2}{2} + \frac{\mathrm{CV}(Y(\mathbf{s}_0) \mid \mathbf{Z})^4}{8} \right),\label{eqn:delta_method_opd_predictor_neg1}
\end{align}
which is simply \eqref{eqn:delta_method_opd_predictor} with $\lambda = -1$. 

\subsection{Block prediction and change-of-support}\label{sec:block_prediction}

Up to this point, we have only considered point-level prediction from point-level data. However, prediction of a block $B \subset D$ with area or volume $|B| > 0$ is an important topic in spatial statistics because useful spatial-inference questions are often asked at the level of blocks, which could refer to parcels of land, cubic metres of earth, etc. The problem of making block-level predictions from point-level data is a problem known as \textit{change-of-support} \cite[Ch. 5.2]{Cressie1993}.

\subsubsection*{Block-level OPD spatial prediction}

Consider the values of the process $Y(\cdot)$ at points within a block $B \subset D$, namely $\{Y(\mathbf{u}): \mathbf{u} \in B\}$. At each point, there is a loss function and an optimal predictor. This gives rise to the two spatial fields, $\{L_{PDL,\lambda}(\delta(\mathbf{Z}; \mathbf{u}), Y(\mathbf{u})): \mathbf{u} \in B\}$ and $\{\delta_\lambda^*(\mathbf{Z}; \mathbf{u}): \mathbf{u} \in B\}$. How can they be used to obtain an optimal predictor of $Y(B) = |B|^{-1} \int_B Y(\mathbf{u})~\mathrm{d}\mathbf{u}$, which we denote by $\delta_\lambda^*(\mathbf{Z}; B)$?

As with the equivalent point-level problem, the block-level optimal predictor is based on a loss function. Define the block-level loss function, 
\begin{align}
&L_{APDL,\lambda}\left(\left\{\delta(\mathbf{Z}; \mathbf{u}): \mathbf{u} \in B\right\}, \left\{Y(\mathbf{u}): \mathbf{u} \in B\right\}\right) \nonumber\\ &~~~~~~~~~~~~~~~~~~~\equiv|B|^{-1}\int_{B} L_{PDL,\lambda}(\delta(\mathbf{Z}; \mathbf{u}), Y(\mathbf{u}))~\mathrm{d}\mathbf{u},\label{eqn:block_level_PDL}
\end{align}
which is the average of the point-level loss functions within the block. Minimising this expected loss under the posterior distribution is analogous to the A-optimality criterion in experimental design \cite[e.g.,][]{Atkinson1992}. In this case, the ELP obtained from \eqref{eqn:block_level_PDL} is given by,
\begin{align}
    ELP_{\lambda}(\delta; \mathbf{Z}; B) &\equiv E\!\left(|B|^{-1}\int_{B} L_{PDL,\lambda}(\delta(\mathbf{Z}; \mathbf{u}), Y(\mathbf{u}))~\mathrm{d}\mathbf{u}\mid\mathbf{Z}\right)\nonumber \\
    &= |B|^{-1}\int_{B} ELP_\lambda(\delta; \mathbf{Z}; \mathbf{u})~\mathrm{d}\mathbf{u}.\label{eqn:block_level_ELP}
\end{align}
By extension, the ELJ is:
\begin{align}
    ELJ_\lambda(\delta; B) &= E\left(ELP_{\lambda}(\delta; \mathbf{Z}; B)\right)\nonumber\\ 
    &= |B|^{-1}\int_{B} E(ELP_{\lambda}(\delta;\mathbf{Z};\mathbf{u}))~\mathrm{d}\mathbf{u} = |B|^{-1} \int_{B} ELJ_\lambda(\delta;\mathbf{u})~\mathrm{d}\mathbf{u}.\label{eqn:block_level_ELJ}
\end{align}
Minimising \eqref{eqn:block_level_ELP} results in 
\begin{equation}
    \delta_\lambda^*(\mathbf{Z}; B) = |B|^{-1}\int_B \delta_\lambda^*(\mathbf{Z}; \mathbf{u})~\mathrm{d}\mathbf{u},\label{eqn:block_OPD_predictor}
\end{equation}
which is equal to the block-average of the point-wise OPD spatial predictors. Other summaries can be used to define block-level loss functions, such as the maximum of the point-level loss functions (analogous to the $M$-optimality criterion in experimental design).

\subsubsection*{Block prediction by a delta-method approximation}

Suppose that instead of \eqref{eqn:block_level_PDL}, the loss function for prediction of $Y(B)$ is $L_{PDL,\lambda}(\delta(\mathbf{Z};B), Y(B))$; in this case, the optimal predictor is $E(Y(B)^{\lambda + 1}\mid\mathbf{Z})^{1/(\lambda + 1)}$, which is different from \eqref{eqn:block_OPD_predictor}. Note that a different loss function usually (but not always) results in a different optimal predictor. For block $B$, the delta-method approximation to the OPD spatial predictor is obtained in a similar manner to \eqref{eqn:delta_method_opd_predictor}:
\begin{align*}
&E(Y(B)^{\lambda + 1} \mid \mathbf{Z})^{1/(\lambda + 1)} \simeq \\
&~~~~~~~~E(Y(B)\mid\mathbf{Z}) \left(1 + \frac{\lambda \mathrm{CV}(Y(B)\mid \mathbf{Z})^2}{2} - \frac{\lambda^3 \mathrm{CV}(Y(B)\mid \mathbf{Z})^4}{8}\right).
\end{align*}
This approximation allows for convenient block prediction since the block-predictive mean, $E(Y(B)\mid\mathbf{Z})$, and block-predictive variance, $\mathrm{var}(Y(B)\mid \mathbf{Z})$, used in the definition of $\mathrm{CV}(Y(B)\mid \mathbf{Z})$, can be computed from the corresponding means, variances, and covariances at point-level support.  

\subsection{Properties of the OPD spatial predictor}\label{sec:properties}

A key property of the OPD spatial predictor is that it is biased, except when $\lambda = 0$. The following subsections analyse various aspects of its biasedness.

\subsubsection*{Bias}

The bias of any spatial predictor $\delta(\mathbf{Z}; \mathbf{s}_0)$ is defined as,
\begin{equation}
Bias(\delta; \mathbf{s}_0) \equiv E\{\delta(\mathbf{Z}; \mathbf{s}_0) - Y(\mathbf{s}_0)\},\label{eqn:bias}
\end{equation}
where the expectation in \eqref{eqn:bias} is with respect to the joint distribution, $[Y(\mathbf{s}_0), \mathbf{Z}]$. What can be said about the bias of $\delta_\lambda^*(\mathbf{Z}; \mathbf{s}_0)$? By inspection of \eqref{eqn:OPD_predictor}, it is clear that $\lambda = 0$ yields an unbiased predictor since $\delta_0^*(\mathbf{Z};\mathbf{s}_0) = E(Y(\mathbf{s}_0)\mid\mathbf{Z})$. Jensen's inequality shows that $\delta_\lambda^*(\mathbf{Z};\mathbf{s}_0) > E(Y(\mathbf{s}_0)\mid \mathbf{Z})$ for $\lambda > 0$ and $\delta_\lambda^*(\mathbf{Z};\mathbf{s}_0) < E(Y(\mathbf{s}_0)\mid \mathbf{Z})$ for $\lambda < 0$ (see Section \ref{sec:bias_appendix} of the Online Supplement). In fact, $Bias(\delta_\lambda^*;\mathbf{s}_0)$ is a monotonic increasing function of $\lambda$, as we now show. 

Consider
\begin{align}
    \frac{\partial\delta_\lambda^*(\mathbf{Z}; \mathbf{s}_0)}{\partial \lambda} = \begin{cases}
    \frac{\left(E(Y(\mathbf{s}_0)^{\lambda + 1}\log(Y(\mathbf{s}_0)^{\lambda + 1}) \mid \mathbf{Z}) - E(Y(\mathbf{s}_0)^{\lambda + 1}\mid\mathbf{Z})\log(E(Y(\mathbf{s}_0)^{\lambda + 1}\mid\mathbf{Z}))\right)}{(\lambda + 1)^{2}\delta_\lambda^*(\mathbf{Z}; \mathbf{s}_0)^{\lambda}}, & \lambda \neq -1\\
    \frac{\mathrm{var}(\log(Y(\mathbf{s}_0))\mid\mathbf{Z})}{2\delta_{-1}^*(\mathbf{Z};\mathbf{s}_0)^{-1}},& \lambda = -1,
    \end{cases}\label{eqn:derivative_OPD}
\end{align}
and we now show that the derivative in \eqref{eqn:derivative_OPD} is always positive: When $\lambda = -1$, every term is a strictly positive quantity, and when $\lambda \neq -1$, $(\lambda + 1)^{-2}\delta_\lambda^*(\mathbf{Z};\mathbf{s}_0)^\lambda$ is strictly positive, and the bracketed term is positive by Jensen's inequality because the function $h(x) = x\log(x),~x>0,$ is convex. When predicting extreme parts of $Y(\cdot)$, bias can be desirable, and hence $\lambda$ is a `dial' that can be used to make conservative predictions.  

\subsubsection*{The minimised ELP and ELJ}

Under SEL, the bias of the optimal predictor, $E(Y(\mathbf{s}_0)\mid\mathbf{Z})$, is zero. Hence the minimised ELJ (i.e., the minimised mean-squared prediction error) of the optimal predictor is the expectation of the predictive variance, $E\{\mathrm{var}(Y(\mathbf{s}_0)\mid\mathbf{Z})\}$. A similarly interesting interpretation exists for the minimised ELP and ELJ of the OPD spatial predictor. It is straightforward to show that the minimised ELP, denoted as $ELP^*_\lambda(\delta_\lambda^*;\mathbf{Z};\mathbf{s}_0)$, is given by
\begin{align}
   ELP_{\lambda}^*(\delta_\lambda^*; \mathbf{Z}; \mathbf{s}_0) = \frac{\delta_\lambda^*(\mathbf{Z}; \mathbf{s}_0) - \delta_0^*(\mathbf{Z}; \mathbf{s}_0)}{\lambda},\label{eqn:minimised_ELP}
\end{align}
where recall that $\delta_0^*(\mathbf{Z}; \mathbf{s}_0) = E(Y(\mathbf{s}_0)\mid\mathbf{Z})$. For $\lambda = 0$, we define $ELP_{0}^*(\delta_0^*; \mathbf{Z}; \mathbf{s}_0)\equiv \lim_{\lambda \rightarrow 0} ELP_\lambda^*(\delta_\lambda^*; \mathbf{Z}; \mathbf{s}_0)$. After using L'H\^{o}pital's rule, 
\begin{align}
   ELP_{0}^*(\delta_0^*; \mathbf{Z}; \mathbf{s}_0) = E(Y(\mathbf{s}_0)\log(Y(\mathbf{s}_0))\mid\mathbf{Z}) - E(Y(\mathbf{s}_0)\mid \mathbf{Z})\log(E(Y(\mathbf{s}_0)\mid \mathbf{Z})).\label{eqn:minimised_ELP0}
\end{align}
Both \eqref{eqn:minimised_ELP} and \eqref{eqn:minimised_ELP0} are positive, the former because $\delta_\lambda^*(\mathbf{Z};\mathbf{s}_0) - \delta_0^*(\mathbf{Z}; \mathbf{s}_0)$ always has the same sign as $\lambda$ (see Section \ref{sec:bias_appendix} of the Online Supplement) and the latter because $g(x) = x \log(x), x > 0$ is a convex function, so Jensen's inequality guarantees that \eqref{eqn:minimised_ELP0} is positive. 

The minimised ELJ, denoted as $ELJ_\lambda^*(\delta_\lambda^*;\mathbf{s}_0)$, is simply the expectation of \eqref{eqn:minimised_ELP} taken over the data:
\begin{align}
    ELJ_\lambda^*(\delta_\lambda^*; \mathbf{s}_0) \equiv E(ELP_\lambda^*(\delta_\lambda^*; \mathbf{Z};\mathbf{s}_0)) = \frac{Bias(\delta_\lambda^*; \mathbf{s}_0)}{\lambda}.\label{eqn:minimised_ELJ}
\end{align}
Similarly, for $\lambda = 0$, 
\begin{align}
    ELJ_0^*(\delta_0^*; \mathbf{s}_0) \equiv E\left\{Y(\mathbf{s}_0)\log(Y(\mathbf{s}_0)) - E(Y(\mathbf{s}_0)\mid \mathbf{Z})\log(E(Y(\mathbf{s}_0)\mid \mathbf{Z})\right\}.\label{eqn:minimised_ELJ0}
\end{align}

It should be noted that the minimised ELP and ELJ are useful for uncertainty quantification, even though at first glance they are only functions of the first moment. Recall $Y(\cdot)$ is a positive spatial process, and positive processes invariably exhibit mean-variance relationships that allow terms like the bias to contain information about the predictive variability.

\section{Log-Gaussian spatial process}\label{sec:log_normal}

This section addresses the special case of a log-Gaussian spatial process. Suppose $Y(\cdot) \equiv \exp\{W(\cdot)\}$, where $W(\cdot)$ is a Gaussian process, and hence the process model denoted by $[Y(\cdot)]$, is a log-Gaussian spatial process. Write 
\begin{equation}
W(\cdot) \equiv \mathbf{x}(\cdot)'\bm\beta  + \eta(\cdot)+ \xi(\cdot),\label{eqn:w_process}
\end{equation} 
where $\mathbf{x}(\cdot)$ is a $p$-dimensional vector of covariates, and $\bm\beta$ is a $p$-dimensional vector of coefficients; these define a linear model that serves as the deterministic mean function for $W(\cdot)$. Further, $\mathrm{var}(\eta(\cdot)) = \sigma^2_\eta$ and $E(\eta(\cdot)) = -0.5\sigma^2_\eta$; similarly, $\mathrm{var}(\xi(\cdot)) = \sigma^2_{\xi}$ and $E(\xi(\cdot)) = -0.5\sigma^2_\xi$. 

The process $\eta(\cdot)$ is an $L_2$-continuous weakly stationary Gaussian process with mean $- 0.5\sigma^2_\eta$, and isotropic covariance function, $C_\eta(\lvert|\mathbf{h}\rvert|; \bm\theta) \equiv \mathrm{cov}(\eta(\mathbf{s}), \eta(\mathbf{s} + \mathbf{h}))$ for $\mathbf{h} \in \mathbb{R}^d$, that depends on covariance parameters $\bm\theta$; also define $\sigma^2_\eta \equiv C_\eta(0; \bm\theta)$. The notation for the argument of the covariance function is simplified by letting $h \equiv \lvert|\mathbf{h}\rvert| \geq 0$. The process $\xi(\cdot) + 0.5\sigma^2_\xi$ is a Gaussian white-noise process with mean $0$ and variance $\sigma_{\xi}^2$ that is also uncorrelated with $\eta(\cdot)$. It represents microscale variation in $W(\cdot)$ \cite[e.g, see][pp. 58-60]{Cressie1993}. Microscale variation is non-smooth variability present in the process at infinitesimal distances, which usually modelled (as we have done) by adding extra variability at the origin of the covariance function of a smooth process. Therefore, while $\eta(\cdot)$ has a continuous isotropic covariance function, $C_\eta(h; \bm\theta)$ for $h \geq 0$, the covariance function of $W(\cdot)$ is discontinuous at $h=0$. That is, $C_W(h; \bm\theta, \sigma^2_{\xi}) = \sigma^2_{\xi}\mathbb{I}(h = 0) + C_\eta(h; \bm\theta)$ for $h\geq 0$, where $\mathbb{I}(\cdot)$ is the indicator function. The marginal variance of the process is, $\sigma^2_{W} \equiv \mathrm{cov}(W(\mathbf{s}), W(\mathbf{s})) = C_W(0; \bm\theta, \sigma^2_{\xi}) = \sigma^2_\eta + \sigma^2_{\xi}$.

While the form of \eqref{eqn:w_process} is ubiquitous in spatial statistics, specifying $\eta(\cdot)$ and $\xi(\cdot)$ to have non-zero means is unusual. However, these offsets guarantee that, on the original scale, the marginal expectation of the process is $E(Y(\cdot)) = \exp\{\mathbf{x}(\cdot)'\bm\beta\}$. 

As for the distribution of the spatial data, $\mathbf{Z}$, recall that if $\nu(\cdot)$ is a spatially-varying positive function, then $\nu(\cdot) \cdot Y(\cdot)$ is a positive-valued spatial process. Here, for the data model, set $\nu(\cdot) = \exp\{\varepsilon(\cdot)\}$, where $\varepsilon(\cdot)$ is a Gaussian process of independent and identically distributed  Gaussian random variables representing the log-scale effect of measurement error. The result is that data $Z(\cdot) = Y(\cdot)\cdot \exp\{\varepsilon(\cdot)\}$, have a log-Gaussian distribution with multiplicative measurement error. Hence, each $Z(\mathbf{s}_i)$, for $~i=1,...,n,$ is modelled as,
\begin{equation}
    Z(\mathbf{s}_i) = Y(\mathbf{s}_i) \cdot \exp\{\varepsilon(\mathbf{s}_i)\} = \exp\{W(\mathbf{s}_i) + \varepsilon(\mathbf{s}_i)\}, \label{eqn:data_model}
\end{equation}
where $\varepsilon(\cdot)$ is uncorrelated with $\eta(\cdot)$ and $\xi(\cdot)$ and has variance $\sigma^2_\varepsilon$ and mean $-0.5\sigma^2_\varepsilon$. The log-scale measurement-error process $\varepsilon(\cdot)$ is chosen to have a non-zero mean, $-0.5\sigma^2_\varepsilon$, to ensure that $E(Z(\mathbf{s}_i)\mid Y(\mathbf{s}_i)) = Y(\mathbf{s}_i)$. Finally, under conditional independence, the  data model is $[\mathbf{Z} \mid Y(\cdot)] = \prod_{i=1}^n[Z(\mathbf{s}_i) \mid Y(\mathbf{s}_i)]$, where the $i$-th component in the product is log-Gaussian with log-scale mean $-0.5\sigma^2_\varepsilon$ and log-scale variance $\sigma^2_\varepsilon$.

In the EHM framework, for the log-Gaussian process, the predictive distribution for $Y(\mathbf{s}_0)$ can be obtained analytically by analysing the process and the data on the log-scale. Let $\tilde{\mathbf{Z}} \equiv (\tilde{Z}(\mathbf{s}_1), ..., \tilde{Z}(\mathbf{s}_n))'$ where, for $i=1,...,n$, $\tilde{Z}(\mathbf{s}_i) \equiv \log(Z(\mathbf{s}_i))$ follows a Gaussian distribution. Following \cite{CressieWikle2011}, the log-scale predictive distribution,  $[W(\mathbf{s}_0)\mid \tilde{\mathbf{Z}}]$, is Gaussian. First, define $\bm\Sigma_{W}$ to be the covariance matrix for $W(\cdot)$ at the observation locations, $\{\mathbf{s}_1, ..., \mathbf{s}_n\}$, with the $(i,j)$-th element given by $C_W(\lvert|\mathbf{s}_i - \mathbf{s}_j\rvert|;\bm\theta, \sigma^2_{\xi})$; alternatively, $\bm{\Sigma}_W = \sigma^2_{\xi}\mathbf{I}_n + \bm{\Sigma}_{\eta}$, where $\bm{\Sigma}_{\eta}$ is a covariance matrix with $(i,j)$-th element given by $C_\eta(\lvert|\mathbf{s}_i - \mathbf{s}_j\rvert|; \bm\theta)$ and $\mathbf{I}_n$ is the $n$-dimensional identity matrix. Then define, $\bm\Sigma_{\Tilde{Z}} \equiv \sigma^2_\varepsilon \mathbf{I}_n + \bm\Sigma_{W}$, to be the covariance matrix for $\Tilde{\mathbf{Z}}$. Further, let $\mathbf{c}(\mathbf{s}_0) \equiv (C_W(\lvert| \mathbf{s}_0 - \mathbf{s}_1 \rvert|; \bm\theta), ..., C_W(\lvert| \mathbf{s}_0 - \mathbf{s}_n \rvert|; \bm\theta))'$ be a vector of covariances between the prediction location, $\mathbf{s}_0$, and the observation locations, $\{\mathbf{s}_i: i = 1, ..., n\}$. Then, defining $\mathbf{X}$ to be a $n\times p$ matrix of covariates where the $i$-th row is $\mathbf{x}(\mathbf{s}_i)'$ for $i=1,...,n$, the mean of $[W(\mathbf{s}_0)\mid\Tilde{\mathbf{Z}}]$ is,
\begin{equation}
    E(W(\mathbf{s}_0)\mid\tilde{\mathbf{Z}}) = \mathbf{x}(\mathbf{s}_0)'\bm\beta - 0.5\sigma^2_W + \mathbf{c}(\mathbf{s}_0)'\bm\Sigma^{-1}_{\tilde{Z}}\left(\tilde{\mathbf{Z}} - E(\Tilde{\mathbf{Z}})\right)\!,\label{eqn:log_scale_predictive_mean}
\end{equation}
where recall $\sigma^2_W = \sigma^2_\eta + \sigma^2_\xi$, and $E(\Tilde{\mathbf{Z}}) = \mathbf{X}\bm\beta - 0.5(\sigma^2_W + \sigma^2_\varepsilon)\mathbf{1}_n$, with $\mathbf{1}_n$ being the $n$-dimensional vector of ones. Further, the log-scale predictive variance is,
\begin{equation}
    \mathrm{var}(W(\mathbf{s}_0)\mid \tilde{\mathbf{Z}}) = \sigma^2_W - \mathbf{c}(\mathbf{s}_0)'\bm\Sigma_{\tilde{Z}}^{-1}\mathbf{c}(\mathbf{s}_0).\label{eqn:log_scale_predictive_Var}
\end{equation}
Recognising that $[Y(\mathbf{s}_0)\mid\mathbf{Z}] = [\exp\{W(\mathbf{s}_0)\}\mid \tilde{\mathbf{Z}}]$, the predictive distribution on the original scale is log-Gaussian with parameters given by \eqref{eqn:log_scale_predictive_mean} and  \eqref{eqn:log_scale_predictive_Var}. 

\subsection{The OPD spatial predictor and its distribution}

Since $[Y(\mathbf{s}_0)\mid\mathbf{Z}]$ is log-Gaussian, the $(\lambda + 1)$-th fractional moment can be obtained analytically, as follows.
\begin{align*}
E(Y(\mathbf{s}_0)^{\lambda + 1}\mid\mathbf{Z}) &= E(\exp\{(\lambda + 1)W(\mathbf{s}_0)\}\mid\mathbf{Z})\\
&= \exp\!\left\{(\lambda + 1)E(W(\mathbf{s}_0)\mid \mathbf{Z}) + 0.5(\lambda + 1)^2\mathrm{var}(W(\mathbf{s}_0)\mid\mathbf{Z})\right\}.
\end{align*}
Therefore, the OPD spatial predictor given by \eqref{eqn:OPD_predictor} is,
\begin{equation}
    \delta_\lambda^*(\mathbf{Z};\mathbf{s}_0) = \exp\!\left\{E(W(\mathbf{s}_0)\mid \mathbf{Z}) + 0.5(\lambda + 1)\mathrm{var}(W(\mathbf{s}_0)\mid\mathbf{Z})\right\}.\label{eqn:log_normal_OPD_predictor}
\end{equation}
When $\lambda = -1$, the OPD spatial predictor is in fact the median of $[Y(\mathbf{s}_0)\mid\mathbf{Z}]$. Recall this case was treated separately in \eqref{eqn:OPD_predictor}, but here no separate derivation is needed. This is a further illustration that several different loss functions can result in the same predictor; the predictive median is also obtained from $L_{AEL}$ defined in Table \ref{table:common_loss_functions} (not only for log-Gaussian processes but more generally).

In addition to the predictive distribution being log-Gaussian, the predictor itself follows a log-Gaussian distribution, deriving its randomness from the data. From \eqref{eqn:log_scale_predictive_mean}-\eqref{eqn:log_normal_OPD_predictor},
\begin{align*}
\log(\delta_\lambda^*(\mathbf{Z};\mathbf{s}_0)) = \mathbf{x}(\mathbf{s}_0)'\bm\beta &+ 0.5\lambda\left(\sigma^2_W - \mathbf{c}(\mathbf{s}_0)' \bm\Sigma^{-1}_{\Tilde{Z}}\mathbf{c}(\mathbf{s}_0) \right) \\
&- 0.5\mathbf{c}(\mathbf{s}_0)' \bm\Sigma^{-1}_{\Tilde{Z}}\mathbf{c}(\mathbf{s}_0) + \mathbf{c}(\mathbf{s}_0)'\bm\Sigma_{\Tilde{Z}}^{-1}\left(\Tilde{\mathbf{Z}} - E(\Tilde{\mathbf{Z}})\right),
\end{align*}
where recall $\Tilde{\mathbf{Z}} = \log(\mathbf{Z})$, elementwise. Since $\mathbf{c}(\mathbf{s}_0)'\bm\Sigma_{\Tilde{Z}}^{-1}\left(\Tilde{\mathbf{Z}} - E(\Tilde{\mathbf{Z}})\right)$ has a Gaussian distribution with mean $0$ and variance $\mathbf{c}(\mathbf{s}_0)'\bm\Sigma_{\mathbf{Z}}^{-1}\mathbf{c}(\mathbf{s}_0)$, then $\log(\delta_\lambda^*(\mathbf{Z};\mathbf{s}_0))$ follows a Gaussian distribution with mean,
\begin{align}
 E\!\left\{\log(\delta_\lambda^*(\mathbf{Z}; \mathbf{s}_0))\right\} = \mathbf{x}(\mathbf{s}_0)'\bm\beta - 0.5\mathbf{c}(\mathbf{s}_0)' \bm\Sigma^{-1}_{\Tilde{Z}}\mathbf{c}(\mathbf{s}_0) + 0.5\lambda\left(\sigma^2_W - \mathbf{c}(\mathbf{s}_0)' \bm\Sigma^{-1}_{\Tilde{Z}}\mathbf{c}(\mathbf{s}_0) \right) ,\label{eqn:log_scale_mean_predictor}   
 \end{align}
 and variance,
 \begin{align}
\mathrm{var}\left\{\log(\delta_\lambda^*(\mathbf{Z}; \mathbf{s}_0))\right\} = \mathbf{c}(\mathbf{s}_0)'\bm\Sigma_{\Tilde{Z}}^{-1}\mathbf{c}(\mathbf{s}_0).\label{eqn:log_scale_variance_predictor}   
\end{align}
Hence $\delta_\lambda^*(\mathbf{Z}; \mathbf{s}_0)$ is log-Gaussian with parameters \eqref{eqn:log_scale_mean_predictor} and \eqref{eqn:log_scale_variance_predictor}. Therefore, the mean of $\delta_\lambda^*(\mathbf{Z};\mathbf{s}_0)$ is,
\begin{equation}
    E(\delta_\lambda^*(\mathbf{Z}; \mathbf{s}_0)) = \exp\!\left\{\mathbf{x}(\mathbf{s}_0)'\bm\beta + 0.5\lambda\left(\sigma^2_W - \mathbf{c}(\mathbf{s}_0)'\bm\Sigma^{-1}_{\Tilde{Z}}\mathbf{c}(\mathbf{s}_0)\right)\right\},\label{eqn:mean_log_normal_OPD_predictor}
\end{equation}
and its variance is,
\begin{align}
    \mathrm{var}(\delta_\lambda^*(\mathbf{Z}; \mathbf{s}_0)) &= \exp\!\left\{2\mathbf{x}(\mathbf{s}_0)'\bm\beta + \lambda\!\left(\sigma^2_W - \mathbf{c}(\mathbf{s}_0)'\bm\Sigma^{-1}_{\Tilde{Z}}\mathbf{c}(\mathbf{s}_0)\right)\right\} \left(\exp\!\left\{\mathbf{c}(\mathbf{s}_0)' \bm\Sigma^{-1}_{\Tilde{Z}}\mathbf{c}(\mathbf{s}_0)\right\}-1\right).\label{eqn:variance_log_normal_OPD_predictor}
\end{align}

\subsection{Statistical inference}\label{sec:lognormal_inference}

A spatial predictor of $Y(\mathbf{s}_0)$ has statistical properties that are important for inference. These are its bias, its MSPE, and the minimised ELP and ELJ, which, for the hierarchical log-Gaussian model given by \eqref{eqn:log_scale_predictive_mean} and \eqref{eqn:log_scale_predictive_Var}, can be derived from \eqref{eqn:mean_log_normal_OPD_predictor} and \eqref{eqn:variance_log_normal_OPD_predictor}.

Recall that bias is defined as, $Bias(\delta_\lambda^*; \mathbf{s}_0) \equiv E\{\delta_\lambda^*(\mathbf{Z};\mathbf{s}_0) - Y(\mathbf{s}_0)\},$ where the expectation is taken with respect to the joint distribution, $[Y(\mathbf{s}_0), \mathbf{Z}]$. Since $E(Y(\mathbf{s}_0)) = \exp\{\mathbf{x}(\mathbf{s}_0)'\bm\beta\}$, it follows that 
\begin{align}
    Bias(\delta_\lambda^*; \mathbf{s}_0) &= \exp\left\{\mathbf{x}(\mathbf{s}_0)'\bm\beta\right\}\left(\exp\left\{0.5\lambda\left(\sigma^2_W - \mathbf{c}(\mathbf{s}_0)'\bm\Sigma^{-1}_{\Tilde{Z}}\mathbf{c}(\mathbf{s}_0)\right)\right\}-1\right).\label{eqn:bias_lognormal}
\end{align} 
The variance of the predictor is \eqref{eqn:variance_log_normal_OPD_predictor}, but for prediction we need its MSPE for inference. Applying the definition of the MSPE, 
\begin{align*}
MSPE(\delta_\lambda^*; \mathbf{s}_0) &\equiv E((\delta_\lambda^*(\mathbf{Z}; \mathbf{s}_0) - Y(\mathbf{s}_0))^2) = \mathrm{var}(\delta_\lambda^*(\mathbf{Z}; \mathbf{s}_0) - Y(\mathbf{s}_0)) + Bias(\delta_\lambda^*(\mathbf{Z}; \mathbf{s}_0))^2\\
&= \mathrm{var}(\delta_\lambda^*(\mathbf{Z}; \mathbf{s}_0)) + \mathrm{var}(Y(\mathbf{s}_0)) - 2\mathrm{cov}(Y(\mathbf{s}_0), \delta_\lambda^*(\mathbf{Z}; \mathbf{s}_0)) + Bias(\delta_\lambda^*(\mathbf{Z}; \mathbf{s}_0))^2,
\end{align*}
where $\mathrm{var}(Y(\mathbf{s}_0)) = \exp\{2\mathbf{x}(\mathbf{s}_0)'\bm\beta\}(\exp\{\sigma^2_W\}-1)$; $\mathrm{var}(\delta_\lambda^*(\mathbf{Z}; \mathbf{s}_0))$ is given by \eqref{eqn:variance_log_normal_OPD_predictor}; $Bias(\delta_\lambda^*; \mathbf{s}_0)$ is given by \eqref{eqn:bias_lognormal}; and $\mathrm{cov}(Y(\mathbf{s}_0), \delta_\lambda^*(\mathbf{Z}; \mathbf{s}_0))$ can be obtained by some simple algebra for log-Gaussian random variables. From the definition of covariance,
$$
\mathrm{cov}(Y(\mathbf{s}_0), \delta_\lambda^*(\mathbf{Z}; \mathbf{s}_0)) \equiv E(\delta_\lambda^*(\mathbf{Z}; \mathbf{s}_0) \cdot Y(\mathbf{s}_0)) - E(\delta_\lambda^*(\mathbf{Z}; \mathbf{s}_0)) \cdot E(Y(\mathbf{s}_0)),
$$
where $E(Y(\mathbf{s}_0)) = \exp\{\mathbf{x}(\mathbf{s}_0)'\bm\beta\}$, and $E(\delta_\lambda^*(\mathbf{Z};\mathbf{s}_0))$ is given by \eqref{eqn:mean_log_normal_OPD_predictor}. The remaining term, the expectation $E(\delta_\lambda^*(\mathbf{Z}; \mathbf{s}_0) \cdot Y(\mathbf{s}_0))$, can be calculated by noting that $\delta_\lambda^*(\mathbf{Z}; \mathbf{s}_0) \cdot Y(\mathbf{s}_0) = \exp\{\log(\delta_\lambda^*(\mathbf{Z}; \mathbf{s}_0)) + W(\mathbf{s}_0)\}$, where $W(\mathbf{s}_0)$ is Gaussian by definition, and we previously established that $\log(\delta_\lambda^*(\mathbf{Z}; \mathbf{s}_0))$ follows a Gaussian distribution with parameters given by \eqref{eqn:log_scale_mean_predictor} and \eqref{eqn:log_scale_variance_predictor}. After straightforward algebra to compute the mean and variance of $\log(\delta_\lambda^*(\mathbf{Z}; \mathbf{s}_0)) + W(\mathbf{s}_0)$, we obtain,
$$
E(\delta_\lambda^*(\mathbf{Z}; \mathbf{s}_0) \cdot Y(\mathbf{s}_0)) = \exp\{2\mathbf{x}(\mathbf{s}_0)'\bm\beta + \mathbf{c}(\mathbf{s}_0)'\bm\Sigma_{\Tilde{Z}}^{-1}\mathbf{c}(\mathbf{s}_0) + 0.5\lambda(\sigma^2_W - \mathbf{c}(\mathbf{s}_0)'\bm\Sigma^{-1}_{\Tilde{Z}}\mathbf{c}(\mathbf{s}_0))\}.
$$
Then, 
\begin{align}
    &\mathrm{cov}(\delta_\lambda^*(\mathbf{Z}; \mathbf{s}_0), Y(\mathbf{s}_0)) \nonumber\\
    &~~~~~~~~= \exp\!\left\{2\mathbf{x}(\mathbf{s}_0)'\bm\beta + 0.5\lambda(\sigma^2_W - \mathbf{c}(\mathbf{s}_0)'\bm\Sigma^{-1}_{\Tilde{Z}}\mathbf{c}(\mathbf{s}_0))\right\}\!\left(\exp\left\{\mathbf{c}(\mathbf{s}_0)'\bm\Sigma_{\Tilde{Z}}^{-1}\mathbf{c}(\mathbf{s}_0)\right\}- 1\right).\label{eqn:covariance_predictor_with_predictand}
\end{align}
Finally, combining \eqref{eqn:variance_log_normal_OPD_predictor}, \eqref{eqn:bias_lognormal}, and \eqref{eqn:covariance_predictor_with_predictand} in the expression for MSPE, we obtain
\begin{align}
   &MSPE(\delta_\lambda^*; \mathbf{s}_0)\nonumber\\
   &=\exp\left\{2\mathbf{x}(\mathbf{s}_0)'\bm\beta\right\}\Big(\exp\{\sigma^2_W\} - 2\exp\!\left\{\mathbf{c}(\mathbf{s}_0)'\bm{\Sigma}_{\Tilde{Z}}^{-1}\mathbf{c}(\mathbf{s}_0)) + 0.5\lambda(\sigma^2_W - \mathbf{c}(\mathbf{s}_0)'\bm{\Sigma}_{\Tilde{Z}}^{-1}\mathbf{c}(\mathbf{s}_0))\right\}\nonumber\\
   &~~~~~~~~~~~~~~~~~~~~~~~~~~~~~~~~~~~~~~~~+ \exp\!\left\{\mathbf{c}(\mathbf{s}_0)'\bm{\Sigma}_{\Tilde{Z}}^{-1}\mathbf{c}(\mathbf{s}_0)) + \lambda(\sigma^2_W - \mathbf{c}(\mathbf{s}_0)'\bm{\Sigma}_{\Tilde{Z}}^{-1}\mathbf{c}(\mathbf{s}_0))\right\}\Big).\label{eqn:lognormal_MSPE}
\end{align}
For the minimised ELP, we obtain for $\lambda \neq 0$,
\begin{align}
    &ELP_\lambda^*(\delta_\lambda^*; \mathbf{Z};\mathbf{s}_0) = \lambda^{-1}\exp\left\{\mathbf{x}(\mathbf{s}_0)'\bm\beta  - 0.5\mathbf{c}(\mathbf{s}_0)'\bm\Sigma^{-1}_{\Tilde{Z}}\mathbf{c}(\mathbf{s}_0) + \mathbf{c}(\mathbf{s}_0)'\bm\Sigma_{\Tilde{Z}}^{-1}(\Tilde{\mathbf{Z}} - E(\Tilde{\mathbf{Z}}))\right\} \nonumber\\
    &~~~~~~~~~~~~~~~~~\times\left(\exp\left\{0.5\lambda\left(\sigma^2_W - \mathbf{c}(\mathbf{s}_0)'\bm\Sigma_{\Tilde{Z}}^{-1}\mathbf{c}(\mathbf{s}_0)\right)\right\} - 1\right)\label{eqn:lognormal_ELP}
\end{align}
and the minimised ELJ is,
\begin{equation}
    ELJ_\lambda^*(\delta_\lambda^*; \mathbf{s}_0)=\lambda^{-1}\exp\left\{\mathbf{x}(\mathbf{s}_0)'\bm\beta\right\}\left(\exp\left\{0.5\lambda\left(\sigma^2_W - \mathbf{c}(\mathbf{s}_0)'\bm\Sigma^{-1}_{\Tilde{Z}}\mathbf{c}(\mathbf{s}_0)\right)\right\}-1\right).\label{eqn:lognormal_ELJ}
\end{equation}
For $\lambda = 0$, the minimised ELP is given by taking the limit of \eqref{eqn:lognormal_ELP} as $\lambda \rightarrow 0$. Hence, we obtain,
\begin{align}
&ELP_0^*(\delta_0^*; \mathbf{Z}; \mathbf{s}_0)\label{eqn:ELP_lognormal_0}\\
&=0.5\left(\sigma^2_W - \mathbf{c}(\mathbf{s}_0)' \bm\Sigma^{-1}_{\Tilde{Z}}\mathbf{c}(\mathbf{s}_0)\right)\exp\!\left\{\mathbf{x}(\mathbf{s}_0)'\bm\beta - 0.5\mathbf{c}(\mathbf{s}_0)' \bm\Sigma^{-1}_{\Tilde{Z}}\mathbf{c}(\mathbf{s}_0) + \mathbf{c}(\mathbf{s}_0)' \bm\Sigma^{-1}_{\Tilde{Z}}\!\left(\Tilde{\mathbf{Z}} - E(\Tilde{\mathbf{Z}})\right)\right\}\!.\nonumber
\end{align}
The minimised ELJ is the expectation of \eqref{eqn:ELP_lognormal_0}. After recognising that $\log(ELP_0^*(\delta_0^*; \mathbf{Z}; \mathbf{s}_0))$ follows a Gaussian distribution with mean,
$$
E\{\log(ELP_0^*(\delta_0^*; \mathbf{Z}; \mathbf{s}_0))\} = \log(0.5) +\log(\sigma^2_W - \mathbf{c}(\mathbf{s}_0)' \bm\Sigma^{-1}_{\Tilde{Z}}\mathbf{c}(\mathbf{s}_0)) + \mathbf{x}(\mathbf{s}_0)'\bm\beta - 0.5\mathbf{c}(\mathbf{s}_0)' \bm\Sigma^{-1}_{\Tilde{Z}}\mathbf{c}(\mathbf{s}_0)$$ 
and variance,
$$
\mathrm{var}\{\log(ELP_0^*(\delta_0^*; \mathbf{Z}; \mathbf{s}_0))\} = \mathbf{c}(\mathbf{s}_0)' \bm\Sigma^{-1}_{\Tilde{Z}}\mathbf{c}(\mathbf{s}_0),
$$ 
we obtain,
\begin{equation}
ELJ_0^*(\delta_0^*; \mathbf{s}_0) = 0.5\left(\sigma^2_W - \mathbf{c}(\mathbf{s}_0)' \bm\Sigma^{-1}_{\Tilde{Z}}\mathbf{c}(\mathbf{s}_0)\right)\exp\!\left\{\mathbf{x}(\mathbf{s}_0)'\bm\beta\right\}.\label{eqn:ELJ_lognormal_0}
\end{equation}

\subsection{Calibrating the OPD predictor to target a quantile}\label{sec:calibration}

The OPD spatial predictor and its bias are monotone increasing functions of $\lambda$. Here, for the OPD spatial predictor of the log-Gaussian hierarchical model, we illustrate how $\lambda$ can be calibrated to predict the $q$-th quantile of $[Y(\mathbf{s}_0)\mid\mathbf{Z}]$, for $q \in (0, 1)$. Let $F^{-1}(q; \mathbf{Z}; \mathbf{s}_0)$ be the log-Gaussian quantile function,
\begin{equation}
F^{-1}(q; \mathbf{Z}; \mathbf{s}_0) = \exp\!\left\{E(W(\mathbf{s}_0)\mid\Tilde{\mathbf{Z}}) + \sqrt{\mathrm{var}(W(\mathbf{s}_0)\mid\Tilde{\mathbf{Z}})} \cdot \Phi^{-1}\left(q\right)\right\},\label{eqn:quantile_function_lognormal}
\end{equation}
where $\Phi^{-1}(\cdot)$ is the quantile function of the standard Gaussian distribution. Then, setting \eqref{eqn:quantile_function_lognormal} equal to $\delta_\lambda^*(\mathbf{Z};\mathbf{s}_0)$ given by \eqref{eqn:log_normal_OPD_predictor} and solving for $\lambda$ yields the calibrated value,
\begin{align}
    \lambda^*_q(\mathbf{Z};\mathbf{s}_0) &= \frac{2\Phi^{-1}\left(q \right)}{\sqrt{\mathrm{var}(W(\mathbf{s}_0)\mid\Tilde{\mathbf{Z}})}} - 1.\label{eqn:conditional_calibrated_lambda}
\end{align}
When the median is desired, $q = 0.5$ and \eqref{eqn:conditional_calibrated_lambda} is $\lambda_{0.5}^*(\mathbf{s}_0) =-1$, which aligns with the earlier remark in this section that $\delta_{-1}^*(\mathbf{Z}; \mathbf{s}_0) = \exp\{E(W(\mathbf{s}_0)\mid\Tilde{\mathbf{Z}})\}$ predicts the median of $[Y(\mathbf{s}_0)\mid\mathbf{Z}]$.

Notice that the calibrated value of $\lambda$ from \eqref{eqn:conditional_calibrated_lambda} depends only on $q$ and the log-scale predictive variance at the prediction location. Because $\mathrm{var}(W(\mathbf{s}_0)\mid\Tilde{\mathbf{Z}})$ generally depends on $\mathbf{s}_0$, this calibrated $\lambda$ varies according to prediction location. The result is an adaptive spatial map, which can also adapt to the user's specifying $q$ to depend on $\mathbf{s}_0$. 

In theory, it should be possible to calibrate $\lambda$ to a specific quantile of the predictive distribution whenever the OPD spatial predictor and the quantile function of the predictive distribution are both available in closed form. Otherwise, standard root-finding algorithms could be used to solve for $\lambda$ in the equation, $\delta_\lambda^*(\mathbf{Z};\mathbf{s}_0) - F^{-1}(q; \mathbf{Z}; \mathbf{s}_0) = 0$, where both the predictor and quantile may have to be estimated from samples from the predictive distribution (with sufficiently high precision).

\section{Prediction intervals}\label{sec:prediction_intervals}

Prediction uncertainty for spatial prediction of $Y(\mathbf{s}_0)$ can be expressed through a single value like MSPE, but it is also common to provide a prediction interval at a pre-specified level of $100\times(1-\alpha)\%$, where, for example, $\alpha = 0.10, ~0.05, ~0.025$, etc.  Here we derive prediction intervals from a generic loss function with no assumptions made about $[Y(\mathbf{s}_0), \mathbf{Z}]$. First consider the unconditional prediction interval. Let $K^{(\alpha)}(\mathbf{s}_0) > 0$ be a `cut-off' with the same units as the loss function, and let $\alpha \in (0, 1)$ be a given level. Then, the $100\times(1-\alpha)\%$ unconditional prediction interval is defined by the equation,
\begin{equation}
    \mathrm{Pr}(L(\delta(\mathbf{Z}; \mathbf{s}_0), Y(\mathbf{s}_0)) \leq K^{(\alpha)}(\mathbf{s}_0)) = 1-\alpha.\label{eqn:general_unconditional_prediction_interval}
\end{equation}
The underlying probability measure is that of $[Y(\mathbf{s}_0), \mathbf{Z}]$, and the $100\times(1-\alpha)\%$ unconditional prediction interval is the set, 
$$
\{Y(\mathbf{s}_0): L(\delta(\mathbf{Z}; \mathbf{s}_0), Y(\mathbf{s}_0)) \leq K^{(\alpha)}(\mathbf{s}_0)\}.
$$
A $100\times(1-\alpha)\%$ conditional prediction interval can be formed by instead defining the cut-off as $C^{(\alpha)}(\mathbf{Z};\mathbf{s}_0) > 0$ that solves the equation,
\begin{equation}
\mathrm{Pr}(L(\delta(\mathbf{Z}; \mathbf{s}_0), Y(\mathbf{s}_0)) \leq C^{(\alpha)}(\mathbf{Z}; \mathbf{s}_0) \mid \mathbf{Z}) = 1-\alpha,\label{eqn:general_conditional_prediction_interval}
\end{equation}
where the underlying probability measure is that of $[Y(\mathbf{s}_0)\mid\mathbf{Z}]$, and the $100\times(1-\alpha)\%$ conditional prediction interval is the set,
$$
\{Y(\mathbf{s}_0): L(\delta(\mathbf{Z}; \mathbf{s}_0), Y(\mathbf{s}_0)) \leq C^{(\alpha)}(\mathbf{Z};\mathbf{s}_0)\}.
$$
The distinction between conditional and unconditional prediction intervals is not often made. 

If we now assume that $[Y(\cdot)]$ and $[\mathbf{Z} \mid Y(\cdot)]$ are Gaussian, and optimal spatial prediction is performed under SEL, the unconditional and conditional prediction intervals are identical, as we now show. Recall the squared-error loss function (Table \ref{table:common_loss_functions}), and let $Y(\cdot)$ be a Gaussian spatial process with noisy Gaussian spatial data $\mathbf{Z}$ at locations $\mathbf{s}_1, ..., \mathbf{s}_n$, so that $[Z(\mathbf{s}_i)\mid Y(\mathbf{s}_i)]$ is Gaussian with mean $Y(\mathbf{s}_i)$ and variance $\sigma^2_\varepsilon$, for $i = 1, ..., n$. Then the predictive distribution, $[Y(\mathbf{s}_0)\mid\mathbf{Z}]$, is also Gaussian, and an unconditional prediction interval can be derived from the probability statement in \eqref{eqn:general_unconditional_prediction_interval}:
\begin{align*}
\mathrm{Pr}(L_{SEL}(\delta^*_{SEL}(\mathbf{Z}; \mathbf{s}_0), Y(\mathbf{s}_0)) \leq K^{(\alpha)}(\mathbf{s}_0)) &= \mathrm{Pr}((Y(\mathbf{s}_0)-E(Y(\mathbf{s}_0)\mid \mathbf{Z}))^2 \leq K^{(\alpha)}(\mathbf{s}_0)), \\
&= \mathrm{Pr}\!\left(\frac{(Y(\mathbf{s}_0)-E(Y(\mathbf{s}_0)\mid \mathbf{Z}))^2}{\mathrm{var}(Y(\mathbf{s}_0) \mid \mathbf{Z})} \leq \kappa^{(\alpha)}\right),\\
&= 1- \alpha,
\end{align*}
where $\kappa^{(\alpha)} = K^{(\alpha)}(\mathbf{s}_0)/\mathrm{var}(Y(\mathbf{s}_0) \mid\mathbf{Z})$. In this case, since $Y(\mathbf{s}_0)$ is Gaussian, $(Y(\mathbf{s}_0) - E(Y(\mathbf{s}_0) \mid\mathbf{Z}))^2/\mathrm{var}(Y(\mathbf{s}_0) \mid\mathbf{Z}) \sim \chi^2_1$, and the cut-off $\kappa^{(\alpha)}$ is simply the $(1-\alpha)$ quantile of $\chi_1^2$, the chi-squared distribution on one degree of freedom. Importantly, although $\mathrm{var}(Y(\mathbf{s}_0)\mid\mathbf{Z})$ is notated in a way that suggests it depends on the data, in reality it does not because of the Gaussian assumption. Hence, deriving a prediction interval from the unconditional probability statement in \eqref{eqn:general_unconditional_prediction_interval} and the conditional probability statement in \eqref{eqn:general_conditional_prediction_interval} will yield the same result. Now, for $\alpha = 0.05$ (a common choice), $\kappa^{(0.05)} \simeq (1.96)^2$. Therefore, the lower and upper bounds of the $95\%$ prediction interval for $Y(\mathbf{s}_0)$ under SEL are given by the well-known prediction-interval formula, 
$$
E(Y(\mathbf{s}_0)\mid\mathbf{Z}) \pm 1.96\sqrt{\mathrm{var}(Y(\mathbf{s}_0)\mid\mathbf{Z})}.
$$
However, if anything about this scenario changes, such as the choice of loss function, the conditional and unconditional prediction intervals will quite likely be different.

\subsection{Conditional and unconditional prediction intervals under PDL}\label{sec:cond_uncond_pred_ints}

In our spatial hierarchical statistical model, both the process model and data model are positive-valued processes (hence non-Gaussian), and we are using PDL instead of SEL. Therefore, we expect there will be a difference between conditional and unconditional $100\times(1-\alpha)\%$ prediction intervals. 

\cite{Cressie2022} presented the unconditional $100\times(1-\alpha)\%$ prediction interval from PDL, whose cut-off $K_\lambda^{(\alpha)}(\mathbf{s}_0) > 0$ is defined by the equation,
$$
\mathrm{Pr}(L_{PDL,\lambda}(\delta_\lambda^*(\mathbf{Z}; \mathbf{s}_0), Y(\mathbf{s}_0)) \leq K_\lambda^{(\alpha)}(\mathbf{s}_0)) = 1 - \alpha.
$$
In general, $K_\lambda^{(\alpha)}(\mathbf{s}_0)$ can be obtained by Monte Carlo simulation. Recall the underlying probability measure is $[Y(\mathbf{s}_0), \mathbf{Z}] = \int [\mathbf{Z}\mid\mathbf{Y}][Y(\mathbf{s}_0), \mathbf{Y}]~\mathrm{d}\mathbf{Y}$, for $\mathbf{s}_0 \not\in \{\mathbf{s}_1, ..., \mathbf{s}_n\}$. Let $\{(y_0^{(m)}, \mathbf{y}^{(m)}): m = 1, ..., M\}$ be a sample from $[Y(\mathbf{s}_0), \mathbf{Y}]$, and let $\mathbf{z}^{(m)}$ be a set of spatial data simulated from the data model $[\mathbf{Z} \mid \mathbf{y}^{(m)}]$, given the $m$-th simulated process values. Keeping only the $(n+1)$-tuples, $\{(y_0^{(m)}, \mathbf{z}^{(m)}): m = 1, ..., M\}$, is the computational equivalent of integrating out $\mathbf{Y}$. Then, losses $\{L_{PDL,\lambda}(\delta_\lambda^*(\mathbf{z}^{(m)}; \mathbf{s}_0), y_0^{(m)}): m = 1, ..., M\}$ can be computed from the samples, and $K_\lambda^{(\alpha)}(\mathbf{s}_0)$ can be approximated as the $(1-\alpha)$-th empirical quantile of the $M$ computed losses, resulting in the $100\times(1-\alpha)\%$ unconditional prediction interval:
\begin{footnotesize}
    \begin{equation}
    \left\{Y(\mathbf{s}_0): Y(\mathbf{s}_0)^{\lambda + 1} - (\lambda + 1)\delta_\lambda^*(\mathbf{Z}; \mathbf{s}_0)^\lambda Y(\mathbf{s}_0) - \lambda\delta_\lambda^*(\mathbf{Z};\mathbf{s}_0)^\lambda\!\left((\lambda + 1)K_\lambda^{(\alpha)}(\mathbf{s}_0) - \delta_\lambda^*(\mathbf{Z}; \mathbf{s}_0)\right) \leq 0\right\}.\label{eqn:unconditional_prediction_interval}
\end{equation}
\end{footnotesize}

\noindent This is a set on the non-negative real line that can be obtained numerically by straightforward root-finding; see below.

To obtain the $100\times(1-\alpha)\%$ conditional prediction interval from PDL, the conditional cut-off $C_\lambda^{(\alpha)}(\mathbf{Z}; \mathbf{s}_0) > 0$ is defined by the equation,
$$
\mathrm{Pr}(L_{PDL,\lambda}(\delta_\lambda^*(\mathbf{Z}; \mathbf{s}_0), Y(\mathbf{s}_0)) \leq C_\lambda^{(\alpha)}(\mathbf{Z}; \mathbf{s}_0)\mid\mathbf{Z}) = 1 - \alpha.
$$
The simulations required to obtain $C_\lambda^{(\alpha)}(\mathbf{Z};\mathbf{s}_0)$ are slightly different from those required to obtain $K_\lambda^{(\alpha)}(\mathbf{s}_0)$. The underlying probability measure is the predictive distribution, $[Y(\mathbf{s}_0)\mid\mathbf{Z}]$, from which $M$ samples $\{y_0^{(m)}: m = 1, ..., M\}$ are needed. In general, this will most likely require MCMC computation because $[Y(\mathbf{s}_0)\mid\mathbf{Z}] = \int [Y(\mathbf{s}_0), \mathbf{Y}\mid\mathbf{Z}]~\mathrm{d}\mathbf{Y}$ is not usually available in closed form. From the $M$ samples, compute the corresponding losses, $\{L_{PDL,\lambda}(\delta_\lambda^*(\mathbf{Z}; \mathbf{s}_0), y_0^{(m)}): m = 1, ..., M\}$ for a given $\lambda$, and approximate $C_\lambda^{(\alpha)}(\mathbf{Z};\mathbf{s}_0)$ as the $(1-\alpha)$-th empirical quantile of the $M$ computed losses. Then, the conditional $100\times(1-\alpha)\%$ prediction interval can be obtained numerically by replacing $K_\lambda^{(\alpha)}(\mathbf{s}_0)$ with $C_\lambda^{(\alpha)}(\mathbf{Z};\mathbf{s}_0)$ in  \eqref{eqn:unconditional_prediction_interval}.

When $\lambda + 1$ is a positive integer, the left-hand side of the inequality in \eqref{eqn:unconditional_prediction_interval} is a polynomial in $Y(\mathbf{s}_0)$, resulting in analytical solutions for the lower and upper bound of the $100\times(1-\alpha)\%$ prediction interval when $\lambda \in \{1, 2, 3\}$ (see Sections \ref{sec:appendix_lambda1}-\ref{sec:appendix_lambda3} of the Online Supplement).  However, for general $\lambda$, the equation
\begin{equation}
Y(\mathbf{s}_0)^{\lambda + 1} - (\lambda + 1)\delta_\lambda^*(\mathbf{Z}; \mathbf{s}_0)^\lambda Y(\mathbf{s}_0) - \lambda\delta_\lambda^*(\mathbf{Z};\mathbf{s}_0)^\lambda\!\left((\lambda + 1)K_\lambda^{(\alpha)} - \delta_\lambda^*(\mathbf{Z}; \mathbf{s}_0)\right) = 0,\label{eqn:solve_to_find_interval}
\end{equation}
can be solved numerically to retrieve the lower ($l_\lambda$) and upper ($u_\lambda$) bounds of the prediction interval defined by \eqref{eqn:unconditional_prediction_interval}. Section \ref{sec:existence_appendix} of the Online Supplement contains a proof of the existence and uniqueness of a solution to \eqref{eqn:solve_to_find_interval} in the interval $(\delta_\lambda^*(\mathbf{Z}; \mathbf{s}_0), \infty)$, which then guarantees the existence of a one-sided or two-sided $100\times (1-\alpha)\%$ prediction interval. Section \ref{sec:prediction_interval_algorithm_appendix} of the Online Supplement contains a simple algorithm for solving \eqref{eqn:solve_to_find_interval} and therefore for finding $(l_\lambda, u_\lambda)$. The intervals are easy to visualise by plotting $L_{PDL,\lambda}(\delta_\lambda^*(\mathbf{Z}; \mathbf{s}_0), y)$ as a function of $y$ over a suitably large domain and observing where it intersects with the horizontal line at $K_\lambda^{(\alpha)}(\mathbf{s}_0)$ (or $C_\lambda^{(\alpha)}(\mathbf{Z}; \mathbf{s}_0)$).

\subsection{Coverage and interval widths}\label{sec:coverage}

The quality of a prediction interval can be assessed in many ways. The obvious method is to check its coverage. In theory, the $100\times(1-\alpha)\%$ prediction interval should cover $100\times(1-\alpha)\%$ of realisations of $Y(\mathbf{s}_0)$. The prediction interval \eqref{eqn:unconditional_prediction_interval} should satisfy this condition by definition, but several factors, including model specification and accuracy of the approximation of $K^{(\alpha)}_\lambda(\mathbf{s}_0)$ and even of $\delta_\lambda^*(\mathbf{Z};\mathbf{s}_0)$, can cause a breakdown in coverage. 

In empirical settings, the coverage of $100\times(1-\alpha)\%$ prediction intervals is calculated, for example, in leave-one-out cross-validation (LOOCV) as follows: one data point $Z(\mathbf{s}_i)$ is dropped from the spatial data resulting in $\mathbf{Z}_{-i} \equiv (Z(\mathbf{s}_1), ..., Z(\mathbf{s}_{i-1}), Z(\mathbf{s}_{i+1}), ..., Z(\mathbf{s}_n))'$, and then the prediction interval $\mathcal{I}_\alpha(\mathbf{s}_i; \mathbf{Z}_{-i})$ is obtained from the predictive distribution $[Z(\mathbf{s}_i)\mid \mathbf{Z}_{-i}]$, where note that the prediction is of the datum $Z(\mathbf{s}_i)$ instead of the latent process value $Y(\mathbf{s}_i)$. The empirical LOOCV coverage is then calculated as $n^{-1}\sum_{i=1}^n \mathbb{I}(Z(\mathbf{s}_{i}) \in \mathcal{I}_{\alpha}(\mathbf{s}_{i}; \mathbf{Z}_{-i}))$, where recall $\mathbb{I}(\cdot)$ is the indicator function. 

In Section \ref{sec:Meuse}, the predictive distribution $[Y(\mathbf{s}_i)\mid \mathbf{Z}_{-i}]$ and prediction interval for $Y(\mathbf{s}_i)$ are used instead of $[Z(\mathbf{s}_i)\mid \mathbf{Z}_{-i}]$ and the prediction interval for $Z(\mathbf{s}_i)$. If the measurement-error variance in the data is large, the coverage of an interval for $Y(\mathbf{s}_i)$ will be less than the nominal level. However, in Section \ref{sec:Meuse}, the measurement-error variance in the data is small and the coverages are accurate. 

The width of a prediction interval and the interval score \cite[]{Gneiting2007} are useful summary statistics. For example, \cite{Cressie2022} used the widths of the $100\times(1-\alpha)\%$ unconditional prediction interval to select $\lambda$ for OPD spatial prediction. The interval score for a prediction interval is its width plus penalty terms when realised values of the predicted quantity fall outside of the prediction interval. Note that the interval score is designed for `symmetric' prediction intervals in the sense that the probability below the lower-bound of the interval is the same as the probability above the upper-bound. Consequently, the interval score is generally not appropriate for the prediction intervals derived above. This leaves an open question of how to derive OPD prediction intervals that have equal probability below and above the lower and upper bounds, respectively.

\section{Application to mapping zinc in the soil of a Meuse River floodplain}\label{sec:Meuse}

The OPD spatial-prediction methodology discussed in the previous sections is now illustrated with an application to a dataset of zinc measurements in the soil of a floodplain of the Meuse River (hereafter referred to as `the Meuse River data') collected by \cite{Rikken1993}. There are 155 geolocated measurements of zinc and other heavy metals in the soil of a floodplain of the Meuse River immediately west of the town of Stein in the Netherlands (Fig. \ref{fig:map}). The dataset also contains a regular grid of 3,103 spatial-prediction locations over the study area. Every observation location and prediction location has a set of covariates associated with it, including distance from the nearest point on the Meuse River, flooding frequency, and soil type. The Meuse River data is a well known example dataset in the R package, \texttt{sp} \cite[]{spbook, spRnews}. 

\begin{figure}[!ht]
    \centering
    \includegraphics[width=\textwidth]{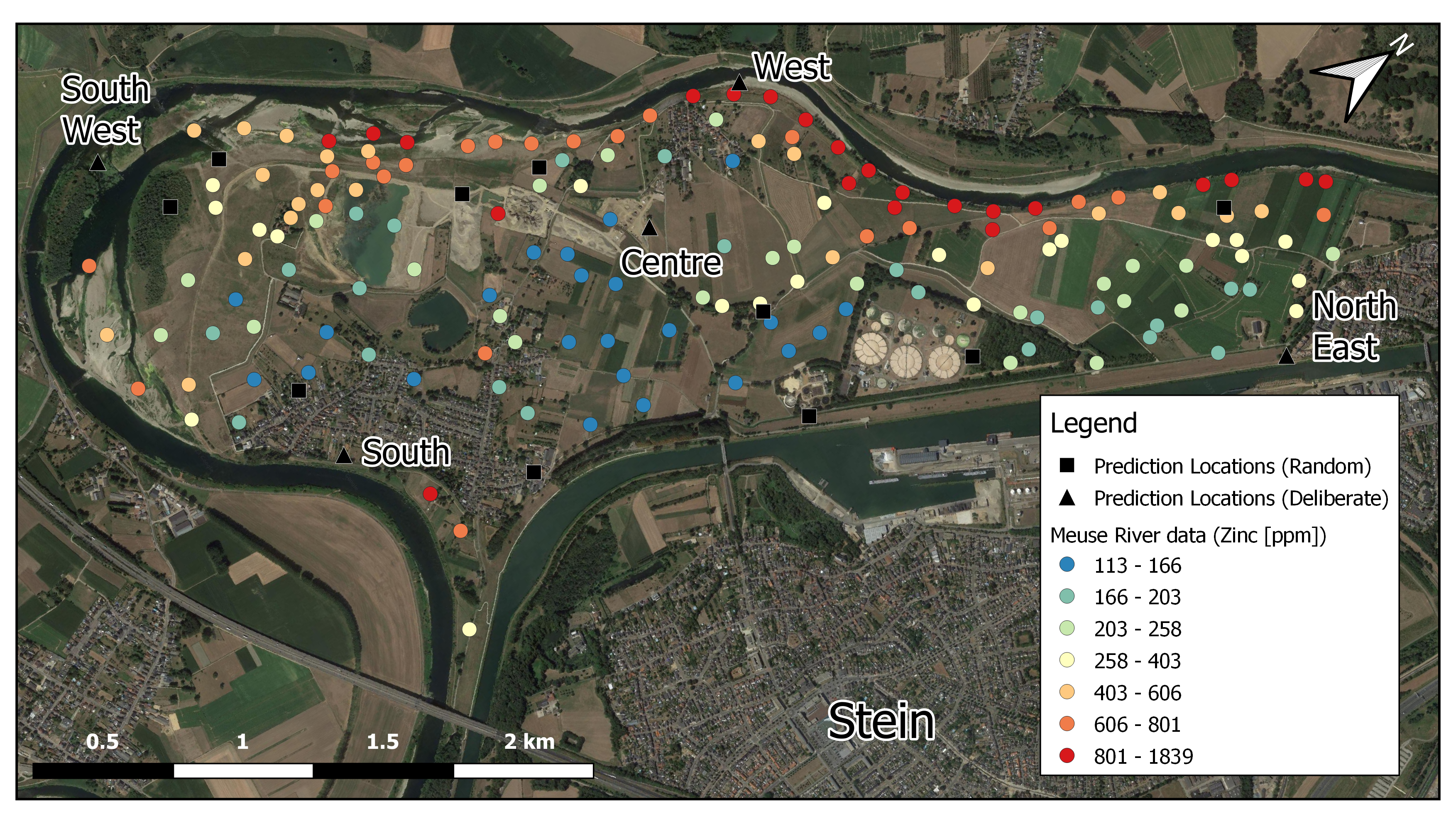}
    \caption{Map of the study area, west of the town of Stein, Netherlands. The arrow in the top-right corner points north. The round points are the observation locations, and the points are coloured according to the amount of zinc observed. A total of 15 prediction points are plotted on the map: 10 are randomly chosen prediction locations (black squares) and five are prediction locations specifically chosen to be on the boundary of the domain or, in one case, in the middle of the domain (black triangles). These 15 points are members of a much larger grid of prediction locations.}
    \label{fig:map}
\end{figure}

\cite{Cressie2022} chose 15 out of 3,103 prediction locations to use in special comparisons and analysis (e.g., to show curves of $ELJ_\lambda^*(\delta_\lambda^*; \mathbf{s}_0)$ and the widths of the 95\% unconditional prediction intervals, as functions of $\lambda$). The 15 points were chosen as follows: one was chosen to be in the middle of $D$ and four others were deliberately chosen near the boundaries. A further 10 points were selected at random. In this article, the same 15 points have been chosen for closer analysis (see Fig. \ref{fig:map} and its caption). Maps of the OPD spatial predictor and associated measures of uncertainty given in Section \ref{sec:meuse_examples} are plotted on the fine-resolution regular grid of all 3,103 prediction locations; see Section 6.3 below.

The rest of this section is organised as follows: Section \ref{sec:meuse_model} describes the spatial-statistical model for the Meuse River data on zinc concentrations (in ppm, parts per million). Then Section \ref{sec:estimation} describes how the parameters of the EHM in Section \ref{sec:meuse_model} were estimated. Finally, Section \ref{sec:meuse_examples} contains illustrations of the OPD-prediction methodology discussed in Sections \ref{sec:prediction}-\ref{sec:prediction_intervals}.

\subsection{Spatial statistical model}\label{sec:meuse_model}

The soil zinc measurements (in ppm) are modelled using the hierarchical statistical model outlined in Section \ref{sec:log_normal}. The locations $\mathbf{s}\in D$ are $(x,y)$-coordinate pairs of Eastings and Northings under a Netherlands map grid (European Petroleum Survey Group (EPSG) code 28992), and the spatial domain $D$ is the floodplain of the Meuse River as shown in Fig. \ref{fig:map}. The zinc measurements in the dataset are denoted $\mathbf{Z} \equiv (Z(\mathbf{s}_1), ..., Z(\mathbf{s}_{155}))'$, and the process $Y(\cdot) \equiv \{Y(\mathbf{s}): \mathbf{s} \in D\}$ represents the latent soil zinc concentrations (in ppm). Both take positive values.

The mean function of $Y(\cdot)$ was assumed to be a linear model with the same covariates used by \cite{Cressie2022}: distance to river, flooding frequency, soil type, and (standardised) x-coordinate. Flooding frequency and soil type are categorical covariates with three levels each. Therefore, $\mathbf{x}(\mathbf{s})$ and $\bm\beta$ are seven-dimensional vectors. 

An isotropic spherical covariance function was chosen to model the spatial dependency in $W(\cdot) \equiv \log(Y(\cdot))$. Let $h \equiv \lvert| \mathbf{h} \rvert|$ be the Euclidean distance between the two points $\mathbf{s} + \mathbf{h}$ and $ \mathbf{s} \in D$. Then, for the process $\eta(\cdot)$, which contains the smooth part of the spatial variation in $W(\cdot)$, we use the spherical covariance function,
\begin{equation}
C_\eta(h; \sigma^2_\eta, r) \equiv \begin{cases}
\sigma_\eta^2\left(1 - 1.5 h/r + 0.5(h/r)^3\right),& 0 < h \leq r\\
0, & h > r,
\end{cases}\label{eqn:spherical}
\end{equation}
where $r$ is the range parameter, and $\sigma^2_\eta \equiv C_\eta(0; \sigma^2_\eta, r)$ is the variance of $\eta(\mathbf{s})$ for any $\mathbf{s}\in D$. The process $W(\cdot)$ also contains a non-smooth microscale variation component that exists at infinitesimal separation distances, modelled by $\sigma_{\xi}^2$. The random vector $\bm\eta \equiv (\eta(\mathbf{s}_1), ..., \eta(\mathbf{s}_n))'$ has covariance matrix $\bm\Sigma_{\eta}$, where, for $h_{ij} \equiv \lvert|\mathbf{s}_i - \mathbf{s}_j\rvert|$, the $(i,j)$-th element is given by $C_\eta(h_{ij}; \sigma^2_\eta, r)$; and $\mathbf{W} \equiv (W(\mathbf{s}_1),...,W(\mathbf{s}_n))'$ has covariance matrix $\bm\Sigma_{W} = \bm\Sigma_\eta + \sigma_{\xi}^2 \mathbf{I}_n$, where $\mathbf{I}_n$ is the $n$-dimensional identity matrix. By extension, the spatial data $\Tilde{\mathbf{Z}} \equiv (\log(Z(\mathbf{s}_1)), ..., \log(Z(\mathbf{s}_n)))'$ has covariance matrix $\bm{\Sigma}_{\Tilde{Z}} = \bm\Sigma_{W} + \sigma^2_\varepsilon \mathbf{I}_n$. In general, $\sigma_{\xi}^2$
 and $\sigma_{\varepsilon}^2$ are hard to distinguish just using the data $\Tilde{\mathbf{Z}}$. They are both non-smooth components of variation that contribute additively to the marginal variance of the spatial data $\mathbf{Z}$, and they are not individually identifiable in the absence of replicated spatial measurements or further modelling assumptions. However, with the Meuse River data, we are in a position to apportion the variation correctly by using information from duplicated measurements \cite[][and see Section \ref{sec:estimation}]{Rikken1993}.
 
\cite{Cressie2022} used a power-normal (fourth-root-transformed) model to map the zinc concentrations, but the log-Gaussian model used here is valid and in line with the preposition of this article that power-transformed models are not Gaussian on the transformed scale (they are truncated Gaussian), and hence their inferences are only approximately valid. 

\subsection{Parameter estimation and `plug-in'}\label{sec:estimation}

The hierarchical model being used is an EHM, so fully Bayesian inference that includes inference on the parameters is not carried out. In an EHM, the parameters are estimated and then `plugged' into the model. The simplest way to do this for our model is to log-transform the zinc measurements and obtain parameter estimates on that scale; see the parameters that appear in $\eqref{eqn:w_process}$ and \eqref{eqn:data_model}. 

The first parameter to be estimated is the measurement-error variance. There are replicated measurements that are shown in a single figure of the Master's thesis where the Meuse River dataset originates \cite[][Fig. 11.1]{Rikken1993}. These allow for direct estimation of $\sigma^2_\varepsilon$. The exact values of the duplicate measurements were not published in the thesis, so the values were obtained directly from Fig. 11.1 using an image-analysis program \cite[ImageJ; ][]{ImageJ}.  \cite{Rikken1993} report there are 21 pairs of replicated measurements in total, but only 18 pairs could be distinguished during image analysis. The duplicate measurements are presented in Table \ref{table:duplicate_measurements}. Their spatial locations in $D$ are not known; as is seen below, these are not needed to estimate $\sigma^2_\varepsilon$.

\begin{table}[!ht]
    \centering

    \begin{tabular}{|c|c|c|}
    \hline
        Pair ($p$) & $Z_{p1}$ [ppm] & $Z_{p2}$ [ppm]\\
        \hline
        1 & 123 & 123 \\
        2 & 154 & 165 \\
        3& 175 & 165 \\
         4& 193 & 213 \\
         5&234 & 252 \\
         6&261 & 274 \\
         7&377 & 441 \\
         8&377 & 380 \\
         9&407 & 454 \\
         10&466 & 502 \\
         11&473 & 547 \\
         12&596 & 682 \\
         13&750 & 823 \\
         14&919 & 1155 \\
         15&1319 & 1463\\
         16&1542 & 1566\\
         17&1572 & 1524\\
         18&1671 & 1842\\
         \hline
    \end{tabular}
        \caption{Replicated measurements of soil zinc concentrations (in ppm) extracted from Fig. 11.1 in \cite{Rikken1993}, rounded to the nearest integer.}\label{table:duplicate_measurements}
\end{table}
Call the duplicates in Table \ref{table:duplicate_measurements},  $\{(Z_{p1}, Z_{p2}): p = 1, ..., P\}$, where here $P = 18$. Then from our spatial model,  
$$
\log(Z_{p1}) - \log(Z_{p2}) = \varepsilon_{p1} - \varepsilon_{p2},
$$
where $\{\varepsilon_{pj}: j = 1, 2\}$ are independent and identically distributed  Gaussian measurement errors with mean $-0.5\sigma^2_\varepsilon$ and variance $\sigma^2_\varepsilon$. Hence, $\log(Z_{p1}) - \log(Z_{p2})$ is Gaussian distributed with mean $0$ and variance $2\sigma^2_\varepsilon$, which does not involve the non-zero mean or spatial location of the duplicate measurements. The maximum-likelihood estimator for $\sigma_\varepsilon^2$ is,
\begin{align*}
\hat{\sigma}^2_\varepsilon &= \frac{1}{2P}\sum_{p=1}^{P}(\log(Z_{p1}) - \log(Z_{p2}))^2.
\end{align*}
This estimator is unbiased, efficient, and consistent. We calculated $\hat{\sigma}^2_\varepsilon = 0.0053$ (ppm$^2$), which indicates that only a small amount of measurement error is present relative to the microscale variation, $\sigma_\xi^2$, in the process: The estimate $\hat{\sigma}^2_\xi$ is obtained below, from which the measurement error, as a percentage of the non-smooth variation present in the data, is $100\times\hat{\sigma}^2_\varepsilon/(\hat{\sigma}^2_\varepsilon + \hat{\sigma}^2_{\xi})\% = 14.42\%$. 

There are two other sets of parameters: the linear-model coefficients, $\bm\beta$, and the covariance parameters, $\bm\theta\equiv(\sigma^2_{\eta}, r, \sigma^2_{\xi})'$. In the standard geostatistical workflow for estimating these \cite[e.g., see][Section 5]{Cressie2022}, least-squares estimators and semivariograms are used to estimate first $\bm\beta$ and then $\bm\theta$. 

Here we augment the standard geostatistical parameter-estimation workflow to obtain a more efficient estimate of $\bm\beta$. First, let $\mathbf{X} \equiv (\mathbf{x}(\mathbf{s}_1), ..., \mathbf{x}(\mathbf{s}_{n}))'$ be the $n\times p$ matrix of covariates. Then, per \eqref{eqn:data_model}, the natural logarithm of the soil zinc concentrations can be modelled as $\Tilde{\mathbf{Z}} = \mathbf{X}\bm\beta + \bm\eta + \bm\xi + \bm\varepsilon$, where $\bm{\eta} \equiv (\eta(\mathbf{s}_1), ..., \eta(\mathbf{s}_n))'$, $\bm{\xi} \equiv (\xi(\mathbf{s}_1), ..., \xi(\mathbf{s}_n))'$, and $\bm\varepsilon \equiv (\varepsilon(\mathbf{s}_1), ..., \varepsilon(\mathbf{s}_n))'$. Recall that $\sigma^2_W = \sigma^2_\eta + \sigma^2_\xi$. Then the log-transformed spatial data $\Tilde{\mathbf{Z}}$ have mean, $E(\Tilde{\mathbf{Z}}) = \mathbf{X}\bm\beta - 0.5(\sigma^2_W + \sigma^2_{\varepsilon})\mathbf{1}_n$, where $\mathbf{1}_n$ is an $n$-dimensional vector of ones, and variance, $\mathrm{var}(\Tilde{\mathbf{Z}}) = \bm{\Sigma}_{\Tilde{Z}} = \bm{\Sigma}_{\eta} + (\sigma^2_\xi + \sigma^2_\varepsilon)\mathbf{I}_n$. By putting $-0.5\sigma^2_W$ and $-0.5\sigma_\varepsilon^2$ in the mean function and letting $\bm\eta^*$, $\bm\xi^*$, and $\bm\varepsilon^*$ be the mean-zero versions of their corresponding random vectors, $\bm\eta$, $\bm\xi$, and $\bm\varepsilon$, respectively, the model can then also be written as, 
$$
\Tilde{\mathbf{Z}} = \mathbf{X}\bm\beta - 0.5(\sigma_W^2 + \sigma_\varepsilon^2)\mathbf{1}_n + \bm{\eta}^* + \bm{\xi}^* + \bm{\epsilon}^*.
$$
The generalised least squares estimator of $\bm\beta$ is clearly,
\begin{equation}
 \hat{\bm\beta}_{GLS} \equiv (\mathbf{X}'\hat{\bm\Sigma}_{\Tilde{Z}}^{-1}\mathbf{X})^{-1}\mathbf{X}'\hat{\bm\Sigma}_{\Tilde{Z}}^{-1}\left(\Tilde{\mathbf{Z}} + 0.5(\sigma_W^2 + \sigma_\varepsilon^2)\mathbf{1}_n\right).\label{eqn:adjusted_GLS}   
\end{equation}
However, the estimate $\hat{\bm{\beta}}_{GLS}$ depends on covariance parameters that are unknown. Therefore, we used the following algorithm to estimate $\bm{\beta}$ and $\bm{\theta} \equiv (\sigma^2_\eta, r, \sigma^2_\xi)'$. (Recall that $\sigma^2_\varepsilon$ has already been estimated from duplicate measurements.) The iteration was started with an initial estimate of $\bm\beta$ given by ordinary least squares (OLS):
$$
\hat{\bm{\beta}}^{(0)} \equiv (\mathbf{X}'\mathbf{X})^{-1}\mathbf{X}'\Tilde{\mathbf{Z}}.
$$
Then the residuals, $\mathbf{e}^{(0)} \equiv \Tilde{\mathbf{Z}} - \mathbf{X}\hat{\bm\beta}^{(0)}$, are used to compute an empirical semivariogram using the Cressie-Hawkins robust estimator \cite[]{Cressie1980}. Subsequently, a spherical semivariogram model is fitted to the empirical semivariogram via weighted least squares (WLS) with ``Cressie weights'' \cite[]{Cressie1985}. The resulting estimates of the covariance parameters are labelled $\hat{\bm\theta}^{(0)} \equiv ((\hat{\sigma}^2_\eta)^{(0)}, \hat{r}^{(0)}, (\hat{\sigma}^2_{\xi})^{(0)})'$, and these estimates are plugged into $\mathrm{var}(\Tilde{\mathbf{Z}})$ to obtain the initial covariance matrix, $\bm{\Sigma}^{(0)}_{\Tilde{Z}}$. It should be noted that the estimates of the covariance parameters come from a semivariogram model that has parameters $\sigma^2_\eta$ (partial sill), $r$ (range), and $c_0$ (nugget effect). It is known that the nugget effect $c_0 = \sigma^2_{\xi} + \sigma^2_{\varepsilon}$ \cite[pp. 58-60]{Cressie1993}, so letting $\hat{\sigma}^2_\eta, \hat{r},$ and $\hat{c}_0$ be the estimated parameters, we obtain $\hat{\sigma}^2_{\xi} = \max\{\hat{c}_0 - \hat{\sigma}^2_\varepsilon, 0\}$, and hence $\hat{\sigma}^2_W = \hat{\sigma}^2_\eta + \hat{\sigma}^2_{\xi}$.

The iteration proceeds from these starting values as follows. At the $i$-th iteration, compute an updated estimate of the linear-model coefficients using \eqref{eqn:adjusted_GLS}:
$$
\hat{\bm\beta}_{GLS}^{(i)} = (\mathbf{X}'(\hat{\bm\Sigma}_{\Tilde{Z}}^{(i-1)})^{-1}\mathbf{X})^{-1}\mathbf{X}'(\hat{\bm\Sigma}^{(i-1)}_{\Tilde{Z}})^{-1}\left(\Tilde{\mathbf{Z}} + 0.5((\hat{\sigma}_W^2)^{(i-1)} + \hat{\sigma}_\varepsilon^2)\mathbf{1}_n\right).
$$
Then compute the updated residuals,
$$
\mathbf{e}^{(i)} \equiv \Tilde{\mathbf{Z}} - \mathbf{X}\hat{\bm\beta}^{(i)}_{GLS} + 0.5((\hat{\sigma}_W^2)^{(i-1)} + \hat{\sigma}_\varepsilon^2)\mathbf{1}_n,
$$
subsequently re-compute the empirical semivariogram on the new residuals, and re-fit the spherical semivariogram model. This yields the updated covariance parameters, $\hat{\bm\theta}^{(i)}$, and hence the updated covariance matrix, $\bm{\Sigma}^{(i)}_{\Tilde{Z}}$. The iteration is continued until the element-wise absolute difference between $\hat{\bm{\beta}}_{GLS}^{(i)}$ and $\hat{\bm{\beta}}_{GLS}^{(i-1)}$ falls below a given tolerance (we used $10^{-6}$), and the latest estimates of $\bm{\beta}$ and $\bm{\theta}$, denoted $\hat{\bm{\beta}}_{GLS}$ and $\hat{\bm{\theta}}$, respectively, are used to describe the spatial statistical model for $Y(\cdot)$. In this application where $p=7$ and $n = 155$, six iterations were required for convergence. 

The resulting estimates and 95\% confidence intervals of the linear-model coefficients are given in Table \ref{table:parameter_estimates}. Distance and flooding frequency appear to have statistically significant effects on the logarithm of soil zinc, while the x-coordinate and soil type do not. The final covariance-parameter estimates were $\hat{\sigma}^2_\eta = 0.1855$ ppm$^2$, $\hat{r} = 991.76$ m, $\hat{\sigma}^2_{\xi} = 0.0313$ ppm$^2$, and hence $\hat{\sigma}_W^2 = 0.2168$ ppm$^2$. The final empirical semivariogram and fitted spherical model are shown in Fig. \ref{fig:semivariogram}. 

Although we used a log-transformation of the data to estimate the parameters, this is not the same as analysing the data on the log-scale. Inference (i.e., prediction) is done on the natural scale of the data and process, which is the positive real line, so predictions from this model do not need to be un-transformed and have their statistical properties approximated, as is done in transgaussian kriging \cite[]{Cressie1993}. 

\begin{figure}[!ht]
    \centering
    \includegraphics[width=\textwidth]{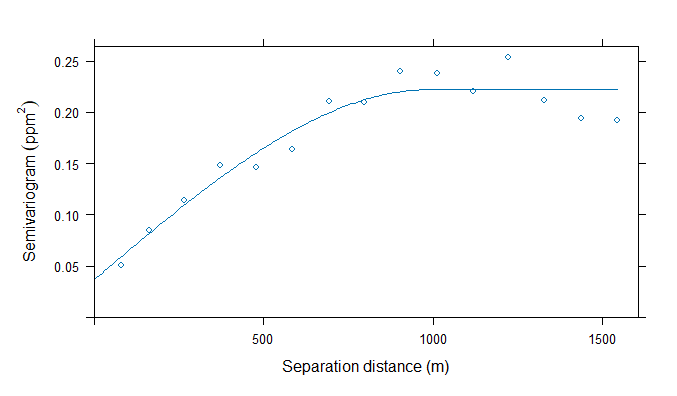}
    \caption{Empirical semivariogram (blue dots) and fitted spherical semivariogram model (blue line). This is the semivariogram fit obtained from parameters $\hat{\bm\theta}$ given at the final iteration of the parameter-estimation algorithm.}
    \label{fig:semivariogram}
\end{figure}

\begin{table}[]
    \centering
    \begin{tabular}{|l|l|ll|}
    \hline
    Parameter  & Description & Estimate & 95\% confidence interval\\
    \hline
    $\beta_0$  & Intercept  & 6.973 & (6.680,7.266) \\
    $\beta_1$  & Distance to River  & -2.152 & (-2.934, -1.369) \\
    $\beta_2$  & Soil Type 2  & -0.150 & (-0.354. 0.055) \\
    $\beta_3$  & Soil Type 3 & -0.130 & (-0.438, 0.178)\\
    $\beta_4$  & Level 2 Flood Frequency  & -0.593 & (-0.718, -0.468) \\
    $\beta_5$  & Level 3 Flood Frequency  & -0.621 & (-0.812, -0.431) \\
    $\beta_6$  & x-coordinate  & -0.148 & (-0.378, 0.083)\\
         \hline
    \end{tabular}
    \caption{Generalised least squares (GLS) estimates and 95\% confidence intervals of the linear-model coefficients in the log-Gaussian spatial process model.}
    \label{table:parameter_estimates}
\end{table}

\subsection{Spatial prediction and uncertainty quantification}\label{sec:meuse_examples}

Zinc is a heavy metal that can contaminate soil. Plants, invertebrate life, and small animals such as birds are particularly sensitive to it. In assessing the level of soil contamination by zinc (and other heavy metals), over-prediction of the concentration errs on the side of safety. The cost of over-prediction may be measured on the scale of tens of thousands of dollars if soil remediation is aggressively carried out on more soil than is necessary or if crops are not planted in a certain area due to fears of heavy-metal accumulation. While the cost of under-prediction is harder to quantify, it is almost certainly more severe: heavy metals can have long-term health consequences on people, reduce the yield of crops, or affect land valuations. In some cases, costly litigation may be brought, where the courts decide what the costs are. In what follows, we use the Meuse River dataset of soil zinc concentrations from \cite{Rikken1993} to illustrate spatial prediction and uncertainty quantification using power-divergence loss functions.

A key decision in any spatial statistical analysis using OPD spatial predictors is the choice of the power parameter $\lambda$. \cite{Cressie2022} proposed a simple statistical procedure to choose $\lambda$: that is, they computed the widths of the 95\% unconditional prediction intervals as functions of $\lambda$, at several prediction locations, and they considered the $\lambda$-value that minimised the width at each location. They settled on the median of these $\lambda$-values as the power for which OPD spatial prediction was carried out.

If two prediction intervals have the same coverage but one is narrower than the other, the narrower interval enables more precise inference on the random quantity of interest. However, there is a danger in using the widths of prediction intervals alone, since the width of a prediction interval says nothing of its coverage. In this sense, the interval score of \cite{Gneiting2007} combines both and indicates the true quality of a prediction interval. However, as discussed in Section \ref{sec:coverage}, our 95\% prediction intervals are not of the form required for interval scores. Instead, we checked the coverage of the intervals first, to ensure they were valid, and then we compared the widths.

The coverage of the 95\%  conditional and unconditional prediction intervals for the Meuse River data were calculated in a leave-one-out cross-validation (LOOCV) setting, using the algorithm defined in Section \ref{sec:coverage}. That is, each of the 155 zinc samples in the Meuse River data was predicted using the OPD spatial predictor as a function of the other 154 samples; then the proportion of times the left-out datum was contained in this prediction interval was calculated. Across the 155 sampling locations, the proportion should be approximately 95\%. Measurement error is not expected to strongly influence the outcome because we know the measurement error variance is small relative to the marginal variance of the data on the log-scale ($100\times \hat{\sigma}^2_\varepsilon/(\hat{\sigma}^2_W + \hat{\sigma}^2_\varepsilon)\% = 2.37\%$). To find the cut-offs needed to compute the intervals (see \eqref{eqn:general_unconditional_prediction_interval}
 and \eqref{eqn:general_conditional_prediction_interval}), the algorithms defined in Section \ref{sec:cond_uncond_pred_ints} were used. The number of Monte Carlo samples used to estimate the cut-offs was $M = 100,000$ samples from $[Y(\mathbf{s}_0)\mid\mathbf{Z}]$ or from $[Y(\mathbf{s}_0), \mathbf{Z}]$ for the  $95\%$ conditional and unconditional prediction intervals, respectively. Table \ref{table:coverages} confirms that the coverage of the 95\% prediction intervals is approximately correct, with the empirical average coverage achieved by both conditional and unconditional prediction intervals being between 92.3\% and 94.8\%. Having checked that the coverages are valid, we can look for the $\lambda$ that minimises the widths of the prediction intervals.

\begin{table}[!ht]
    \centering
    \begin{tabular}{|c|ccccccc|}
    \hline
        $\lambda$ & $-3$ & $-2$ & $-1$ & $0$ & $1$& $2$& $3$ \\
         \hline
        Conditional & 0.923 & 0.923 & 0.923 & 0.923 & 0.935 & 0.935 & 0.935 \\
        Unconditional & 0.935 & 0.942 & 0.942 & 0.948 & 0.948 & 0.948 & 0.948 \\
         \hline
    \end{tabular}
    \caption{Empirical leave-one-out cross-validation (LOOCV) coverages of 95\% conditional and unconditional prediction intervals for $\lambda \in \{-3, -2, -1, 0, 1, 2, 3\}$.}
    \label{table:coverages}
\end{table}

\begin{figure}
    \centering
    \includegraphics[width=\textwidth]{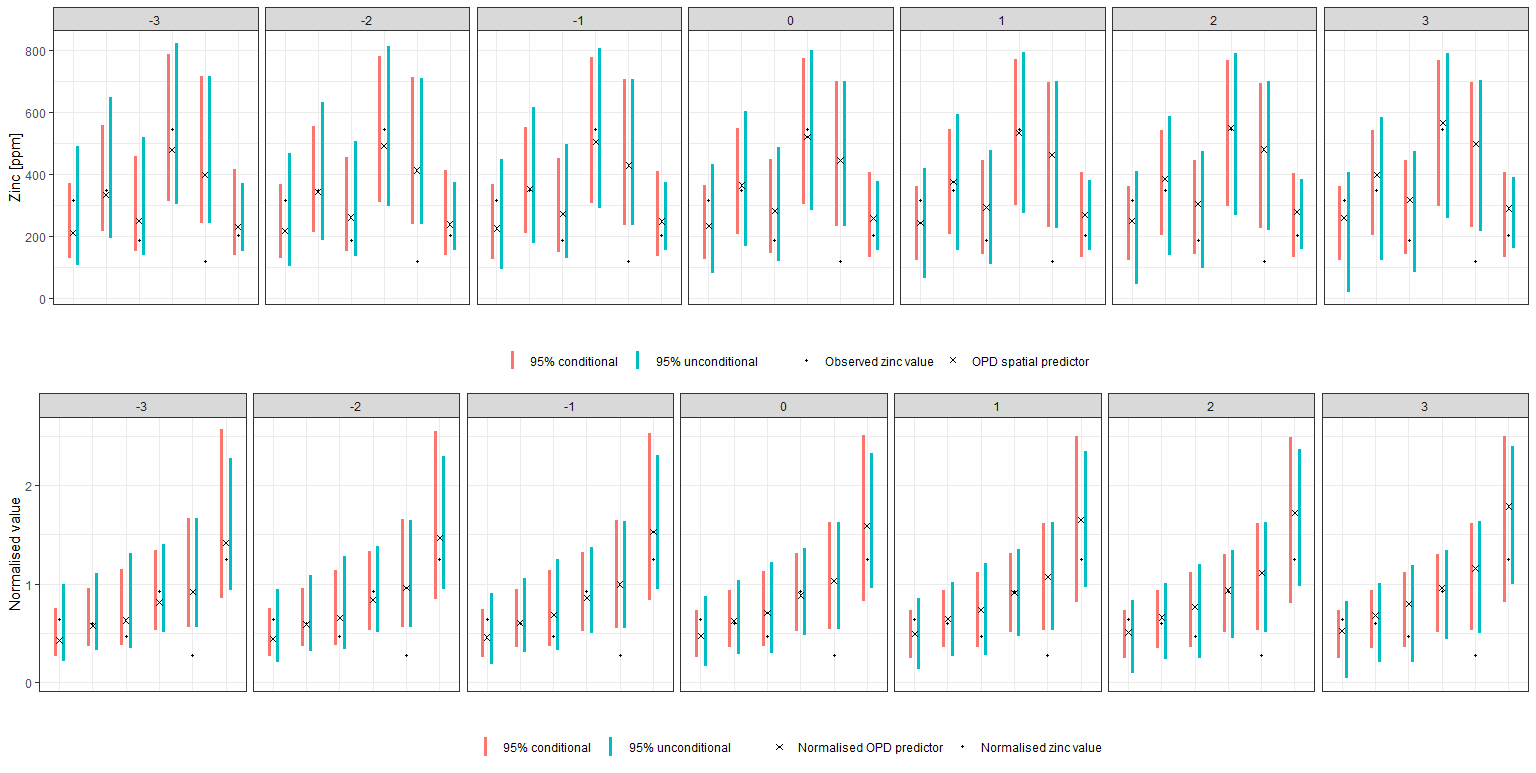}
    \caption{Top row: 95\% conditional and unconditional prediction intervals, $\delta_\lambda^*(\mathbf{Z};\mathbf{s}_i)$, and the observed zinc concentration at a small number (six) of observation locations for $\lambda \in \{-3, -2, -1, 0, 1, 2, 3\}$. Bottom row: Plots of quantities in the top row normalised by the marginal mean of the process at the selected locations, $E(Y(\mathbf{s}_i)) = \exp\{\mathbf{x}(\mathbf{s}_i)'\hat{\bm\beta}_{GLS}\}$. Here, only six of 155 observation locations are plotted due to lack of space. These were chosen by sorting the 155 observation locations according to the normalised values, $\delta_\lambda^*(\mathbf{Z}; \mathbf{s}_i)/\exp\{\mathbf{x}(\mathbf{s}_i)'\hat{\bm\beta}_{GLS}\}$, and selecting the first, 31st, 62nd, 93rd, 124th, and 155th locations from the ordered list.}
    \label{fig:coverage_check}
\end{figure}

Given that conditional and unconditional prediction intervals are different under power-divergence loss, it is natural to ask how they differ. These two forms of inference result in prediction intervals of different widths, as can be seen in Fig. \ref{fig:coverage_check}. The top row of plots shows unnormalised interval widths, but in the bottom row of plots in Fig. \ref{fig:coverage_check}, a clear pattern emerges when the interval widths (and OPD spatial predictors) are normalised by $E(Y(\mathbf{s}_i)) = \exp\{\mathbf{x}(\mathbf{s}_i)'\hat{\bm\beta}_{GLS}\}$, the estimated marginal mean of the process at the prediction location. This type of normalisation is done when using the coefficient of variation as a summary statistic. 

Now consider Fig. \ref{fig:widths_analysis}: When the ratios of the widths of the 95\% conditional and unconditional prediction intervals are plotted against the normalised OPD spatial predictor, namely $\delta_\lambda^*(\mathbf{Z};\mathbf{s}_0)/\exp\{\mathbf{x}(\mathbf{s}_0)'\hat{\bm\beta}_{GLS}\}$, for the 3,103 prediction locations, the relationship between the conditional and unconditional prediction intervals is abundantly clear. Fig. \ref{fig:widths_analysis} shows that, when $\delta_\lambda^*(\mathbf{Z};\mathbf{s}_0)/\exp\{\mathbf{x}(\mathbf{s}_0)'\hat{\bm\beta}_{GLS}\}$ is small, the conditional prediction intervals are narrower (ratio of the widths $< 1$) than their unconditional counterparts; when $\delta_\lambda^*(\mathbf{Z};\mathbf{s}_0)/\exp\{\mathbf{x}(\mathbf{s}_0)'\hat{\bm\beta}_{GLS}\}$ is large, the conditional prediction intervals are wider (ratio of the widths $> 1$) than their unconditional counterparts. Where this changeover occurs from `small' to `large' varies with $\lambda$, approximately, equal to $0.85$ ($\lambda = -3$), $0.9$ ($\lambda = -2$), $0.95$ ($\lambda = -1$), $1$ ($\lambda = 0$), $1.1$ ($\lambda = 1$), $1.15$ ($\lambda = 2$), $1.2$ ($\lambda = 3$).

\begin{figure}
    \centering
    \includegraphics[width=\textwidth]{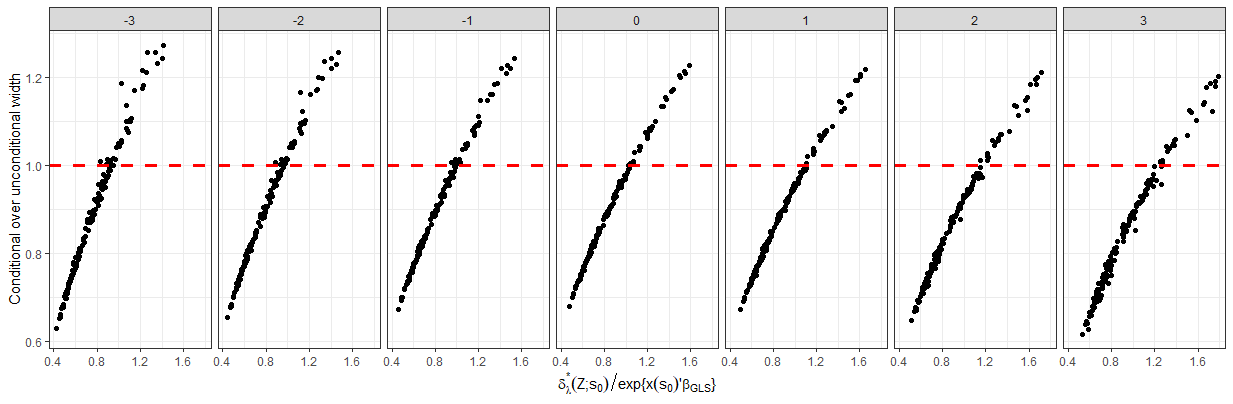}
    \caption{The relationship between $\delta_\lambda^*(\mathbf{Z};\mathbf{s}_{0})/\exp\{\mathbf{x}(\mathbf{s}_{0})'\hat{\bm\beta}_{GLS}\}$, where $\mathbf{s}_0$ ranges over the 3,103 prediction locations in $D$, and the ratio of the width of the 95\% conditional prediction intervals to the width of the 95\% unconditional prediction intervals for $\lambda \in \{-3, -2, -1, 0, 1, 2, 3\}.$ The dashed red line drawn across the vertical axis at $y = 1$ shows where the conditional prediction interval becomes wider than the unconditional prediction interval.}
    \label{fig:widths_analysis}
\end{figure}

Returning to the OPD spatial prediction of zinc concentrations, Fig. \ref{fig:PI_widths} shows the curves of the widths of the 95\% unconditional prediction intervals varying over $\lambda \in [-3, 3]$ for the five specially selected prediction locations shown in Fig. \ref{fig:map}. Let $\lambda^*(\mathbf{s}_0)$ be the value of $\lambda$ that minimises the width of the 95\% unconditional prediction interval at prediction location $\mathbf{s}_0 \in D$. Fig. \ref{fig:PI_widths} shows the histogram of $\lambda^*(\mathbf{s}_0)$ at all 15 prediction locations plotted in Fig. \ref{fig:map}. The median of these values is $\lambda^* = -0.5$; henceforth, our analysis of soil zinc concentrations uses the OPD spatial predictor,  
$$\delta_{-0.5}^*(\mathbf{Z}; \mathbf{s}_0) = E(Y(\mathbf{s}_0)^{1/2}\mid\mathbf{Z})^{2}.$$

\begin{figure}
    \centering
    \includegraphics[width=\textwidth]{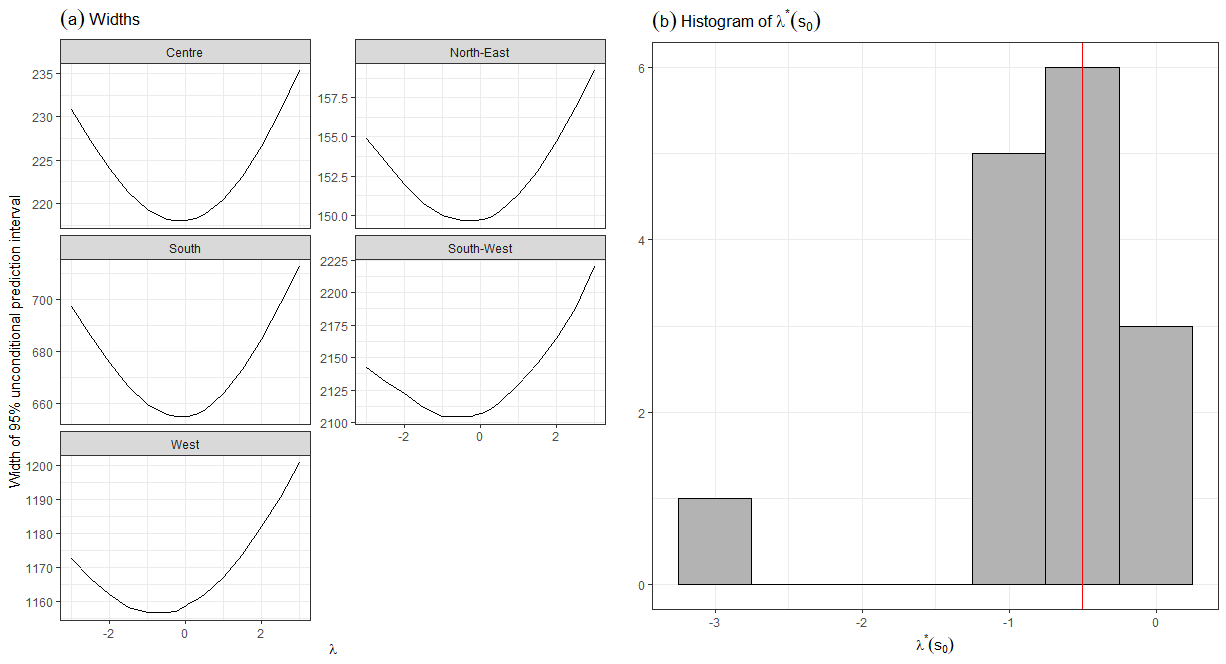}
    \caption{(a) Widths of 95\% unconditional prediction intervals as functions of $\lambda$ at the five specially selected prediction locations labelled in Fig. \ref{fig:map}. (b) Histogram of values $\lambda^*$ that minimise the widths of the 95\% unconditional prediction intervals at the 15 prediction locations shown in Fig. \ref{fig:map}.}
    \label{fig:PI_widths}
\end{figure}

The final step in the analysis is to use $\lambda = -0.5$ for prediction and uncertainty quantification. We computed $\delta_{-0.5}^*(\mathbf{Z}; \mathbf{s}_0)$, $Bias(\delta_{-0.5}^*; \mathbf{s}_0)$, $RMSPE(\delta_{-0.5}^*; \mathbf{s}_0) \equiv \sqrt{MSPE(\delta_{-0.5}^*;\mathbf{s}_0)}$, $ELP_{-0.5}^*(\delta_{-0.5}^*; \mathbf{Z}; \mathbf{s}_0)$, and $ELJ_{-0.5}^*(\delta_{-0.5}^*; \mathbf{s}_0)$ at all 3,103 prediction locations in $D$. Fig. \ref{fig:analysis} maps these quantities after they have been normalised by $E(Y(\mathbf{s}_0)) = \exp\{\mathbf{x}(\mathbf{s}_0)'\hat{\bm{\beta}}_{GLS}\}$, as was done in Fig. \ref{fig:coverage_check} and Fig. \ref{fig:widths_analysis}. Recall that scaling is a canonical operation for positive-valued spatial processes. From  \eqref{eqn:bias_lognormal}, \eqref{eqn:lognormal_MSPE}, \eqref{eqn:lognormal_ELP}, and \eqref{eqn:lognormal_ELJ} it is also clear that the normalised quantities shown in Fig. \ref{fig:analysis}(b)-\ref{fig:analysis}(e) do not depend on the mean-structure (i.e., they are stationary in space) and are unitless.

\begin{figure}[!h]
    \centering
    \includegraphics[width=\textwidth]{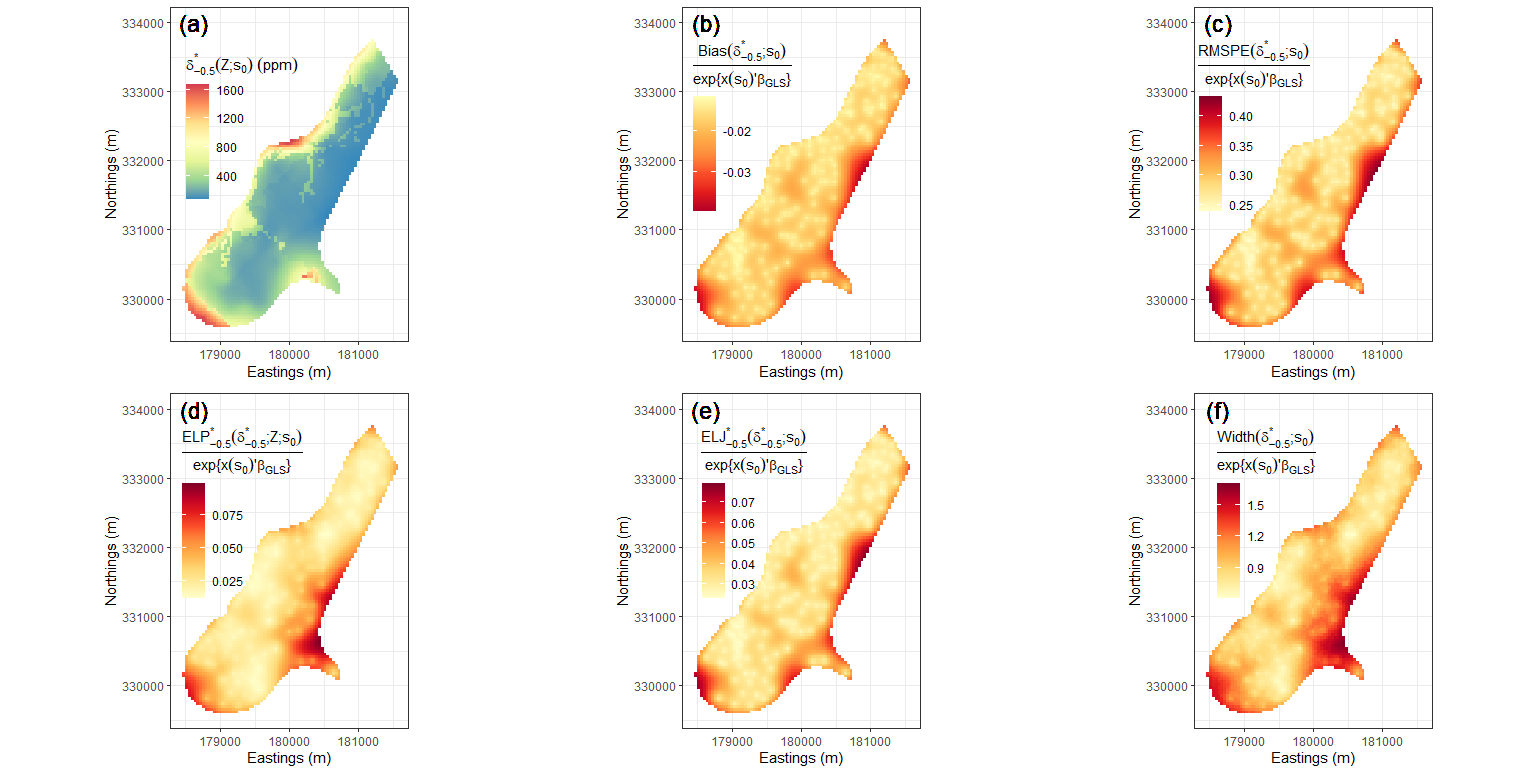}
    \caption{Maps of quantities: (a) $\delta_{-0.5}^*(\mathbf{Z}; \mathbf{s}_0)$, (b) $Bias(\delta_{-0.5}^*; \mathbf{s}_0)/\exp\{\mathbf{x}(\mathbf{s}_0)\hat{\bm\beta}_{GLS}\}$, (c) $RMSPE(\delta_{-0.5}^*; \mathbf{s}_0)/\exp\{\mathbf{x}(\mathbf{s}_0)'\hat{\bm\beta}_{GLS}\}$, (d) $ELP_{-0.5}^*(\delta_{-0.5}^*; \mathbf{Z}; \mathbf{s}_0)/\exp\{\mathbf{x}(\mathbf{s}_0)'\hat{\bm\beta}_{GLS}\}$, (e) $ELJ_{-0.5}^*(\delta_{-0.5}^*; \mathbf{s}_0)/\exp\{\mathbf{x}(\mathbf{s}_0)'\hat{\bm\beta}_{GLS}\}$, and (f) the widths of the 95\% unconditional prediction interval divided by $\exp\{\mathbf{x}(\mathbf{s}_0)'\hat{\bm\beta}_{GLS}\}$, where $\mathbf{s}_0$ ranges over $D$.}
    \label{fig:analysis}
\end{figure}

Under the log-Gaussian model assumed in Section \ref{sec:calibration}, another method for selecting $\lambda$ was suggested. At every prediction location $\mathbf{s}_0 \in D$, a different value of $\lambda^*_q(\mathbf{s}_0)$ can be computed using \eqref{eqn:conditional_calibrated_lambda} to predict the $q$-th quantile of $[Y(\mathbf{s}_0)\mid\mathbf{Z}]$. The value of $q$ is chosen \textit{a priori}; here, as an example, $q = 0.9$ was chosen to capture a scenario where we are predicting a `near-worst-case scenario' for the level of zinc contamination in the soil of the Meuse River floodplain. Fig. \ref{fig:adaptive_map_quantile}a shows $\delta_{\lambda_{0.9}^*(\mathbf{s}_0)}(\mathbf{Z}; \mathbf{s}_0)$, and Fig. \ref{fig:adaptive_map_quantile}b shows $\lambda_{0.9}^*(\mathbf{s}_0)$ over the study area. Compared to $\delta_{-0.5}^*(\mathbf{Z};\mathbf{s}_0)$ in Fig. \ref{fig:analysis}, the predictions $\delta_{\lambda_{0.9}^*(\mathbf{s}_0)}^*(\mathbf{Z};\mathbf{s}_0)$ in Fig. \ref{fig:adaptive_map_quantile}a are substantially higher, because of the calibration of the predictor to quite an extreme quantile. Consequently, the calibrated $\lambda$'s are in the range of 5-10 to capture the asymmetry due to the choice of $q = 0.9$. Also, recall from \eqref{eqn:conditional_calibrated_lambda} that $\lambda_{q}^*(\mathbf{s}_0)$ is inversely proportional to the log-scale predictive standard deviation. This is why Fig. \ref{fig:adaptive_map_quantile}(b) shows peaks around the observation locations. 

\begin{figure}[!h]
    \centering
    \includegraphics[width=\textwidth]{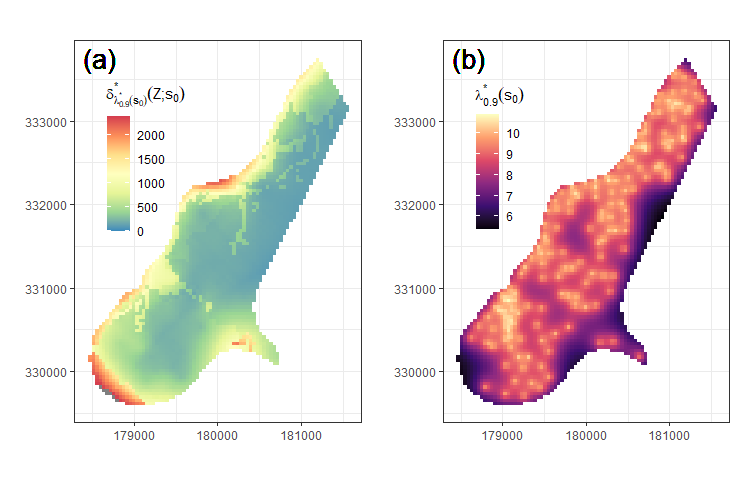}
    \caption{Maps of (a) the optimal power divergence (OPD) spatial predictor $\delta_{\lambda^*_{0.9}(\mathbf{s}_0)}^*(\mathbf{Z}; \mathbf{s}_0)$ that is  calibrated to predict the $0.9$ quantiles of the predictive distribution, $[Y(\mathbf{s}_0)\mid\mathbf{Z}]$, and of (b) the calibrated values, $\lambda^*_{0.9}(\mathbf{s}_0)$, where $\mathbf{s}_0$ ranges over $D$.}
    \label{fig:adaptive_map_quantile}
\end{figure}

For completeness, the maps of the OPD predictor from Fig. \ref{fig:adaptive_map_quantile}(a), and its normalised bias, RMSPE, ELP, ELJ, and the widths of the 95\% unconditional prediction intervals are shown in Fig. \ref{fig:adaptive_analysis}. It is interesting to see that the behaviour of RMSPE in Fig. \ref{fig:adaptive_analysis}c is radically different to that of RMSPE in Fig. \ref{fig:analysis}c. This is the result of the power parameter $\lambda$, of the loss function $L_{PDL,\lambda}$, being allowed to vary by location in Fig. \ref{fig:adaptive_analysis}, in contrast to the static choice of $\lambda$ in Fig. \ref{fig:analysis}. 

\begin{figure}[!h]
    \centering
    \includegraphics[width=\textwidth]{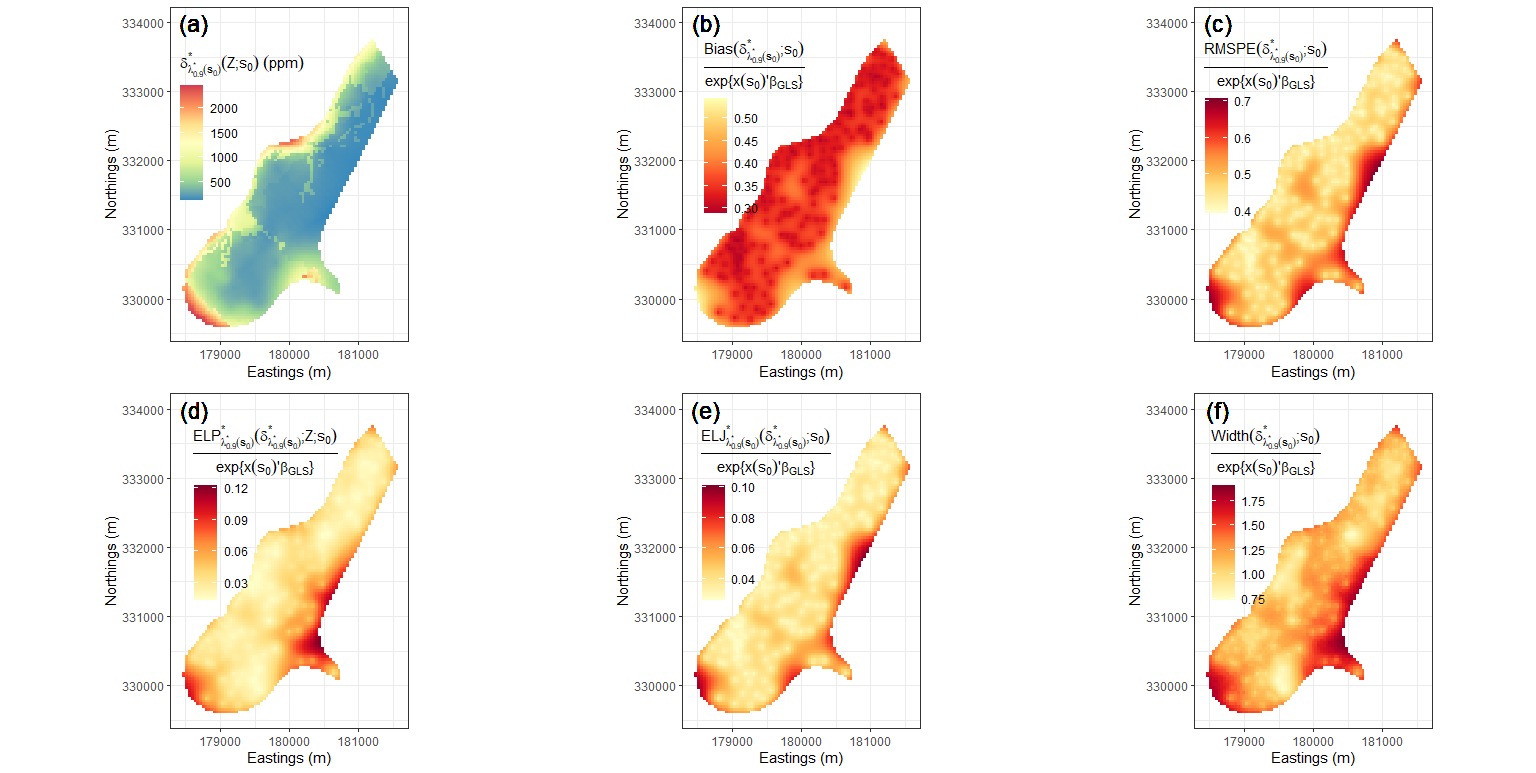}
    \caption{Maps of quantities: (a) $\delta_{\lambda^*_{0.9}(\mathbf{s}_0)}^*(\mathbf{Z}; \mathbf{s}_0)$, (b) $Bias(\delta_{\lambda^*_{0.9}(\mathbf{s}_0)}^*; \mathbf{s}_0)/\exp\{\mathbf{x}(\mathbf{s}_0)'\hat{\bm\beta}_{GLS}\}$, (c) $RMSPE(\delta_{\lambda^*_{0.9}(\mathbf{s}_0)}^*; \mathbf{s}_0)/\exp\{\mathbf{x}(\mathbf{s}_0)'\hat{\bm\beta}_{GLS}\}$, (d) $ELP_{\lambda^*_{0.9}(\mathbf{s}_0)}^*(\delta_{\lambda^*_{0.9}(\mathbf{s}_0)}^*; \mathbf{Z}; \mathbf{s}_0)/\exp\{\mathbf{x}(\mathbf{s}_0)'\hat{\bm\beta}_{GLS}\}$, (e) $ELJ_{\lambda^*_{0.9}(\mathbf{s}_0)}^*(\delta_{\lambda^*_{0.9}(\mathbf{s}_0)}^*; \mathbf{s}_0)/\exp\{\mathbf{x}(\mathbf{s}_0)'\hat{\bm\beta}_{GLS}\}$, and (f) the width of the 95\% unconditional interval divided by $\exp\{\mathbf{x}(\mathbf{s}_0)'\hat{\bm\beta}_{GLS}\}$, where $\mathbf{s}_0$ ranges over $D$.}
    \label{fig:adaptive_analysis}
\end{figure}

\section{Discussion and conclusions}\label{sec:discussion}

In this article we have presented properties of optimal spatial prediction and uncertainty quantification for positive-valued spatial processes using the family of (asymmetric) power-divergence loss functions. First, we developed a new way to quantify asymmetry of loss for positive-valued spatial processes. Then, we revisited the notion that spatial-prediction problems are also decision problems, which use loss functions; this has been known for some time \cite[]{Cressie1993}, but the implications have not been developed.  Examining spatial prediction through a decision-theoretic lens immediately questions the ubiquitous use of squared-error loss (SEL), because loss functions should quantify the consequences of over-prediction and under-prediction. In nature, these are rarely symmetric, contrary to what squared-error loss dictates. The class of power-divergence loss (PDL) functions was shown to provide an attractive alternative to SEL, not only because its members capture situations where under-prediction incurs higher losses than over-prediction and \textit{vice versa}, but also because they are continuous, convex and, when calibrated, capable of behaving similarly to several commonly used loss functions. 

In this article we have answered a series of questions: For a given parameter $\lambda$, what is the optimal power-divergence (OPD) spatial predictor and how can it be computed efficiently? What do spatial prediction intervals look like? What do the expected losses look like under PDL? Are they capable of providing uncertainty quantification? How can an OPD block predictor be obtained from point predictors? How can $\lambda$ be chosen using statistical procedures?

Several areas of investigation are left to future research, such as the use of control variates to precisely estimate the OPD spatial predictor based on MCMC samples from the predictive distribution. Other examples include fully exploring the inferential differences between conditional and unconditional prediction intervals and interrogating the connection between the OPD spatial predictors and Box-Cox transformations. And, outside the setting of spatial prediction, power-divergence-based inference may also have applications in robust estimation. 

In addition to providing statistical procedures that enable us to make a choice of $\lambda$, we would like to develop adaptive ways to determine $\lambda$. Spatial predictors that vary in type throughout the spatial domain $D$ have been considered by \cite{Bradley2015} and, similarly, OPD spatial predictors for different $\lambda$'s could be used to predict the hidden spatial field. One example of adaptive spatial prediction was considered in Section \ref{sec:calibration}, where it was shown that the power parameter $\lambda(\mathbf{s}_0)$ at prediction location $\mathbf{s}_0$ could be calibrated to allow $\delta_{\lambda(\mathbf{s}_0)}^*(\mathbf{Z}; \mathbf{s}_0)$ to predict a specified quantile $q$ of the predictive distribution. Noting that the desired quantile can be allowed to vary spatially over different parts of the domain, makes this approach truly adaptive. There may be other informative ways of choosing a spatially varying power parameter $\lambda(\mathbf{s}_0)$ as $\mathbf{s}_0$ ranges over $D$. 

As well as the intuitive justification for biased prediction given in Section \ref{sec:introduction}, we would like to develop an analysis that illustrates the entire pipeline from developing or calibrating a loss function to real-life losses, providing biased predictions to make conservative decisions, and finally assessing the benefit through the minimised expected losses. We believe that the PDL function could be used to infer spatial exceedances, and preliminary work on this is currently underway.

\section*{Acknowledgements}

This research was supported by Australian Research Council Discovery Project DP190100180 (AP and NC). We thank Andrew Zammit-Mangion for perceptive comments and suggestions and Professor Alfred Stein for his help in obtaining a copy of the thesis of \cite{Rikken1993}. The Sydney Business School, University of Wollongong generously provided an environment where the article took shape and evolved into a manuscript.

\newpage
\appendix

\numberwithin{equation}{section}
\renewcommand{\theequation}{\thesection\arabic{equation}}

{\Large \noindent\textbf{Online Supplement for ``Optimal prediction of positive-valued spatial processes: asymmetric power-divergence loss'' by Alan R. Pearse, Noel Cressie, and David Gunawan}}

\section{Proofs and derivations}\label{sec:proofs}

\subsection{Behaviour of the asymmetry measure $A_{PDL,\lambda}(f)$ as $f \rightarrow 0$ and $f \rightarrow 1$}\label{sec:asymmetry_measure_result}

For $f \in (0, 1)$ and $\lambda \in (-\infty, \infty)$, consider the function,
\begin{equation}
A_{PDL,\lambda}(f) = 
\begin{cases} 
\frac{(1-f)^{1-\lambda} - (1 - f) (1 + \lambda f) }{(1-f)^{\lambda + 1} + (\lambda + 1)f - 1}, &  \lambda \neq -1, 0\\
& \\
-\frac{(1-f)\log(1 - f) + f(1-f)}{(1-f)\log(1 - f)+f}, & \lambda = 0\\
&\\
-\frac{(1-f)^2\log(1-f) + f(1-f)}{\log(1-f)+f}, & \lambda = -1.
\end{cases}\label{eqn:A_PDL_appendix}
\end{equation}
Alternatively,
\begin{equation}
A_{PDL,\lambda}(f) = (1-f)^2 \frac{\phi^+_\lambda((1-f)^{-1})}{\phi_\lambda^+(1-f)},\label{eqn:A_PDL_alt_appendix}
\end{equation}
where $\phi^+_\lambda(x) \equiv (\lambda(\lambda + 1)^{-1})\{x^{\lambda + 1} - x + \lambda(1-x)\}$, and the function satisfies $\phi_\lambda^+(1) = (\phi_\lambda^{+})'(1) = 0$, $(\phi_\lambda^{+})''(1) > 0$, $\phi_\lambda^+(0/0) = 0$, and $0\cdot \phi_\lambda^+(p/0) = p \cdot \lim_{u\rightarrow \infty}\phi_\lambda^+(u)/u$. 

\begin{proposition}
$\lim_{f\rightarrow 0}A_{PDL,\lambda}(f) = 1$. 
\end{proposition}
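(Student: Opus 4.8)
The plan is to work with the compact representation \eqref{eqn:A_PDL_alt_appendix}, namely $A_{PDL,\lambda}(f) = (1-f)^2 \, \phi^+_\lambda((1-f)^{-1})/\phi_\lambda^+(1-f)$, rather than the three piecewise explicit formulas in \eqref{eqn:A_PDL_appendix}, so that all values of $\lambda$ (including $\lambda = 0$ and $\lambda = -1$) are treated at once. As $f \rightarrow 0$, both arguments $(1-f)$ and $(1-f)^{-1}$ tend to $1$, and since $\phi_\lambda^+(1) = 0$ the ratio is of indeterminate $0/0$ form, while the prefactor $(1-f)^2 \rightarrow 1$. The idea is therefore to resolve the $0/0$ behaviour by a second-order Taylor expansion of $\phi_\lambda^+$ about $x = 1$, exploiting the stated properties $\phi_\lambda^+(1) = (\phi_\lambda^+)'(1) = 0$ and $(\phi_\lambda^+)''(1) > 0$.

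First I would write the expansion $\phi_\lambda^+(x) = \tfrac{1}{2}(\phi_\lambda^+)''(1)(x-1)^2 + o((x-1)^2)$, valid because the value and first derivative both vanish at $x = 1$. Substituting $x = 1-f$ gives $\phi_\lambda^+(1-f) = \tfrac{1}{2}(\phi_\lambda^+)''(1)\,f^2 + o(f^2)$. For the numerator I would expand the inner argument first, using $(1-f)^{-1} - 1 = f/(1-f) = f + f^2 + O(f^3)$, so that $((1-f)^{-1} - 1)^2 = f^2 + O(f^3)$ and hence $\phi_\lambda^+((1-f)^{-1}) = \tfrac{1}{2}(\phi_\lambda^+)''(1)\,f^2 + o(f^2)$ as well. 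The two leading coefficients coincide, and because $(\phi_\lambda^+)''(1) > 0$ neither leading term vanishes; thus the ratio of the $\phi_\lambda^+$ terms tends to $1$, and multiplying by $(1-f)^2 \rightarrow 1$ yields $\lim_{f\rightarrow 0} A_{PDL,\lambda}(f) = 1$.

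The only delicate point is the numerator: one must carry the inner expansion of $(1-f)^{-1}$ to sufficient order that the error in $((1-f)^{-1} - 1)^2$ remains $o(f^2)$, and one must confirm that the Taylor remainder of $\phi_\lambda^+$ evaluated at a point tending to $1$ does not contaminate the leading term. Both are routine once the expansion $(1-f)^{-1} - 1 = f + f^2 + \cdots$ is in hand, since squaring produces $f^2 + O(f^3)$, matching the denominator to leading order. As a cross-check one could instead apply L'H\^{o}pital's rule twice to each piecewise expression in \eqref{eqn:A_PDL_appendix}, but the unified $\phi_\lambda^+$ argument is cleaner and avoids the separate treatment of $\lambda = 0$ and $\lambda = -1$.
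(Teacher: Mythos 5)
Your proof is correct and takes essentially the same route as the paper: both start from the unified representation $A_{PDL,\lambda}(f) = (1-f)^2\,\phi^+_\lambda((1-f)^{-1})/\phi^+_\lambda(1-f)$ and resolve the $0/0$ form using only the facts $\phi^+_\lambda(1) = (\phi^+_\lambda)'(1) = 0$ and $(\phi^+_\lambda)''(1) > 0$, the paper via two applications of L'H\^{o}pital's rule where you use a second-order Taylor expansion about $x = 1$ --- the same second-order information packaged differently (your version neatly sidesteps the chain-rule bookkeeping that the composition with $(1-f)^{-1}$ forces on the L'H\^{o}pital computation). Your handling of the delicate point is sound: since $(1-f)^{-1} - 1 = f + O(f^2)$, squaring gives $f^2 + O(f^3)$ and the Peano remainder stays $o(f^2)$, so the numerator and denominator share the nonvanishing leading coefficient $\tfrac{1}{2}(\phi^+_\lambda)''(1)f^2$ and the limit is $1$.
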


\begin{proof}
   The limit $\lim_{f\rightarrow 0} A_{PDL,\lambda}(f)$ can be evaluated after twice applying L'H\^{o}pital's rule to \eqref{eqn:A_PDL_alt_appendix}:
$$
\lim_{f\rightarrow 0} A_{PDL, \lambda}(f) = \lim_{f\rightarrow 0} \frac{2\phi_\lambda^+((1-f)^{-1}) - 2(1-f)^{-1}(\phi_\lambda^{+})'((1-f)^{-1}) + (\phi^{+}_\lambda)''((1-f)^{-1})}{(\phi^{+}_\lambda)''(1-f)}.
$$
Recall the function $\phi^+_\lambda$ has the properties that $\phi^+_\lambda(1) = (\phi^{+}_\lambda)'(1) = 0$ and $(\phi^{+}_\lambda)''(1) > 0$. Therefore, $\lim_{f\rightarrow 0} A_{PDL,\lambda}(f) = 1$, as required. 
\end{proof}

\begin{proposition}
    $\lim_{f \rightarrow 1} A_{PDL, \lambda}(f)$ is $0$ if $\lambda < 1$, $\infty$ if $\lambda > 1$, and $1$ if $\lambda = 1$.
\end{proposition}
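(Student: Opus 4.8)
The plan is to substitute $u \equiv 1-f$, so that $f \to 1$ corresponds to $u \to 0^+$, and then read off the leading-order behaviour of the numerator and denominator of $A_{PDL,\lambda}(f)$ directly from the explicit power form in \eqref{eqn:A_PDL_appendix}. For $\lambda \neq 0, -1$, writing $f = 1-u$ turns the ratio into
\[
A_{PDL,\lambda}(f) = \frac{u^{1-\lambda} - (\lambda+1)u + \lambda u^2}{u^{\lambda+1} - (\lambda+1)u + \lambda} \equiv \frac{N(u)}{D(u)},
\]
so the whole argument reduces to comparing powers of $u$ as $u \to 0^+$. I would dispose of $\lambda = 1$ first, since there both $N(u)$ and $D(u)$ collapse to $(1-u)^2$ (that is, $u^0 - 2u + u^2$), giving $A_{PDL,1}(f)\equiv 1$ and hence the limit $1$ at once.

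For the remaining $\lambda \neq 0,-1$ I would analyse $N(u)$ and $D(u)$ separately. The exponent $1-\lambda$ governs the numerator: when $\lambda < 1$ the term $u^{1-\lambda}\to 0$ and the $u$- and $u^2$-terms also vanish, so $N(u)\to 0$; when $\lambda > 1$ one has $1-\lambda < 0$, so $u^{1-\lambda}\to +\infty$ dominates and $N(u)\to +\infty$. For the denominator I would observe that when $\lambda > -1$ the power $u^{\lambda+1}\to 0$, leaving $D(u)\to \lambda \neq 0$, whereas when $\lambda < -1$ the exponent $\lambda+1<0$ forces $u^{\lambda+1}\to +\infty$ and so $D(u)\to +\infty$. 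Combining these: whenever $\lambda < 1$ the numerator tends to $0$ against a denominator that is bounded away from $0$ or is infinite, giving $A \to 0$; whenever $\lambda > 1$ the numerator tends to $+\infty$ against the finite nonzero limit $\lambda$, giving $A \to +\infty$.

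Finally I would handle the two excluded values $\lambda = 0$ and $\lambda = -1$ using their logarithmic forms in \eqref{eqn:A_PDL_appendix} together with the standard fact that $u\log u \to 0$ as $u \to 0^+$. For $\lambda = 0$ the numerator $-[u\log u + (1-u)u]\to 0$ while the denominator $u\log u + (1-u)\to 1$; for $\lambda = -1$ the numerator $-[u^2\log u + (1-u)u]\to 0$ while the denominator $\log u + (1-u)\to -\infty$. Both give limit $0$, consistent with the $\lambda < 1$ conclusion. The only real care needed anywhere is the bookkeeping of which monomial dominates in each $\lambda$-range, and the separate logarithmic treatment of $\lambda \in \{0,-1\}$; there is no analytic obstacle beyond organising the sub-cases so that the three values $0$, $1$, $\infty$ emerge cleanly, the decisive simplification being that the sign of $1-\lambda$ alone determines the fate of the numerator.
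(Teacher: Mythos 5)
Your proof is correct and takes essentially the same route as the paper: a direct case analysis of the limiting behaviour of the numerator and denominator of the explicit formula \eqref{eqn:A_PDL_appendix}, with the substitution $u = 1-f$ being only a cosmetic convenience. In fact your sub-split of the denominator at $\lambda = -1$ is slightly more careful than the paper's own argument, which asserts that the denominator tends to $\lambda$ for all $\lambda < 1$ with $\lambda \neq 0, -1$, whereas for $\lambda < -1$ it actually tends to $+\infty$ (since the exponent $\lambda + 1$ is negative) --- either way the ratio still tends to $0$, so the stated conclusion is unaffected.
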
 

\begin{proof}
Consider the following cases:
\begin{itemize}
    \item When $\lambda > 1$, the numerator of \eqref{eqn:A_PDL} goes to $\infty$ as $f\rightarrow 1$ and the denominator to $\lambda$ as $f \rightarrow 1$. 
    \item When $\lambda = 1$, $A_{PDL,1}(f) = 1$ for all $f$, so clearly $\lim_{f\rightarrow 1}A_{PDL, 1}(f) = 1$.
    \item When $\lambda < 1$ but $\lambda \neq -1, 0$, the numerator in \eqref{eqn:A_PDL} evaluates to $0$ and the denominator to $\lambda$ as $f \rightarrow 1$. 
    \item When $\lambda = 0$, $\lim_{f\rightarrow 1} A_{PDL,0}(f) = 0$ by inspection.  
    \item When $\lambda = -1$, $\lim_{f\rightarrow 1} A_{PDL,-1}(f) = 0$ by inspection. 
\end{itemize}
Hence $\lim_{f\rightarrow 1}A_{PDL,\lambda}(f) = 0$ for $\lambda < 1$, $1$ for $\lambda = 1$, and $\infty$ for $\lambda > 1$. 

\end{proof}

\subsection{Delta-method approximation of the OPD predictor}\label{sec:delta_method_appendix}

Let $Y > 0$ be a random variable of interest; let $Z > 0$ be data; let $-\infty < \lambda < \infty$ be a specified power parameter; and let $\delta_\lambda^*(Z)$ be an optimal predictor of $Y$ that depends on $\lambda$ and data $Z$. Note that the spatial notation has been dropped for convenience. Then, consider the optimal power-divergence (OPD) predictor of $Y$:
\begin{equation}
 \delta_\lambda^*(Z) = \begin{cases}
     E(Y^{\lambda + 1}\mid Z)^{1/(\lambda + 1)}, & \lambda \neq -1,\\
     \exp\{E(\log(Y)\mid Z)\}, & \lambda = -1.
 \end{cases}   
\end{equation}
In what follows, we derive the delta-method approximations of the OPD predictor.

For $\lambda \neq -1$, taking the second-order Taylor-series approximation of $Y^{\lambda + 1}$ about $E(Y\mid Z)$ yields,
\begin{align}
    E(Y^{\lambda + 1}\mid Z)^{1/(\lambda + 1)} &\simeq E\Bigg(E(Y\mid Z)^{\lambda + 1} + (\lambda + 1)E(Y\mid Z)^\lambda(Y - E(Y\mid Z)) \nonumber\\ &~~~~~~~~~~~~~~~~~~~~~~~~+\frac{\lambda (\lambda + 1)E(Y\mid Z)^{\lambda-1}(Y - E(Y\mid Z))^2}{2}\Bigg| Z\Bigg)^{1/(\lambda + 1)} \nonumber\\
    &\simeq \left(E(Y\mid Z)^{\lambda + 1} + \frac{\lambda(\lambda + 1)E(Y\mid Z)^{\lambda-1}\mathrm{var}(Y \mid Z)}{2}\right)^{1/(\lambda + 1)}\nonumber \\
    &\simeq E(Y\mid Z)\cdot \left(1 + \frac{\lambda(\lambda + 1)\mathrm{CV}(Y\mid Z)^2}{2}\right)^{1/(\lambda + 1)},\label{eqn:first_approx}
\end{align}
where $\mathrm{CV}(Y\mid Z) \equiv \sqrt{\mathrm{var}(Y\mid Z)}/E(Y\mid Z)$. Taking a second-order Taylor-series approximation of $x^{1/(\lambda + 1)}$ about $x=1$ yields $1 + (\lambda + 1)^{-1}(x-1) - 0.5\lambda(\lambda + 1)^{-2}(x - 1)^2$, so substituting this into \eqref{eqn:first_approx} yields,
\begin{equation}
    \delta_\lambda^*(Z) \simeq E(Y\mid Z)\left(1 + \frac{\lambda \mathrm{CV}(Y\mid Z)^2}{2} - \frac{\lambda^3\mathrm{CV}(Y\mid Z)^4}{8}\right).\label{eqn:second_approx}
\end{equation}
When $\lambda = -1$, the OPD predictor is $\delta_{-1}^*(Z) = \exp\{E(\log(Y)\mid Z)\}$. Taking the second-order Taylor-series approximation of $\log(x)$ around $x = E(Y\mid Z)$ yields,
\begin{align}
    \exp\{E(\log(Y)\mid Z)\} &\simeq \exp\Bigg\{E\Bigg(\!\log(E(Y\mid Z)) + \frac{(Y - E(Y\mid Z))}{E(Y\mid Z)}- \frac{(Y - E(Y\mid Z))^2}{2E(Y\mid Z)^2}\Bigg| Z \Bigg)\Bigg\},\nonumber\\
    &\simeq E(Y\mid Z) \cdot \exp\Bigg\{-\frac{\mathrm{CV}(Y \mid Z)^2}{2}\Bigg\}.\label{eqn:first_approx_neg1}
\end{align}
Using the well known second-order Taylor-series approximation of $\exp\{x\}$ about $x = 0$ in \eqref{eqn:first_approx_neg1} yields,
\begin{equation}
    \delta_{-1}^*(Z) \simeq E(Y\mid Z) \cdot \left(1 - \frac{\mathrm{CV}(Y\mid Z)^2}{2} + \frac{\mathrm{CV}(Y\mid Z)^4}{8}\right),\label{eqn:second_approx_neg1}
\end{equation}
which is precisely \eqref{eqn:second_approx} with $\lambda = -1$ substituted in.

\subsection{Biasedness of the OPD spatial predictor}\label{sec:bias_appendix}

We maintain the non-spatial notation given in Section \ref{sec:asymmetry_measure_result}. The OPD predictor for random variable $Y > 0$ based on data $Z > 0$ is,
$$
 \delta_\lambda^*(Z) = \begin{cases}
     E(Y^{\lambda + 1}\mid Z)^{1/(\lambda + 1)}, & \lambda \neq -1,\\
     \exp\{E(\log(Y)\mid Z)\}, & \lambda = -1.
 \end{cases}
 $$

\begin{proposition}
    The OPD predictor is biased (i.e., $E(\delta_\lambda^*(Z) - Y) \neq 0$) except for $\lambda = 0$, when it is unbiased. Furthermore, it is positively biased (i.e., $E(\delta_\lambda^*(Z)- Y) > 0$) when $\lambda > 0$ and negatively biased ($E(\delta_\lambda^*(Z) - Y) < 0$) when $\lambda < 0$.
\end{proposition}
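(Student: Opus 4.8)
The plan is to reduce the unconditional bias to a conditional sign condition and then apply Jensen's inequality. By the law of iterated expectations,
\begin{equation*}
E(\delta_\lambda^*(Z) - Y) = E\!\left\{\delta_\lambda^*(Z) - E(Y\mid Z)\right\},
\end{equation*}
so it suffices to determine the sign of $\delta_\lambda^*(Z) - E(Y\mid Z)$ pointwise in $Z$: if this conditional quantity has the same sign as $\lambda$ almost surely (and vanishes when $\lambda = 0$), the unconditional bias inherits that sign after integrating over the marginal distribution of $Z$. The case $\lambda = 0$ is immediate, since $\delta_0^*(Z) = E(Y\mid Z)$ gives exact conditional unbiasedness.

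For $\lambda \neq 0, -1$ I would argue by cases, tracking both the convexity of $g(y) = y^{\lambda + 1}$ and the monotonicity of the outer map $u \mapsto u^{1/(\lambda + 1)}$, since both depend on the sign of $\lambda + 1$. When $\lambda > 0$, the map $g$ is convex and $1/(\lambda+1) > 0$, so Jensen's inequality gives $E(Y^{\lambda+1}\mid Z) \geq E(Y\mid Z)^{\lambda+1}$, and the increasing outer power preserves the inequality to yield $\delta_\lambda^*(Z) \geq E(Y\mid Z)$. When $-1 < \lambda < 0$, $g$ is concave, reversing the Jensen inequality, but $1/(\lambda+1) > 0$ still preserves direction, so $\delta_\lambda^*(Z) \leq E(Y\mid Z)$. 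The boundary case $\lambda = -1$ follows the same pattern through the logarithm: concavity of $\log$ gives $E(\log Y\mid Z) \leq \log E(Y\mid Z)$, and exponentiating yields $\delta_{-1}^*(Z) \leq E(Y\mid Z)$.

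The subtle case, and the one I expect to be the main obstacle to state cleanly, is $\lambda < -1$. Here $\lambda + 1 < 0$, so $g(y) = y^{\lambda+1}$ is in fact \emph{convex} (its second derivative $\lambda(\lambda+1)y^{\lambda-1}$ is positive, being a product of two negative factors times $y^{\lambda-1} > 0$), giving $E(Y^{\lambda+1}\mid Z) \geq E(Y\mid Z)^{\lambda+1}$; but the outer map $u \mapsto u^{1/(\lambda+1)}$ is now \emph{decreasing}, reversing the inequality a second time to produce $\delta_\lambda^*(Z) \leq E(Y\mid Z)$. This double reversal is what keeps the conditional predictor below the conditional mean throughout the entire region $\lambda < 0$, matching the $-1 < \lambda < 0$ case despite the opposite convexity of $g$.

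Finally, to obtain the strict statement that the predictor is biased for every $\lambda \neq 0$, I would invoke the strict form of Jensen's inequality, which requires assuming that the predictive distribution $[Y\mid Z]$ is non-degenerate (not a point mass). Under this mild assumption every inequality above is strict almost surely, so $\delta_\lambda^*(Z) - E(Y\mid Z)$ is strictly positive for $\lambda > 0$ and strictly negative for $\lambda < 0$; integrating over the marginal of $Z$ then gives $E(\delta_\lambda^*(Z) - Y) > 0$ for $\lambda > 0$ and $E(\delta_\lambda^*(Z) - Y) < 0$ for $\lambda < 0$, completing the proof.
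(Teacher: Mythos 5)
Your proof is correct and follows essentially the same route as the paper's: the same case-by-case application of Jensen's inequality to $g(y) = y^{\lambda+1}$ (resp.\ $\log y$ at $\lambda = -1$), with your ``double reversal'' for $\lambda < -1$ being exactly the paper's argument written equivalently as convexity of $x^{-a}$ for $a = |\lambda+1|$ followed by taking reciprocals. Your explicit appeal to iterated expectations and, especially, to the strict form of Jensen under non-degeneracy of $[Y \mid Z]$ is a small refinement, since the paper's proof states only non-strict inequalities while the proposition claims strict bias.
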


\begin{proof}
First, with a slight abuse of notation, denote the OPD predictor defined over a range of values of $\lambda$, say $a < \lambda < b$, as $\delta^*_{(a < \lambda < b)}(Z)$. That is, the subscript denotes the range of $\lambda$ for which an OPD predictor is considered. Then, consider the following cases:
\begin{itemize}
\item When $\lambda > 0$, the function $g(x) = x^{\lambda + 1},~x>0$ is convex, so Jensen's inequality yields $E(Y^{\lambda + 1}\mid Z) \geq E(Y\mid Z)^{\lambda + 1}$; hence, $\delta_{(\lambda > 0)}^*(Z) = E(Y^{\lambda + 1}\mid Z)^{1/(\lambda + 1)} \geq E(Y\mid Z)$. 
\item When $\lambda = 0$, it is trivial to see that $\delta_0^*(Z) = E(Y\mid Z)$ is unbiased. 
\item When $-1 < \lambda < 0$, let $a = \lambda + 1$ and note that $0 < a < 1$. Then the function $g(x) = x^{a},~x > 0$ is concave. By Jensen's inequality, $E(Y^{a}\mid Z) \leq E(Y\mid Z)^{a}$; therefore, $\delta_{(-1 < \lambda < 0)}^*(Z) = E(Y^{\lambda + 1}\mid Z)^{1/(\lambda + 1)} \leq E(Y \mid Z)$.
\item When $\lambda = -1$, the function $g(x) = \log(x),~x>0$ is concave; hence Jensen's inequality yields, $E(\log(Y)\mid Z) \leq \log(E(Y\mid Z))$. Therefore, $\delta_{-1}^*(Z) = \exp\{E(\log(Y)\mid Z)\} \leq E(Y\mid Z)$. 
\item When $\lambda < -1$, let $a = |\lambda + 1|$. The function $g(x) = x^{-a}$ is convex; hence $E(Y^{-a}\mid Z) \geq E(Y \mid Z)^{-a}$ by Jensen's inequality. Equivalently, $E(Y^{-a}\mid Z)^{-1} \leq E(Y \mid Z)^{a}$ and, therefore, $\delta_{(\lambda < -1)}^*(Z) = E(Y^{-a}\mid Z)^{-1/a} \leq E(Y \mid Z)$. 
\end{itemize}
Altogether, these cases show that $E(\delta_\lambda^*(Z)- Y) < 0$ when $\lambda < 0$, $=0$ when $\lambda = 0$, and $> 0$ when $\lambda > 0$.     
\end{proof}

\section{Prediction intervals}\label{sec:prediction_intervals_appendix}

\subsection{The existence of the prediction interval under power-divergence loss}\label{sec:existence_appendix}

Let $L_{PDL,\lambda}(\delta(Z), Y)$ be the power-divergence loss with fixed $-\infty < \lambda < \infty$, for predictand $Y > 0$ and predictor $\delta(Z) > 0$. Non-spatial notation has been used for convenience. The following defines a $(1-\alpha)\times 100\%,~\alpha \in (0, 1)$ prediction interval for $Y$. Let $K_\lambda^{(\alpha)}$ be a cut-off that satisfies,
$$
\mathrm{Pr}\!\left(L_{PDL,\lambda}(\delta(Z), Y) \leq K_\lambda^{(\alpha)}\right) = 1 - \alpha.
$$
The solution, which depends on $\lambda$, is the set,
\begin{eqnarray}
 &\Big\{Y:Y^{\lambda + 1} - (\lambda + 1)\delta(Z)^\lambda Y - \lambda\delta(Z)^\lambda\!\left((\lambda + 1)K_\lambda^{(\alpha)} - \delta(Z)\right) \leq 0\Big\}&\lambda\neq 0,-1,\label{eqn:pred_int_appendix}\\
 &\Big\{Y:Y \cdot \log(Y/\delta(Z)) - (Y - \delta(Z)) - K_0^{(\alpha)} \leq 0\Big\} &\lambda = 0,\label{eqn:pred_int_lambda0_appendix}\\
 &\Big\{Y:Y - \delta(Z) - \delta(Z)\cdot \log(Y/\delta(Z)) - K_{-1}^{(\alpha)} \leq 0\Big\} &\lambda = -1.\label{eqn:pred_int_lambdaneg1_appendix}
\end{eqnarray}
Now, let $(l_\lambda, u_\lambda)$ be the bounds of the $(1-\alpha)\times 100\%$ prediction interval, with $0 \leq l_\lambda < \delta(Z)$ and $\delta(Z) < u_\lambda < \infty$. They are related to solutions of the equations,
\begin{align}
 p_{\lambda + 1}(y) \equiv y^{\lambda + 1} - (\lambda + 1)\delta(Z)^\lambda y - \lambda\delta(Z)^\lambda\!\left((\lambda + 1)K_\lambda^{(\alpha)} - \delta(Z)\right)  &= 0,&\lambda \neq -1,0,\label{eqn:pseudopolynomial}\\
 p_{1}(y) \equiv y \cdot \log(y/\delta(Z)) - (y - \delta(Z)) - K_{0}^{(\alpha)} &= 0,&\lambda= 0,\label{eqn:pseudo_0}\\
 p_{0}(y) \equiv y - \delta(Z) - \delta(Z)\cdot\log(y/\delta(Z)) - K_{-1}^{(\alpha)} &= 0, & \lambda =-1,\label{eqn:pseudo_neg1}
\end{align}
 where $y > 0$.

\begin{proposition}\label{prop:existence}
    The equation $p_{\lambda + 1}(y) = 0,$ for $-\infty < \lambda < \infty$, always has exactly one solution in the interval $(\delta(Z), \infty)$.
\end{proposition}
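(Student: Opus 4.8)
The plan is to reduce this question about a root of the auxiliary function $p_{\lambda+1}$ to a monotonicity statement about the power-divergence loss viewed as a univariate function of the predictand. Writing $\delta \equiv \delta(Z)$ for brevity and fixing $\lambda \neq 0,-1$, a direct rearrangement of \eqref{eqn:power_divergence_loss} shows
\begin{equation*}
p_{\lambda+1}(y) = \lambda(\lambda+1)\delta^\lambda\left(L_{PDL,\lambda}(\delta, y) - K_\lambda^{(\alpha)}\right),
\end{equation*}
and since $\lambda(\lambda+1)\delta^\lambda \neq 0$, the zeros of $p_{\lambda+1}$ in $(\delta,\infty)$ coincide exactly with the solutions of $g(y) = K_\lambda^{(\alpha)}$, where $g(y) \equiv L_{PDL,\lambda}(\delta, y)$. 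For the limiting cases $\lambda = 0$ and $\lambda = -1$, equations \eqref{eqn:pseudo_0} and \eqref{eqn:pseudo_neg1} already express the relevant function as $g(y) - K$ directly, so the same reduction applies with no change.

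Next I would establish that $g$ is a continuous, strictly increasing bijection from $(\delta,\infty)$ onto $(0,\infty)$. Continuity is immediate. Differentiating gives $g'(y) = (y^\lambda - \delta^\lambda)/(\lambda\delta^\lambda)$ and $g''(y) = y^{\lambda-1}/\delta^\lambda > 0$, so $g$ is strictly convex with its unique minimum at $y = \delta$, where $g(\delta) = 0$ and $g'(\delta) = 0$. The crucial observation is that $g'(y) > 0$ for all $y > \delta$ regardless of the sign of $\lambda$: when $\lambda > 0$ both numerator and denominator of $g'$ are positive, and when $\lambda < 0$ both are negative. Hence $g$ is strictly increasing on $(\delta,\infty)$ in every case. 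The analogous computations $g'(y) = \log(y/\delta)$ for $\lambda = 0$ and $g'(y) = 1 - \delta/y$ for $\lambda = -1$ yield the same conclusion.

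It then remains to verify $g(y) \to \infty$ as $y \to \infty$, after which the intermediate value theorem combined with strict monotonicity delivers a unique $y^* \in (\delta,\infty)$ with $g(y^*) = K_\lambda^{(\alpha)}$, since $K_\lambda^{(\alpha)} > 0$ lies in the range $(0,\infty)$ of $g$ on this interval. Writing
\begin{equation*}
g(y) = \frac{y^{\lambda+1}}{\lambda(\lambda+1)\delta^\lambda} - \frac{y}{\lambda} + \frac{\delta}{\lambda+1}
\end{equation*}
makes the limit transparent case by case: for $\lambda > 0$ the super-linear term $y^{\lambda+1}$ dominates, whereas for $\lambda < 0$ the term $-y/\lambda$ is positive and linear and therefore governs the growth. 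The only mildly delicate regime is $-1 < \lambda < 0$, where $y^{\lambda+1}$ is sublinear and must be checked to be dominated by the linear term.

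I expect the main obstacle to be organisational rather than deep. The function $p_{\lambda+1}$ itself is \emph{not} monotone in a sign-uniform way: its derivative $p'_{\lambda+1}(y) = (\lambda+1)(y^\lambda - \delta^\lambda)$ changes sign with $\lambda$, so a naive root-counting argument applied directly to $p_{\lambda+1}$ would fragment into several subcases. Factoring out the constant $\lambda(\lambda+1)\delta^\lambda$ and passing to the loss $g$, whose monotonicity on $(\delta,\infty)$ is uniform across all $\lambda$, is precisely what collapses those subcases into a single clean argument.
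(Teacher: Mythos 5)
Your proof is correct, and it is organised differently from the paper's. The paper works with $p_{\lambda+1}$ directly: it computes $p_{\lambda+1}'$, locates the stationary point at $y=\delta(Z)$, classifies it as a maximum or minimum according to the sign of $p_{\lambda+1}''(\delta(Z)) = \lambda(\lambda+1)\delta(Z)^{\lambda-1}$, and then argues strict monotonicity of $p_{\lambda+1}$ on $(\delta(Z),\infty)$ in two cases determined by whether $\lambda\in[-1,0]$. You instead factor out the constant $\lambda(\lambda+1)\delta(Z)^\lambda$ and reduce the root problem to the level-set equation $g(y)=K_\lambda^{(\alpha)}$ for the loss $g(y)\equiv L_{PDL,\lambda}(\delta(Z),y)$, whose derivative $(y^\lambda-\delta(Z)^\lambda)/(\lambda\delta(Z)^\lambda)$ is positive on $(\delta(Z),\infty)$ \emph{uniformly in} $\lambda$, so no case split on the sign of $\lambda(\lambda+1)$ is needed. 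This buys two things. First, it removes exactly the bookkeeping where the paper's write-up stumbles: by your derivative formula (or the paper's own \eqref{eqn:general_derivative}), $p_{\lambda+1}'(y)=(\lambda+1)(y^\lambda-\delta(Z)^\lambda)$ is \emph{negative} on $(\delta(Z),\infty)$ for $\lambda\in(-1,0)$ and \emph{positive} for $\lambda\notin(-1,0)$, including the endpoint cases $\lambda\in\{-1,0\}$ where the stationary point is a minimum, not a maximum; the paper's proof states these assignments the other way around (the uniqueness conclusion survives, since strict monotonicity in either direction suffices). Second, your argument is actually more complete on the existence half: monotonicity alone gives at most one root, and a sign change between $y=\delta(Z)$ and $y\to\infty$ is needed for at least one. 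The paper evaluates $p_{\lambda+1}(\delta(Z))$ but never verifies the behaviour as $y\to\infty$, whereas you check $g(\delta(Z))=0$, $g(y)\to\infty$ (correctly flagging $-1<\lambda<0$, where the sublinear term $y^{\lambda+1}$ carries a negative coefficient and is dominated by the positive linear term $-y/\lambda$), and invoke the intermediate value theorem, with $K_\lambda^{(\alpha)}>0$ lying in the range $(0,\infty)$ of $g$ on $(\delta(Z),\infty)$. The underlying calculus is the same as the paper's, but your normalisation through the loss function makes the sign analysis automatic and the existence step explicit.
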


\begin{proof}
The functions in \eqref{eqn:pseudopolynomial}-\eqref{eqn:pseudo_neg1} are continuous over $0 < y < \infty$, and their derivatives with respect to $y$ are,
\begin{align}
p_{\lambda + 1}'(y) = (\lambda + 1)y^\lambda - (\lambda + 1)\delta^\lambda, &~\lambda \neq -1, 0\label{eqn:general_derivative}\\
p_1'(y) = \log(y) - \log(\delta(Z)), &~\lambda = 0\\
p_0'(y) = 1 - \delta(Z)/y, &~\lambda = -1.\label{eqn:neg1_derivative}
\end{align}
There is clearly a stationary point at $y = \delta(Z)$, for $-\infty < \lambda < \infty$, with values $p_{\lambda + 1}(\delta(Z)) = -\lambda(\lambda + 1)\delta(Z)^\lambda K_\lambda^{(\alpha)}$ when $\lambda \neq -1, 0$, $p_1(\delta(Z)) = -K_0^{(\alpha)}$ when $\lambda = 0$, and $p_0(\delta(Z)) = -K_{-1}^{(\alpha)}$ when $\lambda = -1$. For $-1 \leq \lambda \leq 0$, the second derivative at the stationary point indicates that the stationary point is a maximum; and for $\lambda \not\in [-1, 0]$, the stationary point is a minimum. Now, for $\delta(Z) < y < \infty$, note that the derivatives in \eqref{eqn:general_derivative}--\eqref{eqn:neg1_derivative} are always positive when $\lambda \in [-1, 0]$ (therefore $p_{\lambda + 1}(y)$ is increasing away from its minimum over the interval $(\delta(Z), \infty)$) and always negative when $\lambda \not\in [-1, 0]$ (therefore decreasing away from its maximum over the interval $(\delta(Z), \infty)$). This monotonicity for $y > \delta(Z)$, guarantees that there exists a unique $y$ in the interval $(\delta(Z), \infty)$ that satisfies $p_{\lambda + 1}(y) = 0,~-\infty < \lambda < \infty$.
\end{proof}

\subsection{Quadratic case ($\lambda = 1$ in \eqref{eqn:pseudopolynomial})}\label{sec:appendix_lambda1}

We retain the notation from Section \ref{sec:existence_appendix}. However, instead of treating \eqref{eqn:pseudopolynomial} as a function with domain $(0, \infty)$ and range $(-\infty, \infty)$, we briefly consider \eqref{eqn:pseudopolynomial} to be a function with domain $\mathbb{C}$ (the complex plane) and range $\mathbb{C}$. 

When $\lambda = 1$ in \eqref{eqn:pseudopolynomial}, $(l_1, u_1)$ are related to the roots of the quadratic,
\begin{equation}
 p_2(y) = y^{2} - 2\delta(Z) y - \delta(Z)\!\left(2K_1^{(\alpha)} - \delta(Z)\right).\label{eqn:quadratic_polynomial}   
\end{equation}
The quadratic in \eqref{eqn:quadratic_polynomial} has two real roots because its discriminant is $8\delta(Z) K_1^{(\alpha)} > 0.$ Therefore, the two real-valued roots of \eqref{eqn:quadratic_polynomial} can be seen to be, 
$$
\delta(Z) \pm \sqrt{2\delta(Z) K_1^{(\alpha)}}.
$$
One of these roots will always fall in the interval $(\delta(Z), \infty)$ and is $u_1$. However, the smaller root may be negative, in which case $l_1 = 0$. Therefore, the $(1-\alpha)\times 100\%$ prediction interval is given by,
$$
\left(\max\!\left\{0, \delta(Z) - \sqrt{2\delta(Z) K_1^{(\alpha)}}\right\}, \delta(Z) + \sqrt{2\delta(Z) K_1^{(\alpha)}}\right).
$$

\subsection{Cubic case ($\lambda = 2$ in \eqref{eqn:pseudopolynomial})}\label{sec:appendix_lambda2}

When $\lambda = 2$ in \eqref{eqn:pseudopolynomial}, the bounds $(l_2, u_2)$ are related to the real-valued roots of the depressed cubic polynomial, 
\begin{equation}
p_3(y) = y^{3} - 3\delta(Z)^2 y - 2\delta(Z)^2\!\left(3K_2^{(\alpha)} - \delta(Z)\right).\label{eqn:cubic_polynomial}
\end{equation}
All the coefficients in \eqref{eqn:cubic_polynomial} are real-valued. The discriminant for \eqref{eqn:cubic_polynomial} is $324\delta(Z)^4K_2^{(\alpha)}(2\delta(Z) - 3K_2^{(\alpha)}).$ When $\delta(Z) > 1.5K_2^{(\alpha)}$, the discriminant is positive and there are three distinct real-valued roots. When $\delta(Z) < 1.5K_2^{(\alpha)}$, the discriminant is negative and there is only one real-valued root. When $\delta(Z) = 1.5K_2^{(\alpha)}$, the discriminant is zero; together with the nature of the coefficients (i.e., they are real-valued and non-zero when $\delta(Z) = 1.5K_2^{(\alpha)}$), this implies there are two distinct real-valued roots, with one of these being a double root. As established through the analysis of \eqref{eqn:pseudopolynomial}, there is always one real-valued root in the interval $(\delta(Z), \infty)$, for all $\lambda$. Therefore, a valid one-sided prediction interval can always be defined, regardless of the nature of the roots implied by the discriminant. Our analysis concentrates on finding $l_2$ using the general formula for the roots of a depressed cubic equation in \eqref{eqn:cubic_polynomial}, from which we obtain, for $k = 0, 1, 2$,
\begin{align}
   r_k &= \psi^k\sqrt[3]{\delta(Z)^2\left(3K_2^{(\alpha)} - \delta(Z) + \sqrt{9\left(K^{(\alpha)}_2\right)^2 - 6\delta(Z) K_{2}^{(\alpha)}}\right)}\nonumber\\ 
   &~~~~~~~~~~~~+ \delta(Z)^2\Bigg/\psi^k\sqrt[3]{\delta(Z)^2\left(3K_2^{(\alpha)} - \delta(Z) + \sqrt{9\left(K^{(\alpha)}_2\right)^2 - 6\delta(Z) K_{2}^{(\alpha)}}\right)},\label{eqn:cubic_solution}
\end{align}
where $\psi = (-1 + \sqrt{-3})/2$ is a primitive cube root of unity. If the discriminant is negative and \eqref{eqn:cubic_polynomial} has two complex-valued roots, the prediction interval is $(0, u_2)$, where $u_2$ is the only real-valued root of \eqref{eqn:cubic_polynomial}. When the discriminant is zero (i.e., $\delta(Z) = 1.5K_2^{(\alpha)}$), \eqref{eqn:cubic_solution} yields a double root at $-1.5K_2^{(\alpha)}$ and another root at $3K_2^{(\alpha)}$; therefore, the prediction interval is $(0, 3K_2^{(\alpha)})$. Otherwise, if the discriminant is positive, suppose the roots are ordered such that $r_{(0)} < r_{(1)} < r_{(2)}$ Then, the prediction interval is given by $(l_2, u_2) = (\max\{0, r_{(1)}\}, r_{(2)})$.  

\subsection{Quartic case ($\lambda = 3$) and higher degrees}\label{sec:appendix_lambda3}

When $\lambda = 3$, \eqref{eqn:pseudopolynomial} is a quartic polynomial. With some effort, the explicit formulae for the solutions of \eqref{eqn:pseudopolynomial} with $\lambda = 3$ can be derived, but they are are not provided here. Cases involving $\lambda = 4, 5, ...$ are not pursued due to the following proposition. 

\begin{proposition}
    Explicit formulae for the solutions of \eqref{eqn:pseudopolynomial} do not exist for integer values of $\lambda > 3$.
\end{proposition}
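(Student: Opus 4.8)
The plan is to read ``explicit formula'' as ``solution by radicals'' and to establish non-existence via Galois theory, treating the coefficients as generic. First I would reduce the trinomial \eqref{eqn:pseudopolynomial} to a one-parameter normal form. Writing $n \equiv \lambda + 1$, so that integer $\lambda > 3$ corresponds to $n \geq 5$, and substituting $y = \delta(Z)\,t$, equation \eqref{eqn:pseudopolynomial} becomes, after dividing by $\delta(Z)^n$,
\[
t^n - n t - \mu = 0, \qquad \mu \equiv (n-1)\!\left(\frac{n K_\lambda^{(\alpha)}}{\delta(Z)} - 1\right).
\]
Because the scaling $y = \delta(Z)\,t$ is invertible over the field generated by the parameters, a radical expression for $y$ exists if and only if one exists for $t$; and since $\delta(Z)$ and $K_\lambda^{(\alpha)}$ are free, $\mu$ behaves as an independent transcendental. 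Hence it suffices to show that $f(t) \equiv t^n - nt - \mu$ is not solvable by radicals over $\mathbb{Q}(\mu)$ when $n \geq 5$.

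Next I would compute the Galois group of $f$ over $\mathbb{Q}(\mu)$ by viewing $\mu = t^n - nt$ as a degree-$n$ branched cover $t \mapsto \mu$ of the projective line and reading off its monodromy. The finite critical points solve $n t^{n-1} - n = 0$, i.e.\ the $n-1$ distinct $(n-1)$-th roots of unity $\zeta$; each is nondegenerate, as the second derivative $n(n-1)t^{n-2}$ is nonzero there, and the critical values $-(n-1)\zeta$ are pairwise distinct, so the local monodromy around each of these $n-1$ branch points is a transposition. Over $\mu = \infty$ the cover is totally ramified and contributes an $n$-cycle, but the product relation among the local monodromies expresses that $n$-cycle as a product of the transpositions, so the group is generated by the $n-1$ transpositions alone. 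Since $f$ is irreducible over $\mathbb{Q}(\mu)$ (indeed $\mathbb{Q}(t)/\mathbb{Q}(\mu)$ has degree $n$), the group is transitive, whence the graph on $n$ vertices whose edges are these transpositions is connected; being connected on $n$ vertices with at most $n-1$ edges, it is a spanning tree, and transpositions indexed by the edges of a tree generate the full symmetric group. Thus the geometric monodromy group is $S_n$, which sits inside $S_n \le \mathrm{Gal}(f/\mathbb{Q}(\mu)) \le S_n$ and forces the arithmetic Galois group to equal $S_n$; the same conclusion holds over $\mathbb{Q}(\delta(Z), K_\lambda^{(\alpha)})$, since adjoining the extra transcendental $\delta(Z)$, which is independent of the splitting field, leaves the group unchanged by linear disjointness.

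Finally, I would invoke Galois's solvability criterion: in characteristic zero a polynomial is solvable by radicals if and only if its Galois group is solvable. As $S_n$ is non-solvable for $n \geq 5$ (its composition factor $A_n$ is simple and nonabelian), no radical expression for the roots $t$, and hence none for $y$, can exist, which is exactly the claim for integer $\lambda > 3$. As a consistency check, the identical computation returns the solvable groups $S_2, S_3, S_4$ when $n = 2, 3, 4$ (that is, $\lambda = 1, 2, 3$), in agreement with the explicit formulae obtained in Sections \ref{sec:appendix_lambda1}--\ref{sec:appendix_lambda3}.

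The step I expect to be the main obstacle is the Galois-group computation: justifying that each finite branch point contributes a genuine transposition (simple ramification with distinct critical values) and assembling these into a spanning tree to force the group to be all of $S_n$. The subsequent passage from the geometric monodromy group to the arithmetic Galois group over $\mathbb{Q}(\mu)$, and then to the full parameter field, is a secondary point that must also be handled with care.
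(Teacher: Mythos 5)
Your proof is correct, but it takes a genuinely different---and in fact more rigorous---route than the paper's. The paper disposes of the proposition in two sentences by citing the Abel--Ruffini theorem: for integer $\lambda>3$ the polynomial in \eqref{eqn:pseudopolynomial} has degree $\lambda+1\geq 5$, and Abel--Ruffini is invoked to conclude that no radical formula exists. Strictly speaking, Abel--Ruffini only rules out a universal radical formula for the \emph{general} polynomial of degree $n\geq 5$; it does not by itself exclude the possibility that a special parametric family such as the trinomial $y^{n} - n\,\delta(Z)^{n-1}y - (n-1)\delta(Z)^{n-1}\bigl(nK_\lambda^{(\alpha)}-\delta(Z)\bigr)$ is solvable in radicals (compare $t^{n}-a$, or the de Moivre quintics, which are degree-$\geq 5$ families solvable in radicals for all parameter values). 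Your argument closes exactly this gap: the normalisation $y=\delta(Z)t$, $\mu=(n-1)\bigl(nK_\lambda^{(\alpha)}/\delta(Z)-1\bigr)$ is computed correctly, the monodromy analysis of $\mu=t^{n}-nt$ is sound (the $n-1$ simple critical points $\zeta^{\,n-1}=1$ have pairwise distinct critical values $-(n-1)\zeta$, each contributing a transposition; total ramification at infinity makes the $n$-cycle redundant; transitivity plus $n-1$ transpositions forces a spanning tree and hence the full group $S_n$), the passage from geometric to arithmetic Galois group over $\mathbb{Q}(\mu)$ and then to $\mathbb{Q}(\delta(Z),K_\lambda^{(\alpha)})$ by linear disjointness is handled properly, and Galois's solvability criterion with the non-solvability of $S_n$ for $n\geq 5$ finishes the claim. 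What each approach buys: the paper's citation is a one-line justification adequate for its audience and for the practical point being made (that numerical root-finding is needed for $\lambda>3$), while your computation is self-contained and establishes the stronger, precise statement---that the Galois group of this specific family over the parameter field is $S_n$---which is what the proposition actually requires; your consistency check against the solvable cases $n=2,3,4$ in Sections \ref{sec:appendix_lambda1}--\ref{sec:appendix_lambda3} is a nice confirmation.
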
 

\begin{proof}
    This immediately follows from the Abel-Ruffini theorem \cite[]{Ruffini1813, Abel1826}, which states that there are no solutions in terms of radicals (i.e., $n$-th roots for $n > 1$) for polynomials of degree $5$ or higher. When $\lambda$ is a positive integer, the degree of the polynomial in \eqref{eqn:pseudopolynomial} is $\lambda + 1$, so this limitation holds for any integer value of $\lambda > 3$. 
\end{proof}

\subsection{A simple algorithm to find the prediction interval for any $\lambda$}\label{sec:prediction_interval_algorithm_appendix}

Let $\delta_\lambda^*(Z)$, with fixed $\lambda \in (-\infty, \infty)$, be the optimal power-divergence predictor of random variable $Y > 0$ based on data $Z > 0$. The following is a simple algorithm to solve for $(l_\lambda, u_\lambda)$ in \eqref{eqn:pseudopolynomial}, \eqref{eqn:pseudo_0}, or $\eqref{eqn:pseudo_neg1}$: 
\begin{enumerate}
    \item As determined by $\lambda$, search the interval $(0, \delta_\lambda^*(Z))$ for a solution to \eqref{eqn:pseudopolynomial}, \eqref{eqn:pseudo_0}, or \eqref{eqn:pseudo_neg1} using (for example) a bisection algorithm. There may not be a solution in the interval $(0, \delta_\lambda^*(Z))$, in which case $l_\lambda = 0$, and the interval becomes one-sided. 
    \item As determined by $\lambda$, search the interval $(\delta_\lambda^*(Z), \infty)$ for a solution to \eqref{eqn:pseudopolynomial}, \eqref{eqn:pseudo_0}, or \eqref{eqn:pseudo_neg1}. If the bisection algorithm is used, initialise the search on the two endpoints $\delta_\lambda^*(Z)$ and a large multiple of $\delta_\lambda^*(Z)$; if the function on the left-hand side of \eqref{eqn:pseudopolynomial}, \eqref{eqn:pseudo_0}, or \eqref{eqn:pseudo_neg1} does not have opposite signs at the proposed endpoints, increase the multiple. Repeat until a solution is found. Recall that Proposition \ref{prop:existence} guarantees the existence of a solution.
\end{enumerate}

\end{document}